\documentclass[11pt,letterpaper]{article}

\pdfoutput=1

\usepackage[letterpaper,margin=1in]{geometry}
\usepackage[T1]{fontenc}
\usepackage[utf8]{inputenc}
\usepackage{lmodern}
\usepackage{microtype}

\usepackage{amsmath,amssymb,amsthm,mathtools}
\usepackage{enumitem}
\usepackage{xcolor}
\usepackage{comment}
\usepackage{todonotes}
\usepackage[hidelinks]{hyperref}
\usepackage[capitalize,nameinlink,noabbrev]{cleveref}
\usepackage[mathlines]{lineno}
\usepackage{booktabs}
\usepackage{longtable}
\usepackage{array}
\usepackage{pdflscape}
\usepackage{caption}

\newcolumntype{L}[1]{>{\raggedright\arraybackslash}p{#1}}

\usepackage{tikz}
\usetikzlibrary{arrows.meta,positioning,calc,backgrounds}

\allowdisplaybreaks

\newtheorem{theorem}{Theorem}[section]
\newtheorem{lemma}[theorem]{Lemma}
\newtheorem{proposition}[theorem]{Proposition}

\theoremstyle{definition}
\newtheorem{definition}[theorem]{Definition}
\theoremstyle{remark}
\newtheorem{remark}[theorem]{Remark}

\newcommand{\HLLDEC}{\mathrm{HLLDEC}}
\newcommand{\HLLSearch}{\mathrm{HLLSearch}}
\newcommand{\HDDLDEC}{\mathrm{HDDLDEC}}
\newcommand{\HDDLSearch}{\mathrm{HDDLSearch}}
\newcommand{\ASEGDEC}{\mathrm{ASEGDEC}}
\newcommand{\UOR}{\mathrm{UOR}}
\newcommand{\Dom}{\operatorname{Dom}}
\newcommand{\bits}{\{0,1\}}

\newcommand{\pp}{\perp}
\newcommand{\Advpm}{\mathrm{Adv}^{\pm}}
\newcommand{\ket}[1]{\lvert #1 \rangle}
\newcommand{\Good}{\mathsf{Good}}

\newcommand{\Live}{\mathsf{Live}}

\providecommand{\HDDLDEC}{\mathrm{HDDLDEC}}
\providecommand{\HDDLSearch}{\mathrm{HDDLSearch}}

\providecommand{\UOR}{\mathrm{UOR}}

\definecolor{Ocean}{HTML}{2A6F97}
\definecolor{SoftBlue}{HTML}{EAF4FB}
\definecolor{NodeBlue}{HTML}{D6EAF8}
\definecolor{DeadEdge}{HTML}{6E8FA8}
\definecolor{MarkRed}{HTML}{C93C4B}
\definecolor{SoftRed}{HTML}{FDEBED}

\title{\textbf{Quantum Query Complexity for List Search}}
\author{
  Niranka Banerjee\thanks{Department of Information Engineering, Mie University, Japan. Email: \texttt{banerjee@eng.mie-u.ac.jp}.}
  \and
  Akinori Kawachi\thanks{Department of Information Engineering, Mie University, Japan. Email: \texttt{kawachi@info.mie-u.ac.jp}.}
}
\date{}

\setlist[itemize]{leftmargin=1.5em,itemsep=0.25em,topsep=0.25em}
\setlist[enumerate]{leftmargin=1.8em,itemsep=0.25em,topsep=0.25em}

\begin{document}

\maketitle
\thispagestyle{empty}
\begin{abstract}
Searching in a linked list is one of the most basic problems in classical algorithms. Although the
nodes of the list come with memory addresses, classically those addresses play no role in the cost of
search: one simply starts at the head and follows successor pointers to search for an element. In this paper, we show that
the quantum setting is different. Here, the ambient address space from which the list vertices are
drawn can itself affect the query complexity.

We study the following problem analogous to search in a linked list in the query complexity model: the input consists of an address universe $[N]$, a public start
symbol $s$, a successor oracle $f$ whose non-$\perp$ values trace a hidden simple path
$s \to a_1 \to a_2 \to \cdots \to a_\ell \to \perp$,
and a marking oracle $g$ that marks at most one list vertex. The task is to decide whether the
list contains a marked vertex.

We prove that both the decision and search versions of this problem for all $N \ge \ell \ge 1$ have quantum query complexity
\[
\Theta\!\bigl(\min\{\ell,(N\ell)^{1/4}\}\bigr).
\]
Thus, quite surprisingly, when $N < \ell^3$  the optimal quantum complexity is $(N\ell)^{1/4}$, which is
strictly smaller than the $\Theta(\ell)$ cost of ordinary linked-list traversal.  This gives a precise characterization of when the ambient
address space yields a genuine quantum advantage for linked-list search.
  
 We extend our results and give the same tight asymptotic bounds for the natural double linked-list version as well. 
\end{abstract}




\section{Introduction}
  Linked lists are among the most basic pointer-based data structures in classical computing.    If one wants to search a linked list, depending on whether we have a single or double linked list the natural strategy is simply to start at the head and
follow the next pointers or next and previous pointers until the desired item is found or the list ends. In a pointer-machine view, a linked list is not given as an explicit ordered sequence of vertices. It is given by memory addresses and pointer fields. Thus the list vertices are hidden inside an ambient address space.

From this classical
point of view though, the actual numerical addresses of the list nodes carry no asymptotic information:
whether the list occupies the first $\ell$ memory locations or is hidden inside a much larger
ambient universe $[N]=\{1,\dots,N\}  $, one still has to reveal the list sequentially, one pointer at a time. Even when the list
vertices are hidden inside a larger address space, the ambient universe parameter $N$ is classically
useless, and the complexity remains $\Theta(\ell)$; see
Proposition~\ref{prop:classical-baseline}.

A quantum algorithm can potentially exploit the
ambient universe itself, not just the local successor and predecessor structure of the hidden list. This leads
to the basic question addressed in this paper:

\begin{quote}
Can an $\ell$-vertex linked list inside a large ambient universe $[N]$ use that ambient space to make search
genuinely easier quantumly, even though the same ambient space is irrelevant classically?
\end{quote}

Our main result characterizes exactly how the two parameters $N$ and $\ell$ interact in the quantum oracle query model, which is the standard formalization for search problems in quantum computing. We now define the corresponding list-search problem on the query model.




\begin{definition}[Hidden linked-list decision and search] \label{def:def1}
Fix integers $N \ge \ell \ge 1$. An input to $\HLLDEC(N,\ell)$ consists of oracle access to a
successor function
\[
f : \{s\}\cup[N] \to [N]\cup\{\pp\}
\]
and a marking function
\[
g : [N]\to\bits,
\]
with the promise that there exist pairwise distinct vertices $a_1,\dots,a_\ell\in[N]$, which we call \emph{list vertices}, such that
\[
f(s)=a_1,\qquad
f(a_i)=a_{i+1}\ \ (1\le i<\ell),\qquad
f(a_\ell)=\pp,
\]
and \(f(x)=\bot\) for every address
\(x\in [N]\setminus\{a_1,\ldots,a_\ell\}\).

The marking promise is
\[
\sum_{i=1}^{\ell} g(a_i)\le 1,
\]
and $g(x)=0$ off the list. The output of $\HLLDEC(N,\ell)$ is $1$ if and only if some $a_i$ is marked.

 The same model can also be realized when addresses outside the list have arbitrary successors and not $\bot$. Suppose that each address $x\in[N]$ contains, in addition to its ordinary data and pointer fields, a tag $\tau(x)$, initially equal to $0$. When a list is created, it is assigned a nonzero identifier $L$, known to the algorithm, and every vertex $a_i$ belonging to the list is assigned the tag $\tau(a_i)=L$. The contents of every address with $\tau(x)\neq L$, including its pointer field, may then be arbitrary.

Throughout the paper, oracle access to $(f,g)$ is quantum oracle access in the
standard reversible sense. Formally, we may view $(f,g)$ as a single
finite-alphabet input whose query coordinates are the disjoint union of the
successor coordinates $(F,x)$, where $x\in\{s\}\cup[N]$, and the marking
coordinates $(G,x)$, where $x\in[N]$. The corresponding oracle acts by
$O_{f,g}|F,x,z\rangle = |F,x,z\oplus f(x)\rangle$ and 
 $O_{f,g}|G,x,b\rangle = |G,x,b\oplus g(x)\rangle$.
Here $f(x)\in [N]\cup\{\bot\}$ is encoded using a fixed binary encoding, and
$g(x)\in\{0,1\}$. Thus one query means one application of this combined oracle;
in particular, the query register may be in superposition over successor and
marking coordinates. Asking for the successor of a single address costs one
query, and asking whether a single address is marked costs one query.



In particular, asking for the successor of one address costs one query, and asking
whether one address is marked costs one query.  All algorithms may use arbitrary
input-independent unitaries between such oracle calls.

The search version $\HLLSearch(N,\ell)$ asks for the unique marked vertex if one exists, and for a failure symbol otherwise. Note that the public start symbol $s$ is outside $[N]$; this is mainly for ease of exposition.
\end{definition}

\begin{definition}[Hidden double linked-list decision and search]
\label{def:dbl-hddl}

Fix integers $N\geq \ell\geq 1$. An input to
$\mathrm{HDDLDEC}(N,\ell)$ consists of oracle access to
\[
\operatorname{next}:\{s\}\cup[N]\to [N]\cup\{\bot\},
\qquad
\operatorname{prev}:\{s\}\cup[N]\to \{s\}\cup[N]\cup\{\bot\},
\qquad
g:[N]\to\{0,1\},
\]
with the promise that there are pairwise distinct vertices
$a_1,\ldots,a_\ell\in[N]$ such that
\[
\operatorname{next}(s)=a_1,\qquad
\operatorname{next}(a_i)=a_{i+1}\quad (1\leq i<\ell),
\qquad
\operatorname{next}(a_\ell)=\bot,
\]
and
\[
\operatorname{prev}(s)=\bot,\qquad
\operatorname{prev}(a_1)=s,\qquad
\operatorname{prev}(a_i)=a_{i-1}\quad (2\leq i\leq \ell).
\]
Every off-list address is dead in both directions:
\[
\operatorname{next}(x)=\operatorname{prev}(x)=\bot
\qquad
\text{for every }x\in[N]\setminus\{a_1,\ldots,a_\ell\}.
\]
The marking promise is
\[
\sum_{i=1}^{\ell} g(a_i)\leq 1,
\qquad
g(x)=0 \text{ off the list}.
\]
The output of $\mathrm{HDDLDEC}(N,\ell)$ is $1$ if and only if some
list vertex is marked.

Again, oracle access to $(\operatorname{next},\operatorname{prev},g)$ is quantum oracle access in the standard reversible sense. Formally, we may view $(\operatorname{next},\operatorname{prev},g)$ as a single finite-alphabet input whose query coordinates are the disjoint union of the successor coordinates $(\mathrm{Next},x)$, where $x\in \{s\}\cup[N]$, the predecessor coordinates $(\mathrm{Prev},x)$, where $x\in \{s\}\cup[N]$, and the marking coordinates $(G,x)$, where $x\in[N]$. The corresponding oracle acts by
$O_{\operatorname{next},\operatorname{prev},g}
\ket{\mathrm{Next},x,z}
=
\ket{\mathrm{Next},x,z\oplus \operatorname{next}(x)}$,
$O_{\operatorname{next},\operatorname{prev},g}
\ket{\mathrm{Prev},x,z}
=
\ket{\mathrm{Prev},x,z\oplus \operatorname{prev}(x)}$,
and
$O_{\operatorname{next},\operatorname{prev},g}
\ket{G,x,b}
=
\ket{G,x,b\oplus g(x)}$.
Here $\operatorname{next}(x)\in [N]\cup\{\bot\}$, $\operatorname{prev}(x)\in \{s\}\cup[N]\cup\{\bot\}$, and $g(x)\in\{0,1\}$ are encoded using fixed binary encodings. Thus one query means one application of this combined oracle; in particular, the query register may be in superposition over successor, predecessor, and marking coordinates. Asking for the successor of a single address costs one query, the predecessor of a single address costs one query and whether a single address is marked costs one query. All algorithms may use arbitrary input-independent unitaries between such oracle calls.


The search problem $\mathrm{HDDLSearch}(N,\ell)$
asks to output the unique marked list vertex if it exists, and a failure
symbol otherwise.

\end{definition}

Let $D(F)$ denote the deterministic query complexity of a decision or search problem $F$.  Let $R(F)$ and $Q(F)$ denote its bounded-error randomized and quantum query complexities, respectively; that is, the minimum number of queries required by a randomized or quantum algorithm that succeeds with probability at least $2/3$ on every valid input.  We first show that, classically, the ambient space is indeed inert.

\begin{proposition}\leavevmode\footnote{The proof is deferred to Appendix~\ref{app:propclass}.}
\label{prop:classical-baseline}
For all integers $N\ge \ell\ge 1$,
\[
D(\HLLDEC(N,\ell)), 
R(\HLLDEC(N,\ell)),
D(\HLLSearch(N,\ell)),
R(\HLLSearch(N,\ell))
\in
\Theta(\ell)\]
\[D(\HDDLDEC(N,\ell)), 
R(\HDDLDEC(N,\ell)),
D(\HDDLSearch(N,\ell)),
R(\HDDLSearch(N,\ell))
\in
\Theta(\ell)\]
\end{proposition}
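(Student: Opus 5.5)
Since $D\ge R$ and a search algorithm yields a decision algorithm at no extra cost, the upper bound only needs a deterministic $O(\ell)$ algorithm for the search versions, and the lower bound only needs $R(\HLLDEC(N,\ell))=\Omega(\ell)$ and $R(\HDDLDEC(N,\ell))=\Omega(\ell)$.

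\textbf{Upper bound.} Start at $x\leftarrow f(s)$ (one query). While $x\neq\pp$, query $g(x)$; if it is $1$, output $x$; otherwise set $x\leftarrow f(x)$. By the promise, this visits exactly $a_1,\dots,a_\ell$ and uses $2\ell+1$ queries. For the double linked list the same procedure works using only $\operatorname{next}$ and $g$. So every quantity in the statement is $O(\ell)$.

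\textbf{Lower bound.} Fix the hard distribution by Yao's principle. Choose the list vertices as a uniformly random ordered $\ell$-tuple of distinct addresses in $[N]$. With probability $1/2$ there is no mark; otherwise exactly one uniformly random $a_i$ is marked. It suffices to show that a deterministic algorithm making $q\le c\ell$ queries has advantage $o(1)$, or at most a small constant, in distinguishing the two cases.

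The key step is a \emph{discovery} argument. At any point, call $a_j$ \emph{discovered} if its address was returned by an earlier query. We want to show the following.
\begin{enumerate}
\item Each query reveals at most one new list vertex through pointer following. A $\operatorname{next}$ query on $a_i$ reveals $a_{i+1}$. A $\operatorname{prev}$ query on $a_i$ reveals $a_{i-1}$. In the double linked case, the starting points are only $s$ (giving $a_1$) and vertices already found.
\item Any query to an address not yet returned hits an undiscovered list vertex with probability at most $\ell/(N-q)$. Conditioned on the transcript, the undiscovered list vertices are uniformly distributed among the addresses not yet seen. This chance "guess" hit reveals one vertex and its neighbours, but at most $O(1)$ new vertices per query.
\end{enumerate}
Hence after $q$ queries at most $O(q)$ list vertices are discovered, and at most $q$ of them have had their $g$ value read. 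The mark position is uniform and independent of the pointer structure. Therefore, in the marked case, the probability that the algorithm ever queries $g$ at the marked vertex is at most $q/\ell$. If it never does, the transcript is distributed identically in the two cases. So the success probability is at most $1/2+q/(2\ell)$, which forces $q\ge \ell/3$ for success probability $2/3$.

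\textbf{Main obstacle.} The delicate point is making the coupling between the two cases rigorous when $N$ is close to $\ell$. There, random probing of addresses finds list vertices with constant probability per query, so the "one vertex per query" count no longer holds.

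I would avoid relying on that count altogether. The algorithm learns about the mark only through $g$ queries. Each $g$ query returns $1$ only at the marked vertex. Conditioned on any $f$-information, $\operatorname{prev}$-information, and previous zero answers, the marked vertex is uniform over the list vertices whose $g$ value has not yet been read. So each $g$ query succeeds with probability at most $1/(\ell-t)$, where $t$ is the number of $g$ queries made so far. Summing over the first $\ell/3$ queries bounds the detection probability by roughly $1/2$ of the marked mass. This argument is insensitive to $N$ and handles both list models uniformly.
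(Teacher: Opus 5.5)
Your final argument is essentially the paper's: Yao's principle with the distribution that is unmarked with probability $1/2$ and otherwise marks a uniformly random list vertex, plus the observation that the run is unchanged unless the marked vertex is $g$-queried, giving success at most $\frac12+\frac{q}{2\ell}$. The paper simply fixes the path to the public chain $s\to 1\to 2\to\cdots\to\ell$, a restriction that reduces to unique-OR on $\ell$ bits, so your random embedding, the discovery argument and the ``small $N$'' obstacle are all unnecessary.

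One small quantitative fix. Summing $1/(\ell-t)$ over $t<\ell/3$ gives about $\ln(3/2)\approx 0.41$, which only bounds success by roughly $0.70>2/3$. Instead, use the telescoping product $\prod_{t<q}\bigl(1-\frac{1}{\ell-t}\bigr)=\frac{\ell-q}{\ell}$, which is your earlier $q/\ell$ bound, or take a constant smaller than $1/3$.
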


Proposition~\ref{prop:classical-baseline} formalizes the classical intuition: even when the
list vertices are drawn from an ambient universe of size $N$, the parameter $N$ plays no
asymptotic role. Quantumly, by contrast, there are already two immediate upper bounds.

First, one can ignore the ambient space and simply traverse the hidden list from the public
start symbol $s$, which costs $O(\ell)$ queries. Second, one can ignore the pointer structure
and search the ambient universe $[N]$ directly with Grover's algorithm~\cite{Grover97}:
because $g(x)=0$ off the list and at most one list vertex is marked, Grover search over
$[N]$ finds the marked vertex, if it exists, in $O(\sqrt N)$ queries. Thus
\[
Q(\HLLDEC(N,\ell)),\; Q(\HLLSearch(N,\ell))
  \in O(\min\{\ell,\sqrt N\}).
\]

Similar bounds also follow naively for double linked lists. Already this naive bound shows a genuinely quantum role for the ambient space. When
$N<\ell^2$, it gives $o(\ell)$, whereas Proposition~\ref{prop:classical-baseline} shows that the
classical complexity is tightly $\Theta(\ell)$. But can we do better than this bound?


Our theorems give a positive answer, and it is tight.

\begin{theorem}\label{thm:main}
For every $N \ge \ell \ge 1$,
\[
Q(\HLLDEC(N,\ell)),\; Q(\HLLSearch(N,\ell))
  \in \Theta\!\left(\min\{\ell,(N\ell)^{1/4}\}\right).
\]
\end{theorem}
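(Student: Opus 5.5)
The plan has two parts: an upper bound proved by a nested amplitude-amplification algorithm, and a matching lower bound via the negative-weight adversary. When $N\ge \ell^3$ we have $\min\{\ell,(N\ell)^{1/4}\}=\ell$. The upper bound is then plain traversal from $s$, and the lower bound will follow from the base case below by padding. So the interesting regime is $\ell\le N\le \ell^3$. There I would fix the step length $k=\lceil\sqrt{N/\ell}\,\rceil\le \ell$.

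\textbf{Upper bound.} The algorithm has two phases.
\begin{enumerate}
\item It walks $k$ steps from $s$ and checks the marks on $a_1,\dots,a_k$. This costs $O(k)$ queries.
\item It runs amplitude amplification over a uniform superposition of the live addresses, i.e.\ those $x\in[N]$ with $f(x)\neq\pp$. There are exactly $\ell$ of them and $\ell$ is known, so exact amplitude amplification on the predicate ``$f(x)\ne\pp$'' prepares this superposition with $O(\sqrt{N/\ell})$ queries. It also unprepares it with the same cost.
\end{enumerate}
The checking subroutine on a live $x$ walks $x,f(x),\dots,f^{k-1}(x)$ and tests each vertex with $g$, stopping at $\pp$. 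It costs $O(k)$ queries and is uncomputed afterwards. If the marked vertex is $a_j$ with $j>k$, then at least $k$ live starting points $x=a_{j-k+1},\dots,a_j$ succeed, a fraction $\ge k/\ell$. Amplitude amplification therefore uses $O(\sqrt{\ell/k})$ rounds of cost $O(\sqrt{N/\ell}+k)$ each. With $k=\sqrt{N/\ell}$ this totals $O(\sqrt{\ell}\,(N/\ell)^{1/4})=O((N\ell)^{1/4})$. For the search version we measure $x$, rewalk at most $k$ steps to find the marked vertex, and verify it classically. Since the number of good $x$ is unknown between $0$ and $k$, I would use the standard exponential-guessing version of amplitude amplification to keep bounded error.

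\textbf{Lower bound.} I would use the negative-weight adversary bound $\Advpm$ together with its composition theorem. It suffices to treat $\HLLDEC$, since search reduces to decision.

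\emph{Base case.} The key step is to show $Q(\HLLDEC(M,k))=\Omega(k)$ whenever $M\ge k^3$. I would build an adversary matrix between instances whose path $a_1,\dots,a_k$ is a uniformly random injective sequence in $[M]$. The $0$-instances have no mark. The $1$-instances mark one vertex near the end of the path, say among the last $k/2$. The weights should capture the intuition that locating any specific deep vertex without walking requires guessing its address among $M$ possibilities, with advantage $O(\sqrt{k/M})$ per query. Each query can thus advance knowledge of the path by only $O(1)$ steps plus this guessing term. I would organise this either as a hybrid argument over ``how far along the path the algorithm has progressed'' or as a weighted adversary on the progress measure. The guessing term totals $O(T\sqrt{k/M})\le O(T/k)$ over $T$ queries, which is negligible for $T=o(k)$.

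\emph{Composition.} For general $\ell\le N\le\ell^3$, I would split the list into $t=\ell/k$ consecutive blocks of length $k$. Block $i$ lives in its own disjoint subuniverse of size $M=N/t=k^3$, using $k^3=N k/\ell$. The decision value is $\mathrm{OR}_t$ of the per-block inner decisions, each under the promise that at most one block is marked. Composition then gives $\Advpm\ge\sqrt t\cdot\Omega(k)=\Omega(\sqrt{\ell k})=\Omega((N\ell)^{1/4})$.

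\textbf{Main obstacle.} I expect the main difficulty to be twofold. The first is the base-case adversary bound. The second is justifying composition when the blocks are chained: the end of block $i$ points to the start of block $i+1$, so the inner instances are not literally independent oracle inputs. I would handle this by making each block start at a vertex the reduction can compute from the previous block's local input. Concretely, the inner instance gets its own public start symbol, and the gluing pointer is simulated at $O(1)$ query overhead. The mark positions and the inner path layouts then remain independent coordinates, and the OR-composition theorem for $\Advpm$ applies to the resulting composed function.
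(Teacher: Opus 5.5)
Your upper bound and the outer structure of your lower bound match the paper: blocks of length $k$, each at its own cubic scale, combined via $\Advpm(\UOR_t)=\Theta(\sqrt t)$ and the composition theorem to get $\Omega(\sqrt{\ell k})$. The genuine gap is the base case $\Omega(k)$ for universe size $k^3$, which is the whole technical content of the lower bound and which you leave as intuition.

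``Each query advances the path by $O(1)$ steps plus a guessing term'' is a classical picture. A quantum algorithm queries superpositions of addresses and can interleave Grover-type search for live vertices with walking, exactly as your own nested-amplification algorithm does. So ``how far along the path it has progressed'' is not well defined without a formal device, and you give neither a hybrid invariant nor adversary weights. Your proposed hard family (a uniformly random injective path, with marks spread over the last $k/2$ vertices) also lacks the product structure that would make the spectral norms computable.

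The missing idea is the paper's layered restriction:
\begin{itemize}
\item there are $r$ layers of width $M=r^2$, so the universe has size $r^3$;
\item the path picks one vertex per layer, and only the last vertex $z_r$ may be marked;
\item a yes/no pair is weighted by the position of the first disagreement of the two paths, $\Gamma_r=\sum_{m=1}^r I^{\otimes(m-1)}\otimes B\otimes P^{\otimes(r-m)}$, with $B=M^{-1/2}(J-I)$ and $P$ the projector onto the uniform vector.
\end{itemize}
The uniform vector gives $\|\Gamma_r\|\ge r(M-1)/\sqrt M$, and every filtered matrix has norm $O(\sqrt M+r)$. The decisive case is an internal successor query. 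A split at the layer just after the queried one contributes $\|B\|\le\sqrt M$ (your ``one step''). Each split before the queried layer contributes $\|B\|\,\|H_a\|\le 2$, through the two-coordinate factor $H_a$ of norm at most $2/\sqrt M$ (your ``guessing term''). The ratio $r\sqrt M/(\sqrt M+r)$ is $\Omega(r)$ exactly when $M\gtrsim r^2$, which is the rigorous version of your threshold.

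Your gluing step also fails as written. In $\HLLDEC(M,k)$ the last vertex points to $\pp$, just like a dead address. So a simulator holding the inner oracles cannot decide with $O(1)$ queries whether the global successor of an address should be the next block's first vertex. If you instead realize the glue directly as a restriction, the successor coordinate of a block's last vertex depends on two blocks, so the input is no longer a block composition.

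The paper avoids this by proving the base case for an anchored segment ($\ASEGDEC$) with public entry and exit symbols. Segments are chained through fixed unmarked bridge vertices $b_t$. Both $f(b_{t-1})=x_{t,1}$ and $f(x_{t,L_0})=b_t$ are coordinates of block $t$. The packed family is therefore literally a restriction of $\HLLDEC$ equal to $\UOR_m\circ\ASEGDEC(N_0,L_0)^m$, and no simulation is needed.

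Finally, there are two small slips in the upper bound:
\begin{itemize}
\item There are $\ell-1$ live addresses, not $\ell$, since $f(a_\ell)=\pp$.
\item Because $a_\ell$ is not live, your $k$-vertex check walk leaves only $k-1$ good starting points when the mark is $a_\ell$. That is none when $k=1$, which happens for $N=\ell$. Walking $k$ steps, $x,f(x),\ldots,f^{(k)}(x)$, as the paper does, fixes this.
\end{itemize}
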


\begin{theorem}\label{thm:main1}
For every $N \ge \ell \ge 1$,
\[
Q(\HDDLDEC(N,\ell)),\; Q(\HDDLSearch(N,\ell))
  \in \Theta\!\left(\min\{\ell,(N\ell)^{1/4}\}\right).
\]
\end{theorem}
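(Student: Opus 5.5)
The upper bound follows from the single-list case, so the real work is the lower bound. Any $\HLLDEC(N,\ell)$ or $\HLLSearch(N,\ell)$ algorithm only ever queries a successor oracle and a marking oracle. Given an $\HDDLDEC$ input $(\operatorname{next},\operatorname{prev},g)$, the pair $(\operatorname{next},g)$ is a valid $\HLLDEC(N,\ell)$ input for the same list. Running the algorithm of Theorem~\ref{thm:main} and simply never querying $\operatorname{prev}$ therefore gives $O(\min\{\ell,(N\ell)^{1/4}\})$ for both $\HDDLDEC$ and $\HDDLSearch$.

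For the lower bound, the $\operatorname{prev}$ oracle can only help the algorithm, so I cannot just cite the single-list bound. I would build the hard distribution so that $\operatorname{prev}$ carries no useful information. First I reduce to the regime $N\le \ell^3$ by padding. Fix all addresses outside a sub-universe of size $N'=\min\{N,\ell^3\}$ to be dead ($\bot$ in both directions, unmarked). Then any algorithm for $(N,\ell)$ solves $(N',\ell)$, and $\min\{\ell,(N'\ell)^{1/4}\}=\Theta(\min\{\ell,(N\ell)^{1/4}\})$. Decision lower-bounds search, so it suffices to prove $\Omega((N\ell)^{1/4})$ for $\HDDLDEC$ when $N\le\ell^3$.

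Set $m=\lceil\sqrt{N/\ell}\,\rceil\le \ell$ and $k=\lfloor \ell/m\rfloor$. The list is cut into $k$ segments of length $m$, and segment $j$ lives in its own sub-universe $U_j$ of size about $N/k$. The segment endpoints are placed at fixed, publicly known positions that are always unmarked and always linked to the neighbouring segments. Only the $m-2$ interior vertices of each segment are placed at hidden random positions of $U_j$. The mark, if present, sits on an interior vertex of exactly one segment. The instance then computes $\mathrm{OR}_k\circ h$, where $h$ on a block asks whether a hidden doubly linked path of length $m$, embedded in a universe of size $N/k\approx m^2$ with both ends public, contains a marked vertex. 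I would show $\Adv^{\pm}(h)=\Omega(m)$, which is $\Omega(\max\{m,\sqrt{N/k}\})$ since $N/k\approx m^2$. The composition theorem for the general adversary bound, with $\Adv^{\pm}(\mathrm{OR}_k)=\Theta(\sqrt k)$, then gives
\[
\Omega\bigl(\sqrt{k}\,m\bigr)=\Omega\bigl(\sqrt{\ell m}\bigr)=\Omega\bigl((N\ell)^{1/4}\bigr).
\]
Strictly, composition is for Boolean inner functions with disjoint input blocks. Here each $h$ reads its own coordinates $\{(\mathrm{Next},x),(\mathrm{Prev},x),(G,x):x\in U_j\}$, and these are disjoint across $j$, so this holds after restricting to the promise.

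The main obstacle is the block bound $\Omega(m)$ for $h$ in the presence of $\operatorname{prev}$. Both ends are public, so the algorithm can walk inward from either side. It can also Grover-search $U_j$ (size $m^2$) for a live interior vertex and walk from there. Intuitively, any single walk must still advance roughly $m$ steps, or search for about $m$ steps, to hit a specific interior vertex. To make this rigorous I would first try a weighted adversary (Ambainis-style relation). Pair a no-instance with the yes-instances obtained by marking one interior vertex. Also relate instances that differ by relocating one interior vertex $a_i$ to a fresh address, which changes $\operatorname{next}(a_{i-1})$, $\operatorname{prev}(a_{i+1})$ and the contents of the two addresses involved. I would then bound the per-coordinate load so that the ratio is $\Omega(m)$, mirroring the pointer-chasing and Grover components simultaneously. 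If the weights do not balance, the fallback is a hybrid argument: show that after $t$ queries the state is nearly independent of the positions of interior vertices at depth more than about $t$ from both ends, except for an $O(t^2/m^2)$ search term. Either way, this block lemma is where essentially all of the new effort goes.
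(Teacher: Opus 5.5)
Your upper bound is correct, and simpler than the paper's bidirectional algorithm: $(\operatorname{next},g)$ is already a valid $\HLLDEC(N,\ell)$ input, so the single-list algorithm applies without change. Your padding step and the outer $\UOR_k\circ h$ packing with disjoint query blocks also match the paper's skeleton.

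The gap is the block lemma $\Advpm(h)=\Omega(m)$. This is the whole content of the lower bound, and you leave it unproved. First fix the scale. Your parameters give block universes of size $N/k\approx Nm/\ell\approx m^3$, not $m^2$. This is not cosmetic: at size $m^2$ the lemma is false, because the single-list algorithm run inside one block costs $O((m^2\cdot m)^{1/4})=O(m^{3/4})$. So what you need is a cubic-threshold lower bound for a doubly linked segment, and there $\operatorname{prev}$ creates two concrete obstructions that your sketch does not confront.
\begin{itemize}
\item If the candidate mark sits next to a public endpoint, as in the single-list hard family, one $\operatorname{prev}$ query at that endpoint reveals it. The single-list matrix then has $\|\Gamma\circ\Delta\|\approx r\sqrt M\approx\|\Gamma\|$.
\item If you move the mark inward but still weight yes/no pairs by first disagreement along the whole path, every heavy pair that agrees through the marked layer and splits later is seen by the single mark query at that address. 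The ratio again collapses to $O(1)$.
\end{itemize}

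Your suggested relation does not get around this.
\begin{itemize}
\item Restricted to same-path pairs, the matrix is block-diagonal with ratio $O(\sqrt m)$. So everything must come from pairs whose paths differ, and you give no weights or load estimates for these.
\item Pairs differing by a single relocated vertex run into a direct conflict. To keep the mark query light, the yes-neighbours of a no-instance must mark many different addresses. That forces almost all weight onto relocating the marked vertex itself. But then the successor query at its predecessor distinguishes almost every related pair from both sides.
\item The hybrid fallback states what must be shown rather than giving an argument. Controlling interleaved search-then-walk in superposition is exactly the difficulty.
\end{itemize}

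The paper closes the gap with a specific construction. The block is $p\to x_1\to\cdots\to x_L\to y_1\to\cdots\to y_L\to q$, with the only possible mark at $x_L$ and an unmarked buffer $y$, laid out in $2L$ layers of width $M=L^2$. The adversary matrix is $\Gamma^X_r\otimes \Lambda^Y_r/\|\Lambda^Y_r\|$, where $\Lambda^Y_r$ is the coordinate reversal of the single-list first-disagreement matrix. Related paths differ on a whole suffix (in $x$) or prefix (in $y$) after the split, which spreads out successor and predecessor queries. The pieces of the ratio then go as follows:
\begin{itemize}
\item The normalization keeps the numerator at $\|\Gamma^X_r\|\ge r(M-1)/\sqrt M$.
\item The mark query sees only $\Gamma^X_r$ filtered to $x_r=a$, which has norm at most $r$.
\item $\operatorname{prev}(q)$ sees only the $t=r$ summand of $\Lambda^Y_r$.
\item The interface queries at $x_r$ and $y_1$ are controlled by $\|\Gamma^X_r\circ\Pi^X_a\|\le r/\sqrt M$ and $\|H^Y_r\circ\Pi^Y_a\|=O(1/M)$.
\end{itemize}
Some explicit construction of this kind is what your proposal is missing.
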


These theorems identify a sharp threshold at $N\asymp \ell^3$. When the ambient universe is comparatively
small, namely $N\lesssim \ell^3$, the optimal complexity is $\Theta((N\ell)^{1/4})$. When
$N\gtrsim \ell^3$, the problem is dominated by the inherently sequential cost of moving
along the list, and simple traversal in $\Theta(\ell)$ queries is optimal.

There is also a broader motivation for the model. Much of quantum search theory lies at one
of two extremes. At one extreme is unstructured search, typified by Grover search and
amplitude amplification, where the search space is explicit and there is essentially no internal
geometry to exploit~\cite{Grover97,BHMT02}. At the other extreme are models with explicit
structure, such as ordered search~\cite{HNS02,CLP07,ChildsLee08,BDW99,CCMS},
spatial search on an explicitly given graph~\cite{AaronsonAmbainis2005,ChildsGoldstone04,Ben02},
or quantum backtracking on an explicit rooted tree~\cite{Montanaro18}. Consider, for example \emph{spatial search}, where the search space is the vertex set of an explicit graph (a path may represent a linked list) and the algorithm is constrained by the graph geometry. In this setting, quantum walks can still yield substantial speedups: Aaronson and Ambainis introduced a model of quantum search on graphs and showed that, on sufficiently well-connected spatial structures, one can achieve near-Grover performance despite locality constraints \cite{AaronsonAmbainis2005}. Interestingly, they stated that if the explicit graph is a path then we cannot do better than traversing the entire length of the path.

The hidden
linked-list model sits strictly between these two regimes. The ambient universe $[N]$ is
unstructured, but inside it lies a structured searchable subspace, namely a path of length
$\ell$, and that structured subspace is itself hidden. The marked element, if it exists, lies
inside this hidden path. An algorithm must therefore combine two tasks: it must first discover
useful support inside an unstructured universe, and then exploit the local successor structure
within that support. In this sense, hidden linked-list search isolates a basic intermediate
regime between unstructured quantum search and fully explicit structured search.  Figure~\ref{fig:search-regimes} summarizes this viewpoint pictorially. 

This perspective also points to a direction of future research beyond linked lists. 
One can imagine other, (especially pointer based) data structures whose natural model is such a structured
support hidden inside a much larger ambient universe. Classically, any relevant operation may run in time polynomial in the size 
of the support itself. The quantum question is whether the ambient universe can 
be used more directly: {\emph can a quantum algorithm exploit coherent access to 
$[N]$ to reduce the cost of various operations on such structured 
supports?}

\begin{figure}[t]
\centering
\begin{tikzpicture}[font=\small, line width=0.9pt]


\draw[draw=blue!60!black, fill=blue!5] (0,0) circle (1.9);
\node[font=\bfseries\scriptsize, text=blue!70!black] at (0,1.62) {$[N]$};

\filldraw[draw=orange!80!black, fill=orange!30] (-1.10, 0.75) rectangle ++(0.22,0.22);
\filldraw[draw=orange!80!black, fill=orange!30] (-0.30, 1.00) rectangle ++(0.22,0.22);
\filldraw[draw=red!75!black,    fill=red!30]      ( 0.80, 0.65) rectangle ++(0.22,0.22);
\filldraw[draw=orange!80!black, fill=orange!30] ( 1.00,-0.10) rectangle ++(0.22,0.22);
\filldraw[draw=orange!80!black, fill=orange!30] ( 0.35,-0.95) rectangle ++(0.22,0.22);
\filldraw[draw=orange!80!black, fill=orange!30] (-0.65,-1.00) rectangle ++(0.22,0.22);
\filldraw[draw=orange!80!black, fill=orange!30] (-1.15,-0.20) rectangle ++(0.22,0.22);
\filldraw[draw=orange!80!black, fill=orange!30] ( 0.00, 0.05) rectangle ++(0.22,0.22);

\node[font=\bfseries\small] at (0,-2.45) {Unstructured search};
\node[font=\scriptsize, text=black!70] at (0,-2.85) {Grover};


\filldraw[draw=orange!80!black, fill=orange!30] (4.40,-0.11) rectangle ++(0.22,0.22);
\filldraw[draw=orange!80!black, fill=orange!30] (5.30,-0.11) rectangle ++(0.22,0.22);
\filldraw[draw=red!75!black,    fill=red!30]      (6.20,-0.11) rectangle ++(0.22,0.22);
\filldraw[draw=orange!80!black, fill=orange!30] (7.10,-0.11) rectangle ++(0.22,0.22);
\filldraw[draw=orange!80!black, fill=orange!30] (8.00,-0.11) rectangle ++(0.22,0.22);

\draw[->, draw=teal!70!black] (4.62,0) -- (5.30,0);
\draw[->, draw=teal!70!black] (5.52,0) -- (6.20,0);
\draw[->, draw=teal!70!black] (6.42,0) -- (7.10,0);
\draw[->, draw=teal!70!black] (7.32,0) -- (8.00,0);

\node[font=\bfseries\small] at (6.30,-2.45) {Fully structured search};
\node[font=\scriptsize, text=black!70] at (6.30,-2.85) {explicit linked list};


\draw[draw=blue!60!black, fill=blue!5] (12.60,0) circle (1.9);
\node[font=\bfseries\scriptsize, text=blue!70!black] at (12.60,1.62) {$[N]$};

\filldraw[draw=orange!80!black, fill=orange!18] (11.55, 1.00) rectangle ++(0.20,0.20);
\filldraw[draw=orange!80!black, fill=orange!18] (12.20, 0.92) rectangle ++(0.20,0.20);
\filldraw[draw=orange!80!black, fill=orange!18] (13.25, 0.92) rectangle ++(0.20,0.20);
\filldraw[draw=orange!80!black, fill=orange!18] (13.55,-0.85) rectangle ++(0.20,0.20);
\filldraw[draw=orange!80!black, fill=orange!18] (12.35,-1.05) rectangle ++(0.20,0.20);
\filldraw[draw=orange!80!black, fill=orange!18] (11.30,-0.80) rectangle ++(0.20,0.20);

\filldraw[draw=orange!80!black, fill=orange!30] (11.34, 0.44) rectangle ++(0.22,0.22);
\filldraw[draw=orange!80!black, fill=orange!30] (11.89,-0.46) rectangle ++(0.22,0.22);
\filldraw[draw=orange!80!black, fill=orange!30] (12.59, 0.07) rectangle ++(0.22,0.22);
\filldraw[draw=red!75!black,    fill=red!30]      (13.24,-0.39) rectangle ++(0.22,0.22);
\filldraw[draw=orange!80!black, fill=orange!30] (13.69, 0.24) rectangle ++(0.22,0.22);

\draw[->, draw=teal!70!black] (11.56,0.55) -- (11.89,-0.24);
\draw[->, draw=teal!70!black] (12.11,-0.35) -- (12.59,0.18);
\draw[->, draw=teal!70!black] (12.81,0.18) -- (13.24,-0.28);
\draw[->, draw=teal!70!black] (13.46,-0.28) -- (13.69,0.35);

\node[font=\bfseries\small] at (12.60,-2.45) {Our model};
\node[font=\scriptsize, text=black!70] at (12.60,-2.85) {hidden linked list inside $[N]$};

\end{tikzpicture}
\caption{
Three search regimes.
Left: unstructured search over the ambient universe $[N]$.
Middle: fully structured search, where the linked list is explicit.
Right: our model, where $[N]$ is the ambient universe, but only a hidden linked list inside $[N]$ is relevant; other vertices of $[N]$ need not belong to that support.
In each panel, the red box indicates the unique marked node, if it exists, corresponding to the marking function $g$.
}
\label{fig:search-regimes}
\end{figure}

\paragraph*{Related Work:}Let us also contrast our model with the quantum ordered search model already studied in quantum literature. In the standard ordered-search problem, one is given oracle access to a monotone bit string
$x=(x_1,\dots,x_n)\in\{0,1\}^n$
with the promise that
$0=x_0 \le x_1 \le \cdots \le x_n = 1$,
and the goal is to output the unique index
$i^\star = \min\{i\in[n] : x_i = 1\}$.
Equivalently, this is the problem of locating a target in a sorted array, or searching in a
balanced binary search tree using comparison queries. Classically, the exact query complexity
is $\lceil \log_2 n\rceil$. Quantumly, this problem has been studied extensively, and the known
results show only constant-factor improvements over classical binary search: exact quantum
algorithms use $O(\log n)$ queries (with extensive research in reducing the constant factors)~\cite{HNS02,CLP07,ChildsLee08,BDW99,CCMS}. This remains true even if one embeds the ordered-search instance into a much larger
ambient address space, provided the hidden support still carries the usual comparison
structure of a balanced binary search tree. See Section~\ref{app:ordsearch} in the appendix for a short explanation.

Quantum-walk search gives another important point of comparison. In many quantum-walk algorithms, the algorithm starts with a graph, a Markov chain, or a local transition rule whose structure is available in advance, either as part of the problem formulation or as a walk chosen by the algorithm. This viewpoint appears, for example, in Szegedy's quantization of Markov-chain-based algorithms~\cite{Sze04}, the MNRS framework for search via quantum walks~\cite{MNRS11}, and Ambainis's quantum-walk algorithm for element distinctness~\cite{Amb07}.

Another related query problem of \emph{pointer chasing}, has also been studied in the quantum query setting recently, particularly in connection with parallel and adaptive queries~\cite{CarolanGilaniVempati25}. In their formulation, one is given oracle access to a function $X:\{0,1\}^n\to\{0,1\}^n$, viewed as assigning a pointer to every address, and the task is to determine a bit of $X^k(0^n)$ after following $k$ successive pointers. Rather than asking only for the result of a fixed number of pointer iterations, our problem however studies how global quantum access to the ambient address space interacts with search on a hidden linked structure.
 
\medskip

 We now briefly explain the main ideas behind both upper and lower bounds of the argument.

\paragraph{Proof overview.} 

 For the \emph{upper bound} of \emph{single linked lists}, the main idea is to combine global quantum access to the ambient universe \([N]\) with local exploration of the hidden list. The first step is to use the successor oracle to identify the nonterminal list vertices, namely those addresses whose successor is not $\bot$. Since there are exactly $\ell-1$ such vertices, exact amplitude amplification prepares the uniform superposition over them in \(O(\sqrt{N/\ell})\) queries \cite{BHMT02}. We then fix a parameter \(k\), inspect the first \(k\) vertices of the list directly, and, if the mark is not found there, search among the nonterminal vertices for one from which a forward walk  of length $k$ reaches the marked vertex. Checking this property costs $O(k)$ queries, and if the marked vertex lies beyond the initial prefix then at least $k$ starting vertices reach the marked vertex within $k$ successor steps.

A fixed-point search routine therefore finds such a starting point in $O(\sqrt{\ell/k})$ iterations, after which an additional $O(k)$-step walk recovers the marked vertex \cite{YoderLowChuang14}. Optimizing the resulting bound over $k$ gives $O((N\ell)^{1/4})$ when $N\le \ell^3$, while for $N>\ell^3$ simple traversal already gives the optimal $O(\ell)$ bound. For \emph{double linked lists} a similar algorithm works with appropriate changes to incorporate the predecessor oracle.

\medskip

The \emph{lower bounds} are the more delicate part of the paper.  At a high level, we would like to say that searching a hidden list is hard because the algorithm has to discover the relevant part of the address universe before it can search it. However, turning this intuition into a quantum lower bound requires some care. A quantum algorithm may query the ambient universe in superposition, and a lower-bound proof must show that no single query, even in superposition, can reveal too much information about where the marked vertex is hidden.

We prove the lower bounds using the negative-weight adversary method 
 \footnote{Readers may refer to
Section~\ref{prelims}, where we recall the negative weight adversary bound and explain how it applies to our oracle models.} \cite{HLS07}.  The proof has the same broad shape for single and double linked
lists, but the double linked-list case needs some modifications.  We first explain
the single linked-list lower bound.

\emph{Single linked lists:}
The first step is to isolate a small local problem.  Instead of working
immediately with the full hidden list $s \to a_1 \to \cdots \to a_\ell \to \bot$,
we look at one hidden segment.  Thus the segment has
the form $u_{\mathrm{in}} \to x_1 \to \cdots \to x_L \to u_{\mathrm{out}}$,
where \(u_{\mathrm{in}}\) and $u_{\mathrm{out}}$ are known public symbols, but
the internal vertices $x_1,\ldots,x_L$ are hidden addresses.  The task is to
decide whether the segment contains a marked vertex.  Later, this is formalized as the problem $\mathrm{ASEGDEC}(N,L)$.

The main technical statement for one such segment is the following:

\begin{quote}
A hidden single linked-list segment of length $L$ requires $\Omega(L)$ quantum queries when it lives in an
ambient address space of size $\Theta(L^3)$.
\end{quote}

Let us broadly describe why this is true.  We choose a family of layered paths as our hard instances.  The address space is divided into \(L\) layers, and the hidden segment
chooses one address from each layer.  The possible marked vertex is placed at
the end of the segment.  The adversary matrix of the negative weight adversary bound compares a yes-input, where the
last chosen vertex is marked, with a no-input, where the same kind of path has no marked vertex.

The weights in the adversary matrix are chosen according to where two hidden
paths first split.  If two paths split near the public entrance, then that pair
should not be given too much weight.  If the two paths agree for a long prefix and split only near the end those pairs are harder, and the adversary matrix gives them larger weight.

The important point is that this weighting works exactly at the cubic scale i.e. when the universe has size cubic in the segment length.
When the layer width is \(L^2\), the whole segment uses \(\Theta(L^3)\) possible
addresses.  At this scale, the total adversary weight is large, but every single
successor query or marking query sees only a controlled part of it.  This gives
the lower bound \(\Omega(L)\).

This also explains why the proof cannot simply take one segment of length
\(\ell\) for every value of \(N\).  If \(N<\ell^3\), then there is not enough space to give each layer width \(\ell^2\).  The same adversary construction does not work here.  

\medskip
The way around this is to keep each hard segment at its own cubic scale and
then put many such segments in series.  Choose a shorter segment length
\(L_0\) so that one segment of length \(L_0\) fits at its cubic scale.  In the
small-universe regime this means, up to constants,
$L_0 \approx \sqrt{N/\ell}$.
We then place many independent length-\(L_0\) boundary segments one after
another inside a single hidden linked list.

Now impose the promise that at most one segment contains a marked vertex.  Under
this promise, deciding whether the whole list contains a mark is just unique-OR
of the segment outputs: one must decide whether one of the many hidden segments
is positive.  The negative adversary bound composes under such block
composition \cite[Lemma~5.2]{LeeMittalReichardtSpalekSzegedy11}  \footnote{Please check Section~\ref{prelims} for how the negative weight adversary bound composes}.  Therefore the
lower bound from one segment is multiplied by the usual quantum search cost for
finding one positive block among many.

In words, each inner segment costs \(\Omega(L_0)\), and the outer unique-OR over
about \(\ell/L_0\) segments contributes a square-root factor.  With the above
choice of \(L_0\), this gives
$\Omega\bigl((N\ell)^{1/4}\bigr)$
when \(N\leq \ell^3\).  When \(N\geq \ell^3\), the universe is large enough to
fit one cubic-scale segment of length \(\Theta(\ell)\), so the boundary lower
bound gives \(\Omega(\ell)\) directly.  Together these two regimes give the single linked-list lower bound
\[
\Omega\!\left(\min\{\ell,(N\ell)^{1/4}\}\right).
\]

\medskip
\emph{Double linked lists:}
For double linked lists, the outer strategy of serial composition is similar, but the  boundary
problem at the cubic threshold has to be modified.  The new issue is the predecessor oracle.  If we
used the same kind of boundary segment as in the single linked-list proof, then
a predecessor query at the public exit could reveal the last hidden vertex of
the segment.  If that last vertex is the possible marked vertex, then one query has learned exactly the information the adversary construction was trying to
hide.

To prevent this, the double linked-list boundary segment has a buffer.  Instead
of putting the possible marked vertex next to the public exit, we use a segment
of the form
$p \leftrightarrow x_1 \leftrightarrow \cdots \leftrightarrow x_L
  \leftrightarrow y_1 \leftrightarrow \cdots \leftrightarrow y_L
  \leftrightarrow q$.
Here \(p\) and \(q\) are public connector symbols.  The possible marked vertex is
\(x_L\), while the vertices \(y_1,\ldots,y_L\) form an unmarked buffer between
\(x_L\) and the public exit \(q\).  Later, this is formalized as the problem \(\mathrm{DBUFDEC}(N,L)\). 

 However,
the buffer creates a second problem.  An adversary matrix with weights assigned in the same way as in the singly linked-list case does not work. A single marking query at \(x_L\) increases the denominator a lot due to two paths splitting in the buffered part.

The final adversary matrix therefore weights the two halves in opposite directions.  On the \(X\)-part, \(x_1,\ldots,x_L\), it behaves like the single
linked-list adversary: pairs get larger weight when their first disagreement is
closer to the possible marked vertex \(x_L\).  On the \(Y\)-buffer, the weights
are reversed: the buffer is read from the public exit \(q\) back toward the
middle.  

This is the key difference for double linked lists.  As in the single linked-list case after these weights are modified, at the cubic scale the numerator is
large while every individual query contributes only a controlled denominator.
This gives the buffered boundary lower bound \(\Omega(L)\).

\subsection{Organization}

Section~\ref{prelims} talks about some quantum search primitives  used
throughout the paper. In Section~\ref{sec:singlelinkedlist} we give our bounds for the Single Linked List problem. Section~\ref{sec:uppbound} proves the upper bound and Section~\ref{sec:lower-bounds} proves the lower bound.  Section~\ref{app:doubly-linked-list} then gives bounds for the Double Linked List problem. Section~\ref{subsec:uppbound} and Section~\ref{subsec:doubly-lower-bound} give the upper and lower bounds respectively. 


\section{Preliminaries}
\label{prelims}

We use two standard amplitude-amplification primitives as black boxes.
The first allows exact preparation of the uniform superposition over a
known-size good set. The second is a fixed-point search routine that
works when only a lower bound on the good weight is known, thereby
avoiding the overshooting issue of ordinary Grover search.

\begin{lemma}
\label{lem:exact-good-state-prep}
Let $P:[N]\to\{0,1\}$ be a predicate, and let
 $G=\{x\in[N]:P(x)=1\}, t=|G|$.
Assume that $t\ge 1$ is known exactly and that $P$ can be evaluated reversibly with one query.
Then there is a quantum procedure using $O(\sqrt{N/t})$ queries to $P$ that prepares the state
$\frac{1}{\sqrt t}\sum_{x\in G}\ket{x}$.
\end{lemma}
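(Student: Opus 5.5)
The plan is to use exact amplitude amplification in the sense of Brassard, H{\o}yer, Mosca and Tapp~\cite{BHMT02}. Start from the uniform superposition $\ket{\psi}=\frac{1}{\sqrt N}\sum_{x\in[N]}\ket{x}$, prepared by a query-free unitary $A$. Its overlap with the good subspace is $\sin\theta=\sqrt{t/N}$, where $\theta\in(0,\pi/2]$ is known exactly because $t$ and $N$ are known.

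Define the phase oracle $S_P:\ket{x}\mapsto(-1)^{P(x)}\ket{x}$. Implement it with one reversible evaluation of $P$ via the standard phase-kickback trick, using an ancilla in $\ket{-}$. Alternatively, compute $P(x)$ into an ancilla, apply a phase, and uncompute; this costs two queries, which is still $O(1)$. The reflection about $\ket{\psi}$ is $S_\psi=A(2\ket{0}\bra{0}-I)A^\dagger$ and needs no queries. The Grover iterate $Q=-S_\psi S_P$ acts as a rotation by $2\theta$ in the two-dimensional plane spanned by $\ket{\psi_{\mathrm{good}}}=\frac{1}{\sqrt t}\sum_{x\in G}\ket{x}$ and $\ket{\psi_{\mathrm{bad}}}$. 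After $m$ iterations the state is $\sin((2m+1)\theta)\ket{\psi_{\mathrm{good}}}+\cos((2m+1)\theta)\ket{\psi_{\mathrm{bad}}}$.

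The one real obstacle is exactness. Generally no integer $m$ makes $(2m+1)\theta=\pi/2$, so plain Grover only approximates the target state. I would handle this with the BHMT trick. Set $\tilde m=\lceil \pi/(4\theta)-1/2\rceil$, and define $\theta'\le\theta$ by $(2\tilde m+1)\theta'=\pi/2$. Then effectively shrink the good amplitude from $\sin\theta$ to $\sin\theta'$. Concretely, adjoin one ancilla qubit rotated to $\sqrt{1-\alpha}\ket{0}+\sqrt{\alpha}\ket{1}$, with $\alpha=\sin^2\theta'/\sin^2\theta\in(0,1]$, and declare the good predicate to be $P(x)\wedge(\text{ancilla}=1)$. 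This modified predicate still costs one query to $P$. The initial good amplitude is now exactly $\sin\theta'$, so $\tilde m$ iterations rotate the state exactly onto $\ket{\psi_{\mathrm{good}}}\otimes\ket{1}$. Discarding the ancilla, which is now in the fixed state $\ket{1}$, leaves exactly $\frac{1}{\sqrt t}\sum_{x\in G}\ket{x}$. An equivalent option is to adjust the phases of the last iteration, as in the generalized Grover of Long/H{\o}yer; either choice works, and I would commit to the ancilla version.

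Finally, count queries. Since $\theta\ge\sin\theta=\sqrt{t/N}$, we get $\tilde m\le \pi/(4\theta)+1/2=O(\sqrt{N/t})$. Each iteration uses $O(1)$ queries, so the total is $O(\sqrt{N/t})$. The edge case $t=N$ gives $\theta=\pi/2$ and $\tilde m=0$, so no queries are needed.
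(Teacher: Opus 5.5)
Your proposal is correct and follows the same route as the paper. Both use exact amplitude amplification with known success probability \cite{BHMT02}: the evolution stays in the plane spanned by the good and bad components and ends exactly on the normalized good component $\frac{1}{\sqrt t}\sum_{x\in G}\ket{x}$. The only difference is that you spell out the angle-shrinking ancilla trick, which the paper invokes by citing \cite[Thm.~4]{BHMT02}; one minor point is that with $S_\psi=2\ket{\psi}\langle\psi|-I$ your iterate $-S_\psi S_P$ reproduces your stated formula only up to a harmless global sign $(-1)^m$.
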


\begin{proof}
Let $A\ket{0}=\frac{1}{\sqrt N}\sum_{x\in[N]}\ket{x}$.
Then the initial success probability is $a=t/N$. By exact amplitude amplification with known
success probability, one can obtain a good outcome with certainty using $O(1/\sqrt a)=
O(\sqrt{N/t})$ applications of $A$, $A^{-1}$, and the predicate oracle \cite[Sec.~2.1, Thm.~4]{BHMT02}.
Moreover, amplitude amplification preserves the two-dimensional subspace spanned by the initial
good and bad components, and the final exact step makes the bad component vanish. Hence the
final state is the normalized initial good component. For the uniform superposition on $[N]$, that
normalized good component is exactly $\frac{1}{\sqrt t}\sum_{x\in G}\ket{x}$.

\end{proof}

\begin{lemma}[\cite{YoderLowChuang14}]
\label{lem:fixed-point-search}
Let $A$ be a unitary with $A\ket{0} = \sum_x \alpha_x \ket{x}$,
and let $Q$ be a Boolean predicate on basis states that can be evaluated
reversibly. Suppose that for some known parameter $\varepsilon \in (0,1]$
one has the promise that either $\sum_{x:Q(x)=1} |\alpha_x|^2 = 0$
or $\sum_{x:Q(x)=1} |\alpha_x|^2 \ge \varepsilon$.
Then there is a quantum procedure using
$O(1/\sqrt{\varepsilon})$ calls to $A$, $A^\dagger$, and the reversible
evaluation of $Q$ such that, in the second case, it outputs a basis state
$x$ with $Q(x)=1$ with constant success probability. In the first case,
a final direct verification of the output candidate cleanly distinguishes
the no-case.
\end{lemma}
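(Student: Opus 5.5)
This lemma is the fixed-point amplitude amplification theorem of Yoder, Low and Chuang, so the plan is to recall their construction and check that its guarantees match the form stated here.

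\textbf{Setup.} Write $A\ket{0}=\sin\theta\,\ket{\psi_{\mathrm{good}}}+\cos\theta\,\ket{\psi_{\mathrm{bad}}}$, where $\ket{\psi_{\mathrm{good}}}$ is the normalized projection onto basis states with $Q(x)=1$. Then $\lambda:=\sin^2\theta=\sum_{x:Q(x)=1}|\alpha_x|^2$.

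\textbf{Construction.} Fix a target failure probability $\delta^2$, with $\delta$ a constant such as $1/3$. Build the generalized Grover sequence
\[
S_L=A\,S_0(\beta_l)\,A^\dagger\,S_Q(\alpha_l)\cdots A\,S_0(\beta_1)\,A^\dagger\,S_Q(\alpha_1)\,A .
\]
Here $S_Q(\alpha)$ multiplies good basis states by $e^{i\alpha}$. It is implemented with two reversible evaluations of $Q$ and a controlled phase. Likewise $S_0(\beta)$ multiplies $\ket{0}$ by $e^{i\beta}$.

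\textbf{Key step.} Because all operations preserve the two-dimensional span of $\ket{\psi_{\mathrm{good}}}$ and $\ket{\psi_{\mathrm{bad}}}$, the failure amplitude is a polynomial in $\cos\theta$. Yoder--Low--Chuang choose the phases $\alpha_j,\beta_j$ so that the success probability equals
\[
P_L(\lambda)=1-\delta^2\,T_L\!\bigl(T_{1/L}(1/\delta)\sqrt{1-\lambda}\bigr)^2 ,
\]
where $T_L$ is the Chebyshev polynomial and $L=2l+1$. Since $|T_L|\le 1$ on $[-1,1]$, we get $P_L(\lambda)\ge 1-\delta^2$ whenever $T_{1/L}(1/\delta)\sqrt{1-\lambda}\le 1$, which holds for every $\lambda\ge\varepsilon$. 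Using $T_{1/L}(1/\delta)=\cosh\bigl(\cosh^{-1}(1/\delta)/L\bigr)$, this condition is met once $L\ge \log(2/\delta)/\sqrt{\varepsilon}$. For constant $\delta$ this is $O(1/\sqrt{\varepsilon})$ calls to $A$, $A^\dagger$ and $Q$. Measuring then yields a good $x$ with constant probability.

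\textbf{No-case and verification.} If $\lambda=0$, no measured basis state satisfies $Q$. To separate the cases, evaluate $Q$ once on the output candidate: output ``found'' only if $Q(x)=1$. In the yes-case this check passes with probability at least $1-\delta^2$; in the no-case it never passes.

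\textbf{Main obstacle.} The only delicate point is the phase choice: the phases must give monotone, non-overshooting behaviour for \emph{all} $\lambda\ge\varepsilon$ at once. For this step I would cite \cite{YoderLowChuang14} directly rather than re-derive the Chebyshev identity.
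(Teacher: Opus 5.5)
Your proposal is correct and takes the same approach as the paper, which states this lemma only as a citation of Yoder--Low--Chuang and gives no separate proof. The parts you add are sound: the Chebyshev success-probability formula, the check that an odd $L \ge \log(2/\delta)/\sqrt{\varepsilon}$ gives $P_L(\lambda)\ge 1-\delta^2$ for every $\lambda\ge\varepsilon$, and the single $Q$-evaluation on the output that separates the no-case.
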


Let us make explicit how our two-oracle
notation fits the standard query model. Although an input is described as a
pair of oracles $(f,g)$, it may equivalently be viewed as a single
finite-alphabet input string
\[
X_{f,g}
=
\bigl( (f(u))_{u\in\{s\}\cup [N]},\,
       (g(v))_{v\in [N]} \bigr).
\]
The query coordinates of this single input form the disjoint union
\[
\mathcal I
=
\bigl(\{F\}\times(\{s\}\cup [N])\bigr)
\;\cup\;
\bigl(\{G\}\times [N]\bigr).
\]
A query to a coordinate of the form $(F,u)$ returns $f(u)$, while a query
to a coordinate of the form $(G,v)$ returns $g(v)$. Equivalently, in
reversible form, the corresponding single oracle acts as
$O_X\lvert F,u,z\rangle
=
\lvert F,u,z\oplus f(u)\rangle$
and
$O_X\lvert G,v,b\rangle
=
\lvert G,v,b\oplus g(v)\rangle $.
Thus the model is an ordinary finite-alphabet quantum query model whose
coordinates are the disjoint union of the $f$-coordinates and the
$g$-coordinates. Consequently, the usual negative-weight adversary bound
applies directly.

For the hidden double linked-list problem, the same convention applies with three
oracles instead of two. An input is equivalently a single finite-alphabet input
string
$X_{\operatorname{next}, \operatorname{prev},g}
=
\bigl((\operatorname{next}(u))_{u\in\{s\}\cup[N]},
      (\operatorname{prev}(u))_{u\in\{s\}\cup[N]},
      (g(v))_{v\in[N]}\bigr).$
The query coordinates are now the disjoint union
\[
\mathcal{I}_{\mathrm{dbl}}
=
\bigl(\{\operatorname{next}\}\times(\{s\}\cup[N])\bigr)
\cup
\bigl(\{\operatorname{prev}\}\times(\{s\}\cup[N])\bigr)
\cup
\bigl(\{G\}\times[N]\bigr).
\]
A query to a coordinate of the form \((\operatorname{next},u)\) returns \(\operatorname{next}(u)\), a query
to a coordinate of the form \((\operatorname{prev},u)\) returns \(\operatorname{prev}(u)\), and a query to a
coordinate of the form \((G,v)\) returns \(g(v)\). Thus the double linked-list
model is again an ordinary finite-alphabet quantum query model, now with three
disjoint families of query coordinates. Consequently, the negative-weight
adversary bound and its composition theorem apply directly in this model as
well.

To prove the lower bound, we use the negative-weight adversary bound $\mathrm{Adv}^{\pm}$, which is the standard general method for proving quantum query lower bounds for partial Boolean functions. 
 Let $ F:\Dom(F)\subseteq \Sigma^n\to\{0,1\}$
be a partial Boolean function.  An adversary matrix for \(F\) is a real matrix
\(\Gamma\) whose rows are indexed by \(1\)-inputs \(x\in F^{-1}(1)\) and whose
columns are indexed by \(0\)-inputs \(y\in F^{-1}(0)\).  For each query
coordinate \(i\in[n]\), let \(\Delta_i\) be the \(0\)-\(1\) matrix of the same
dimensions, defined by $ \Delta_i(x,y)=1
  \ \Longleftrightarrow\ 
  x_i\ne y_i .$
The negative-weight adversary bound~\cite{HLS07} is
\[
  \Advpm(F)
  =
  \max_{\Gamma\ne 0}
  \frac{\|\Gamma\|}
       {\max_i \|\Gamma\circ \Delta_i\|},
\]
where \(\|\cdot\|\) denotes spectral norm and \(\circ\) denotes entrywise
product. 

For our problem each input is a pair of oracles.  Let
\[
X=(f,g)
\qquad\text{and}\qquad
Y=(f',g')
\]
be two valid inputs.  A successor query at \(u\in\{s\}\cup[N]\)
distinguishes \(X\) from \(Y\) exactly when the two successor oracles give
different answers at \(u\).  Thus we define
$\Delta^f_u(X,Y)=1
\iff
f(u)\neq f'(u)$.
Similarly, a marking query at \(v\in[N]\) distinguishes \(X\) from \(Y\)
exactly when the two marking oracles give different answers at \(v\).  Thus
we define
$\Delta^g_v(X,Y)=1
\iff
g(v)\neq g'(v)$.
Therefore the denominator in the adversary bound is
\[
\max\left\{
\max_{u\in\{s\}\cup[N]}\|\Gamma\circ\Delta^f_u\|,
\;
\max_{v\in[N]}\|\Gamma\circ\Delta^g_v\|
\right\}.
\]

For a partial Boolean function $F$, the quantity $\mathrm{Adv}^{\pm}(F)$ measures the query hardness of $F$, and in fact
\[
Q(F)=\Theta(\mathrm{Adv}^{\pm}(F))
\]
for bounded-error quantum query complexity~\cite[Theorem~1.1]{LeeMittalReichardtSpalekSzegedy11}.

We will also use the fact that the negative weight adversary bound composes.  Let
\[
F:\operatorname{Dom}(F)\subseteq \{0,1\}^m \to \Lambda
\qquad\text{and}\qquad
G:\operatorname{Dom}(G)\subseteq \Sigma^r \to \{0,1\}
\]
be partial functions, where $\Lambda,\Sigma$ are finite alphabets. The block composition $F\circ G^m$ is the partial function on $m$ pairwise disjoint input blocks $x^{(1)},\dots,x^{(m)}\in \operatorname{Dom}(G)$ defined by
\[
(F\circ G^m)(x^{(1)},\dots,x^{(m)})
=
F\bigl(G(x^{(1)}),\dots,G(x^{(m)})\bigr),
\]
whenever the output tuple $\bigl(G(x^{(1)}),\dots,G(x^{(m)})\bigr)$ lies in $\operatorname{Dom}(F)$. Thus each copy of $G$ acts on its own disjoint block, and the outer function $F$ sees only the resulting $m$ output bits. In the Boolean-output setting relevant here, one has
\[
\mathrm{Adv}^{\pm}(F\circ G^m) \ge \mathrm{Adv}^{\pm}(F)\,\mathrm{Adv}^{\pm}(G)
\]
by the adversary composition theorem; see~\cite[Lemma~5.2]{LeeMittalReichardtSpalekSzegedy11}.

Although we formulate our lower bounds using the general (negative-weight) adversary bound $\operatorname{Adv}^{\pm}$, the explicit adversary matrices
constructed are in fact entrywise nonnegative. 
We nevertheless use the general adversary framework because our oracle
problems are naturally partial finite-alphabet query problems. The composition theorem for the general adversary bound
that we use above~\cite[Lemma~5.2]{LeeMittalReichardtSpalekSzegedy11} applies directly in this setting. By comparison, the standard composition theorem for the
nonnegative adversary bound in~\cite{HLS07} seems to be stated for
Boolean-coordinate inputs.

\section{Single Linked List} \label{sec:singlelinkedlist}

\subsection{Upper Bound} \label{sec:uppbound}

We now prove the upper bound for the hidden linked-list problem. We begin with the search problem $\HLLSearch(N,\ell)$. The same upper bound for $\HLLDEC(N,\ell)$ then follows immediately. The case $\ell=1$ is trivial: one queries $f(s)$ to obtain the unique list vertex and then queries $g$ on that vertex. Thus throughout this section we may assume $\ell\ge 2$.

Call a vertex $x\in [N]$ \emph{live} if $f(x)\neq \pp$. Define a predicate 
$\Live(x) =[f(x)\neq \pp]$ for live vertices and denote by $L=\{x\in [N]:\Live(x)=1\}$ a set of live vertices.
By the input promise, the live vertices are exactly the nonterminal list vertices.

We invoke the following subroutine obtained by amplitude amplification (Lemma~\ref{lem:exact-good-state-prep}) in the search algorithm to generate a uniform superposition over live vertices.
\begin{lemma}\label{lem:live-prep}
Assume $\ell \ge 2$, and define $L = \{x \in [N] : f(x) \neq \pp\}$.
Then $L = \{a_1,\dots,a_{\ell-1}\}$ and $|L|=\ell-1$. Moreover, there is a unitary $P_L$ such that
$P_L\ket{0} = \ket{L} = \frac{1}{\sqrt{\ell-1}}\sum_{x\in L} \ket{x},$
and $P_L$ uses $O(\sqrt{N/\ell})$ queries to $f$.
\end{lemma}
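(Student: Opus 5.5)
The plan has two parts: first identify the set $L$ from the promise on $f$, then apply Lemma~\ref{lem:exact-good-state-prep} to the predicate $\Live$.

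\emph{Identifying $L$.} By Definition~\ref{def:def1}, every off-list address $x\in[N]\setminus\{a_1,\dots,a_\ell\}$ has $f(x)=\pp$. We have $f(a_\ell)=\pp$. For $1\le i<\ell$ we have $f(a_i)=a_{i+1}\in[N]$, so $f(a_i)\neq\pp$. Hence $L=\{a_1,\dots,a_{\ell-1}\}$. Since the $a_i$ are pairwise distinct, $|L|=\ell-1$, which is at least $1$ because $\ell\ge 2$. The start symbol $s$ lies outside $[N]$, so it does not belong to $L$.

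\emph{Evaluating $\Live$ reversibly with one query.} Given $\ket{x}\ket{b}$, we proceed in three steps.
\begin{enumerate}
\item Query the coordinate $(F,x)$ into a fresh register initialized to the encoding of $0$, producing $\ket{f(x)}$.
\item XOR into $b$ the bit $[f(x)\neq\pp]$. This is an input-independent comparison against the fixed encoding of $\pp$.
\item Uncompute the fresh register with a second query.
\end{enumerate}
This uses two queries, which is a constant. The lemma's requirement of ``one query'' therefore holds up to a factor of $2$, which suffices for the $O(\cdot)$ bound. If a phase oracle is needed for amplitude amplification, the same compute–phase–uncompute pattern provides it.

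\emph{Applying Lemma~\ref{lem:exact-good-state-prep}.} Take $P=\Live$ and $t=\ell-1$; the value of $t$ is known exactly because $\ell$ is a public parameter. The lemma gives a procedure that prepares $\frac{1}{\sqrt{\ell-1}}\sum_{x\in L}\ket{x}$ using $O(\sqrt{N/(\ell-1)})$ evaluations of $P$. Since $\ell\ge2$, we have $\ell-1\ge \ell/2$, so this is $O(\sqrt{N/\ell})$ queries to $f$.

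To obtain a genuine unitary $P_L$ with $P_L\ket{0}=\ket{L}$, note that exact amplitude amplification is a fixed sequence of unitaries: the operators $A$ and $A^{-1}$, reflections, and oracle calls, possibly including one final rotation adjusted for the exact angle. Any ancillas, including the query answer registers, are returned to $\ket{0}$ by uncomputation. The whole procedure is therefore a unitary acting on $\ket{0}$. By the subspace argument in the proof of Lemma~\ref{lem:exact-good-state-prep}, the output is exactly the normalized good component of the uniform state.

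\emph{Main subtlety.} The step that needs care is exactness. The standard exact version of amplitude amplification modifies the final iteration, adjusting the phases or the angle, so that the bad component vanishes exactly. This modification relies only on $a=(\ell-1)/N$ being known, and that holds here. Nothing else in the argument depends on the input beyond the promise.
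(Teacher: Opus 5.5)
Your proposal is correct and follows essentially the same route as the paper: you read off $L=\{a_1,\dots,a_{\ell-1}\}$ directly from the promise on $f$, then invoke Lemma~\ref{lem:exact-good-state-prep} with predicate $\Live$ and $t=\ell-1$, and absorb $\sqrt{N/(\ell-1)}$ into $O(\sqrt{N/\ell})$. Your observation that a clean reversible evaluation of $\Live$ needs two queries (compute and uncompute) rather than the paper's ``one query'' is accurate, and it only changes the constant.
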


\begin{proof}
For each $1\le i<\ell$, we have $f(a_i)=a_{i+1}\neq \pp$, so $a_i\in L$. On the other hand, $f(a_\ell)=\pp$, so $a_\ell\notin L$. Finally, if $x\notin \{a_1,\dots,a_\ell\}$, then by promise $f(x)=\pp$, so $x\notin L$. Hence $L=\{a_1,\dots,a_{\ell-1}\},$ and therefore $|L|=\ell-1$.

Now consider the predicate $\Live(x)$. This predicate can be evaluated with one query to $f$, and the number of good inputs is known exactly, namely $\ell-1$. Starting from the uniform superposition on $[N]$, Lemma~\ref{lem:exact-good-state-prep} gives an exact amplitude-amplification procedure that prepares the normalized good component using
$O\!\left(\sqrt{\frac{N}{\ell-1}}\right)=O\!\left(\sqrt{\frac{N}{\ell}}\right)$
queries. Since every good basis state starts with the same amplitude in the uniform superposition on $[N]$, the resulting state is precisely
$
\ket{L}=\frac{1}{\sqrt{\ell-1}}\sum_{x\in L}\ket{x}.
$
This proves the lemma.
\end{proof}

\subsubsection*{The algorithm}

 Fix a parameter \(k \in \{1,\ldots,\ell-1\}\).     The algorithm first inspects the first $k$ vertices of the hidden list explicitly. If the marked vertex is not found there, it then searches for a \emph{live} vertex from which a short forward walk reaches the marked vertex.

To formalize this, define for $x\in L$ the predicate
$$
\Good_k(x)=1
\iff
\text{the walk }x,f(x),f^{(2)}(x),\dots,f^{(k)}(x)\text{ encounters a marked vertex.}
$$
Thus $\Good_k(x)=1$ precisely when starting from $x$ and walking forward for at most $k$ successor steps reaches the marked vertex.

The following pseudocode gives the procedure.

\begin{center}
\fbox{
\begin{minipage}{0.93\linewidth}
\small
\noindent \textbf{Pseudocode for $\HLLSearch(N,\ell)$ with parameter $k$}

\medskip

\noindent \textbf{Input:} Oracle access to \(f\) and \(g\) and a parameter \(k \in \{1,\ldots,\ell-1\}\).

\begin{enumerate}
    \item Starting from $s$, walk forward through the first $k$ list vertices and query $g$ on each one.
    \item If a marked vertex is found during this prefix scan, output it and halt.
    \item Prepare the state $\ket{L}$ using Lemma~\ref{lem:live-prep}.
    \item Run the fixed-point search procedure of Lemma~\ref{lem:fixed-point-search} over the support $L$ with predicate $\Good_k$.
    \item If the search returns a candidate $x\in L$, walk forward from $x$ for at most $k$ successor steps, querying $g$ along the way.
    \item If a marked vertex is found during this recovery walk, output it; otherwise output failure.
\end{enumerate}
\end{minipage}
}
\end{center}

We now justify a key ingredient used by the pseudocode: that $\Good_k$ can be checked efficiently.

\begin{lemma}[Checking $\Good_k$]\label{lem:good-check}
The predicate $\Good_k(x)$ can be evaluated reversibly using $O(k)$ queries to $(f,g)$.
\end{lemma}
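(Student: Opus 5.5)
We evaluate $\Good_k(x)$ by a straight-line reversible circuit that walks forward $k$ steps from $x$, testing each visited vertex with $g$ along the way, and then uncomputes everything except the answer bit. The forward computation uses registers $R_0,\dots,R_k$ for addresses (over the alphabet $[N]\cup\{\pp\}$), bits $b_0,\dots,b_k$ for marks, and an output bit. Starting with $R_0=x$, we carry out the following for $j=0,\dots,k$:
\begin{itemize}
\item query the marking oracle at $R_j$, XORing $g(R_j)$ into $b_j$; if $R_j=\pp$, a controlled operation skips the query and leaves $b_j=0$;
\item for $j<k$, query the successor oracle at $R_j$, XORing $f(R_j)$ into the fresh register $R_{j+1}$; if $R_j=\pp$, we instead write $\pp$ into $R_{j+1}$ with an input-independent unitary.
\end{itemize}
A controlled operation on the symbol $\pp$ costs no queries. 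Since the oracle is a single combined oracle whose coordinate register can be set to $(F,R_j)$ or $(G,R_j)$ coherently, each step uses one query of each type. The forward pass therefore uses at most $2(k+1)$ queries.

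\emph{Correctness.} The registers hold exactly $x,f(x),\dots,f^{(k)}(x)$, where once $\pp$ is reached every later register holds $\pp$. Because $g$ vanishes off the list, and $\pp$ positions are set to $0$ by convention, we have
\[
\Good_k(x)=\textstyle\bigvee_{j} b_j .
\]
We compute this OR into the output qubit with an input-independent multi-controlled gate. We then run the forward pass in reverse. Each oracle call is self-inverse and all registers are in computational basis states for each basis input $x$, so the reverse pass returns every work register to $\ket{0}$. The result is the clean map $\ket{x}\ket{z}\mapsto\ket{x}\ket{z\oplus\Good_k(x)}$, with total cost $4(k+1)=O(k)$ queries.

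\emph{Main obstacle.} The step requiring care is the handling of the terminator $\pp$, which can appear before $k$ steps when $x$ lies near the end of the list. The oracle $f$ is defined only on $\{s\}\cup[N]$, so a query at $\pp$ must be avoided. This is why the dummy branch controlled on $R_j=\pp$ is needed: it keeps the circuit uniform across the superposition, with the same number of oracle calls on every branch, and garbage-free.

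For use in Lemma~\ref{lem:fixed-point-search}, $\Good_k$ is only evaluated on basis states $x\in L$, but the circuit above is well defined on all of $[N]$, so this restriction causes no issue.
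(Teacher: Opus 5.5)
Your proposal is correct and is essentially the paper's proof. Both walk forward from $x$ for $k$ successor steps, query $g$ at each visited vertex, and switch to a fixed dummy branch once $\bot$ is reached, then write the OR of the recorded mark bits into the target qubit and uncompute by reversing the walk, for $O(k)$ queries in total. You make the registers and the compute--copy--uncompute step more explicit; the only nit is that a ``skipped'' query on the $\bot$ branch should formally be an unconditional oracle call at a fixed dummy coordinate whose answer is uncomputed later, which your uniform-branch remark already intends.
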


\begin{proof}
Given $x$, compute the forward walk $x,\ f(x),\ f^{(2)}(x),\ \dots,\ f^{(k)}(x),$
querying $g$ on each visited vertex and recording in an ancilla whether a marked vertex has been seen. If the walk reaches $\pp$ before $k$ steps, continue instead in a fixed dummy sink state so that the computation still has exactly $k$ steps. After writing the value of $\Good_k(x)$ into a target qubit, reverse the entire computation to uncompute all workspace. The total query cost is $O(k)$.
\end{proof}

\subsubsection*{Correctness}

The reason for scanning the first $k$ vertices explicitly is that, if the marked vertex lies deeper in the list, then many \emph{live} vertices become good starting points.

\begin{lemma}\label{lem:good-fraction}
Assume the first $k$ list vertices are unmarked and the unique marked vertex is $a_t$ for some $t>k$. Then among the live vertices $L=\{a_1,\dots,a_{\ell-1}\}$, the number of good vertices is either $k$ or $k+1$. In particular,
$|\{x\in L : \Good_k(x)=1\}| \ge k.$
\end{lemma}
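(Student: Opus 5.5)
The plan is to characterize exactly which live vertices are good and then count them. The live set is $L=\{a_1,\dots,a_{\ell-1}\}$ by Lemma~\ref{lem:live-prep}, and the marked vertex is $a_t$ with $k<t\le\ell$. For $x=a_i\in L$, the walk $x,f(x),\dots,f^{(k)}(x)$ visits $a_i,a_{i+1},\dots,a_{\min(i+k,\ell)}$. After reaching $a_\ell$ it hits $\pp$; I will use the dummy-sink convention of Lemma~\ref{lem:good-check}, under which no further vertex is marked. Since $g$ vanishes off the list and only $a_t$ is marked, the first claim to establish is
\[
\Good_k(a_i)=1 \iff i\le t\le i+k \iff t-k\le i\le t .
\]

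The second step is to intersect this window with the live indices $1\le i\le \ell-1$. Because $t>k$, we have $t-k\ge 1$, so the lower endpoint is never clipped. The only possible clipping is at the upper end, which gives two cases.
\begin{itemize}
\item If $t\le \ell-1$, every index $i\in\{t-k,\dots,t\}$ is live, so there are exactly $k+1$ good vertices.
\item If $t=\ell$, the vertex $a_\ell$ is not live, so the good vertices are $a_{\ell-k},\dots,a_{\ell-1}$, which number exactly $k$.
\end{itemize}
Note that $\ell-k\ge 1$ because $k\le\ell-1$. In both cases the count is at least $k$, which is the claimed bound.

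The only delicate point is the boundary behaviour of the walk near the tail: it must not read $g$ at any spurious address after $\pp$. The sink convention guarantees this. The hypothesis that the prefix $a_1,\dots,a_k$ is unmarked is what ensures $t>k$, and so keeps the lower end of the window inside $L$. Everything else is routine index bookkeeping.
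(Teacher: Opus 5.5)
Your proposal is correct and follows essentially the same argument as the paper. You characterize the good live indices as $t-k\le i\le t$, use $t>k$ so the lower endpoint stays $\ge 1$, and split into $t=\ell$ (count $k$) versus $t\le \ell-1$ (count $k+1$); your explicit remark on the dummy-sink behaviour past $a_\ell$ is a harmless extra detail.
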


\begin{proof}
Fix a \emph{live} vertex $a_j$, where $1\le j\le \ell-1$. By definition, $\Good_k(a_j)=1$ exactly when the marked vertex $a_t$ appears among $a_j,\ a_{j+1},\ \dots,\ a_{j+k}.$
Equivalently, $0\le t-j\le k.$
Thus $a_j$ is good exactly when $t-k\le j\le t.$
Since $j$ must also satisfy $1\le j\le \ell-1$, the good live indices are precisely $\{j: t-k\le j\le \min\{t,\ell-1\}\}.$

Because $t>k$, the lower endpoint $t-k$ is at least $1$. There are now two cases.

If $t=\ell$, then the upper endpoint is $\ell-1$, so the number of good indices is
$(\ell-1)-(t-k)+1 = k.$

If $t\le \ell-1$, then the upper endpoint is $t$, so the number of good indices is
$t-(t-k)+1 = k+1.$

Hence the number of good \emph{live} vertices is either $k$ or $k+1$, and in particular it is at least $k$.
\end{proof}

We now prove that the pseudocode above correctly solves $\HLLSearch(N,\ell)$.

If the marked vertex lies among the first $k$ list vertices, then the initial prefix scan finds it directly, so the algorithm is correct in this case. Now suppose that the first $k$ list vertices are unmarked.

First consider the no-instance. If there is no marked vertex anywhere on the list, then by definition $\Good_k(x)=0$ for every \emph{live} vertex $x\in L$. Therefore the fixed-point search procedure is in its no-case. By Lemma~\ref{lem:fixed-point-search}, the algorithm correctly outputs failure.

Next consider the yes-instance, and let the unique marked vertex be $a_t$ with $t>k$. By Lemma~\ref{lem:good-fraction}, at least $k$ of the $\ell-1$ \emph{live} vertices satisfy $\Good_k$. Therefore, in the state
$\ket{L}=\frac{1}{\sqrt{\ell-1}}\sum_{x\in L}\ket{x},$
the total good weight is at least
$\frac{k}{\ell-1}.$

Applying Lemma~\ref{lem:fixed-point-search} with state-preparation unitary $P_L$, predicate $\Good_k$, and
$\varepsilon = \frac{k}{\ell-1},$
we obtain, with constant success probability, a candidate $x\in L$ such that $\Good_k(x)=1$.

By the definition of $\Good_k$, once such an $x$ has been found, a forward walk from $x$ of length at most $k$ must encounter the marked vertex. Therefore the final recovery walk indeed finds and outputs the marked vertex.

We have thus shown that the algorithm outputs the unique marked vertex when one exists, and outputs failure otherwise. In particular, it solves $\HLLSearch(N,\ell)$ with bounded error. Since the decision problem is obtained simply by checking whether the search algorithm found a marked vertex, the same argument yields an upper bound for $\HLLDEC(N,\ell)$ as well.

\subsubsection*{Query complexity}

We now analyze the number of oracle queries used by the algorithm.

The initial prefix scan costs $O(k)$ queries. If the scan does not find a marked vertex, we prepare $\ket{L}$ using Lemma~\ref{lem:live-prep}, which costs $O(\sqrt{N/\ell})$ queries to $f$.

By Lemma~\ref{lem:good-check}, one evaluation of $\Good_k$ costs $O(k)$ queries.

By Lemma~\ref{lem:good-fraction}, in the yes-case the fraction of good basis states inside $\ket{L}$ is at least $\frac{k}{\ell-1},$
while in the no-case it is $0$. Hence Lemma~\ref{lem:fixed-point-search} applies with
$\varepsilon=\frac{k}{\ell-1},$
and uses
$O\!\left(\sqrt{\frac{\ell-1}{k}}\right)=O\!\left(\sqrt{\frac{\ell}{k}}\right)$
outer iterations.

Each outer iteration uses one call to $P_L$, one call to $P_L^\dagger$, and one evaluation of $\Good_k$. Finally, after a good starting point has been found, the recovery walk costs an additional $O(k)$ queries. Therefore the total query complexity is
$O\!\left(k + \left(\sqrt{\frac{N}{\ell}} + k\right)\sqrt{\frac{\ell}{k}}\right).$

The additive $O(k)$ term is in fact dominated by the second term, because $k\le \ell$ implies
$\left(\sqrt{\frac{N}{\ell}} + k\right)\sqrt{\frac{\ell}{k}}
\ge k\sqrt{\frac{\ell}{k}}
= \sqrt{k\ell} \ge k.$ 

 We now choose \(k\). If \(N \le \ell^3\), set
$k = \min\left\{\ell-1,\left\lceil \sqrt{\frac{N}{\ell}} \right\rceil\right\}$.
Then \(1 \le k \le \ell-1\), and $\varepsilon = \frac{k}{\ell-1}$
used in Lemma~\ref{lem:fixed-point-search} satisfies \(\varepsilon \in (0,1]\). Moreover, since
\(N \ge \ell\) and \(N \le \ell^3\), we have
$1 \le \sqrt{\frac{N}{\ell}} \le \ell,$
and this choice still gives
$k = \Theta\left(\sqrt{\frac{N}{\ell}}\right).$
Thus the two terms \(\sqrt{N/\ell}\) and \(k\) are equal up to constants.

Substituting this choice gives
$ O\!\left(
k + \left(\sqrt{\frac{N}{\ell}} + k\right)\sqrt{\frac{\ell}{k}}
\right)
=
O\!\left(
k\sqrt{\frac{\ell}{k}}
\right)
=
O(\sqrt{k\ell})
=
O\!\bigl((N\ell)^{1/4}\bigr).$

If $N>\ell^3$, then we simply traverse the hidden list from the head and query $g$ on every visited vertex. This uses $O(\ell)$ queries.

We have therefore proved the following theorem,

\begin{theorem}[Upper bound]\label{thm:upper}
For all $N \ge \ell \ge 1$,
$$
Q(\HLLDEC(N,\ell)),\; Q(\HLLSearch(N,\ell))
  \in O\!\left(\min\{\ell,(N\ell)^{1/4}\}\right).
$$
\end{theorem}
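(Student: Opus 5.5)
The statement collects the preceding analysis, so the proof assembles the algorithm with a case split on the relation between $N$ and $\ell$. Since $\HLLDEC(N,\ell)$ reduces to $\HLLSearch(N,\ell)$ with no extra queries (output $1$ exactly when a marked vertex is returned), I will prove everything for the search version. The trivial case $\ell=1$ is handled by querying $f(s)$ and then $g$ at that vertex, for $O(1)=O(\min\{\ell,(N\ell)^{1/4}\})$ queries. From here on I assume $\ell\ge 2$.

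\emph{Case $N>\ell^3$.} Traverse from $s$ and query $g$ on each of the $\ell$ list vertices. This uses $O(\ell)$ queries and is always correct. In this regime $\ell<(N\ell)^{1/4}$, so the cost is $O(\min\{\ell,(N\ell)^{1/4}\})$. More generally, traversal gives $O(\ell)$ for every $N$, so it suffices to show an $O((N\ell)^{1/4})$ bound whenever $N\le \ell^3$.

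\emph{Case $N\le\ell^3$.} Run the pseudocode with $k=\min\{\ell-1,\lceil\sqrt{N/\ell}\rceil\}$. Correctness follows from three facts:
\begin{itemize}
\item the prefix scan catches any mark among $a_1,\dots,a_k$;
\item otherwise Lemma~\ref{lem:good-fraction} gives good weight at least $k/(\ell-1)$ in $\ket{L}$ (prepared exactly by Lemma~\ref{lem:live-prep}), while in the no-case the good weight is $0$;
\item so Lemma~\ref{lem:fixed-point-search}, with $\varepsilon=k/(\ell-1)\in(0,1]$, returns a good $x$ with constant probability, and the recovery walk then finds the mark. In the no-case, final verification by querying $g$ along the walk ensures we never output a false mark.
\end{itemize}
Constant-factor repetition boosts the success probability to $2/3$.

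For the cost, each outer iteration makes one call each to $P_L$ and $P_L^\dagger$, costing $O(\sqrt{N/\ell})$ by Lemma~\ref{lem:live-prep}, and one reversible $\Good_k$ check, costing $O(k)$ by Lemma~\ref{lem:good-check}. There are $O(\sqrt{\ell/k})$ iterations, so the total is
\[
O\Bigl(k+\bigl(\sqrt{N/\ell}+k\bigr)\sqrt{\ell/k}\Bigr).
\]
Since $\ell\le N\le\ell^3$, we have $1\le\sqrt{N/\ell}\le\ell$, so $k=\Theta(\sqrt{N/\ell})$; in the truncated case $k=\ell-1\ge \ell/2$ still holds. The bound therefore becomes $O(\sqrt{k\ell})=O((N\ell)^{1/4})$, using $k\le\sqrt{k\ell}$ to absorb the additive term.

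The only delicate point is the edge behaviour of $k$, namely truncation at $\ell-1$ and the ceiling when $\sqrt{N/\ell}$ is near $1$. These must not break either $\varepsilon\le 1$ or $k=\Theta(\sqrt{N/\ell})$, and the inequalities above handle both. Combining the two cases gives the theorem.
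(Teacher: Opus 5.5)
Your proposal is correct and follows the paper's argument essentially step for step. You use the same ingredients: the prefix scan, then fixed-point search over $\ket{L}$ with predicate $\Good_k$, the cost bound $O\bigl(k+(\sqrt{N/\ell}+k)\sqrt{\ell/k}\bigr)$ with $k=\min\{\ell-1,\lceil\sqrt{N/\ell}\rceil\}$ for $N\le\ell^3$, and plain traversal for $N>\ell^3$. Your explicit check that the truncation $k=\ell-1\ge\ell/2$ keeps $k=\Theta(\sqrt{N/\ell})$, and your remark on constant-factor repetition (valid because the error is one-sided), are minor clarifications the paper leaves implicit.
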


\subsection{Lower Bound}
\label{sec:lower-bounds}

We now prove the matching lower bound
\[
  Q(\HLLDEC(N,\ell)),\ Q(\HLLSearch(N,\ell))
  \in
  \Omega\!\left(\min\{\ell,(N\ell)^{1/4}\}\right).
\]

The lower bound essentially has two ingredients.  The first is a special case in which the ambient universe
has the ``critical'' size \(L^3\) for a list segment of length \(L\).  In that
case we will prove, by an explicit adversary matrix derived from the negative adversary lower bound~\cite{HLS07, LeeMittalReichardtSpalekSzegedy11}, that even a quantum
algorithm needs \(\Omega(L)\) queries.  

Once this critical case is known, the rest of the lower bound is a packing
argument.  Think of the critical instance as a hard segment: it is a hidden segment
with a known entrance and a known exit.  Because the entrance and exit are
public, we can place many such segments in a row, joined by fixed unmarked bridge
vertices, and obtain one longer hidden linked list.  If we promise that at most
one segment contains a marked vertex, then deciding whether the whole list contains
a mark is exactly the problem of taking the OR of the answers of the segments,
under the unique-solution promise.

This is where the parameter \((N\ell)^{1/4}\) comes from.  Suppose we pack
\(m\) hard segments, each of length \(L_0\), with each segment living at its own cubic
threshold ambient space \(N_0=L_0^3\).  The boundary lower bound gives a cost \(\Omega(L_0)\)
inside one segment.  The outer unique-OR over \(m\) segments contributes the usual
\(\Omega(\sqrt m)\) quantum search cost.  By the adversary composition theorem,
the packed instance therefore has lower bound $\Omega(\sqrt m\,L_0)$.
Choosing \(L_0\) and \(m\) so that the packed construction fits inside the
available \((N,\ell)\)-instance will make this quantity equal, up to constants,
to \((N\ell)^{1/4}\) in the regime \(N\le \ell^3\).  In the complementary
regime \(N\ge \ell^3\), a single critical segment of length \(\Theta(\ell)\)
already fits, giving the linear lower bound \(\Omega(\ell)\).

Thus the proof is organized as follows.  We first define the segment, which we call
an anchored segment.  We then prove, assuming the critical lower bound for one
anchored segment, that the desired lower bound follows for all \(N\) and
\(\ell\).  Only after this reduction is complete do we return to the critical
case and prove the anchored-segment lower bound of $\Omega(L)$ by constructing an adversary
matrix.

\subsubsection{Anchored segments}
\label{subsec:anchored-segments}

The hidden linked-list problem has a public start symbol \(s\), but the last
vertex simply points to \(\bot\).  For composition it is more convenient to use
segments with both a public entry point and a public exit point.  The exit point
will later serve as the bridge to the next segment.

\begin{definition}[Anchored segment decision problem]
\label{def:asegdec}
Fix integers \(N\ge L\ge 1\).  An input to
\(\ASEGDEC(N,L)\) consists of oracle access to
\[
  f:\{u_{\mathrm{in}}\}\cup [N]\to [N]\cup\{u_{\mathrm{out}},\bot\}
\]
and
\[
  g:[N]\to\{0,1\},
\]
with the promise that there are pairwise distinct vertices
\(x_1,\ldots,x_L\in [N]\) such that
\[
  f(u_{\mathrm{in}})=x_1,\qquad
  f(x_i)=x_{i+1}\quad (1\le i<L),\qquad
  f(x_L)=u_{\mathrm{out}},
\]
and \(f(y)=\bot\) for every
\(y\in [N]\setminus\{x_1,\ldots,x_L\}\).
The marking promise is
\[
  \sum_{i=1}^L g(x_i)\le 1,
  \qquad
  g(y)=0\quad\text{off the segment}.
\]
The output is \(1\) if and only if some segment vertex is marked.

We use the same query convention as in Definition~\ref{def:def1}: one query means one reversible
successor query to \(f\) or one reversible marking query to \(g\), with the range of \(f\)
encoded using a fixed binary encoding of \([N]\cup\{u_{\mathrm{out}},\bot\}\).  

\end{definition}

The lower bound for anchored segments at the cubic threshold is the core technical statement of the section.

\begin{theorem}[Boundary lower bound]
\label{thm:boundary-aseg}
There is an absolute constant \(c_{\mathrm{bd}}>0\) such that, for every
integer \(L\ge 1\),
\[
  \Advpm(\ASEGDEC(L^3,L))\ge c_{\mathrm{bd}} L .
\]
\end{theorem}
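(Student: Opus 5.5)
The plan is to build an explicit nonnegative adversary matrix on a layered family of instances, and to bound its spectral norm and its query-restricted norms using tensor-product structure. Set $W=L^2$ and split $[L^3]$ into $L$ disjoint layers $V_1,\ldots,V_L$ with $|V_i|=W$. The hard inputs are paths $P=(x_1,\ldots,x_L)$ with $x_i\in V_i$. The yes-input $X_P$ uses successor oracle $f_P$ and marks $x_L$. The no-input $Y_P$ uses the same $f_P$ and marks nothing. Rows and columns are therefore both indexed by $V_1\times\cdots\times V_L$, and $\Gamma$ acts on $(\mathbb{R}^W)^{\otimes L}$.

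Let $\Pi_j = I^{\otimes (j-1)}\otimes (J/W)^{\otimes (L-j+1)}$ for $1\le j\le L+1$, where $J$ is the $W\times W$ all-ones matrix. Then $\Pi_{L+1}=I$. The $\Pi_j$ are commuting projections, and $\Pi_j(P,Q)\neq 0$ exactly when $P$ and $Q$ agree on their first $j-1$ coordinates. I would take
\[
\Gamma=\sum_{j=1}^{L+1} c_j\,\Pi_j ,\qquad c_j=j .
\]
This realizes the weighting described in the overview: a pair whose first split occurs late receives weight from more terms, hence larger weight. The numerator is easy. The vector $\mathbf 1^{\otimes L}$ is a $1$-eigenvector of every $\Pi_j$, so $\|\Gamma\|\ge\sum_j c_j=\Theta(L^2)$.

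It then suffices to show that every query matrix has norm $O(L)$. I would handle the query types as follows.
\begin{itemize}
\item \emph{Marking query at $v\in V_L$.} $X_P$ and $Y_Q$ differ at $v$ iff $x_L=v$. So $\Gamma\circ\Delta^g_v$ is $\Gamma$ with rows restricted to $x_L=v$, which multiplies the last tensor factor by $e_ve_v^{\top}$ on the left. For $j\le L$ this replaces the factor $J/W$ by $e_v\mathbf 1^{\top}/W$, of norm $1/\sqrt W=1/L$. The identity term contributes at most $c_{L+1}=L+1$. The total is $O(L)+\sum_{j\le L} j/L=O(L)$.
\item \emph{Query at $u_{\mathrm{in}}$.} It distinguishes $P,Q$ iff $x_1\ne y_1$. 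Only the $j=1$ term survives, masked to its off-diagonal part in the first factor, giving norm $O(c_1)=O(1)$.
\item \emph{Successor query at $u\in V_i$.} The oracles differ at $u$ iff exactly one of $P,Q$ passes through $u$, or both do but their successors differ. I would write the mask as a combination of a few rank-one-in-a-factor pieces: $e_u$ on the row side or the column side of factor $i$, tensored with ``$\ne$'' structure in factor $i+1$. A term $\Pi_j$ with $j\ge i+2$ forces agreement at layers $i$ and $i+1$, so it is killed entirely. The term $j=i+1$ keeps agreement at layer $i$ but allows disagreement at $i+1$; it is a single term of weight $c_{i+1}\le L+1$. Each term with $j\le i$ has the factor $J/W$ at layer $i$ cut down by $e_u$, costing a factor $1/\sqrt W=1/L$, and the remaining factors have norm $O(1)$ (a masked $J/W$ or $(J-I)/W$ factor). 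The total is $O(L)+\sum_{j\le i}O(j/L)=O(L)$.
\end{itemize}

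The main obstacle I expect is this successor-query bound. Entrywise masking by $\Delta$ does not preserve tensor structure, so I would decompose $\Delta^f_u$ explicitly into $O(1)$ Kronecker-structured pieces of the form $[x_i=u]$, $[y_i=u]$, $[x_i=y_i=u]$ and $[x_{i+1}\ne y_{i+1}]$. I would then bound each piece using $\|A\otimes B\|=\|A\|\|B\|$ and $\|(J-I)/W\|\le 1$, checking that no piece loses the $1/\sqrt W$ gain for $j\le i$. This is exactly where the cubic scale enters: the sum of $L$ weights of size up to $L$ is tamed only because $\sqrt W=L$. With the maximum query norm $O(L)$ and numerator $\Omega(L^2)$, we get $\Advpm(\ASEGDEC(L^3,L))\ge c_{\mathrm{bd}}L$, with small $L$ absorbed into the constant.
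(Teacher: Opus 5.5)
Your proof is correct. The hard family, the scale $W=L^2$, the uniform test vector and the query-by-query tensor-factor analysis are the same as in the paper; what differs is the adversary matrix.

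The two matrices sit in one common family. With $P=J/W$ we have $B=(J-I)/\sqrt W=\sqrt W\,P-W^{-1/2}I$, so the paper's first-disagreement blocks satisfy $A^{(m)}=I^{\otimes(m-1)}\otimes B\otimes P^{\otimes(L-m)}=\sqrt W\,\Pi_m-W^{-1/2}\,\Pi_{m+1}$. Hence the paper's $\Gamma_r$ is your family $\sum_j c_j\Pi_j$ with an essentially flat profile:
\begin{itemize}
\item $c_1=\sqrt W$;
\item $c_j=\sqrt W-W^{-1/2}$ for $2\le j\le L$;
\item $c_{L+1}=-W^{-1/2}$, chosen so that same-path pairs $(X_P,Y_P)$ get weight zero.
\end{itemize}
You instead use the increasing profile $c_j=j$ and give same-path pairs weight about $L+1$.

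What your choice buys and what it costs:
\begin{itemize}
\item $\Gamma$ is a positive combination of commuting projections built only from $I$ and $P$ factors. Every filtered local factor is therefore a masked $I$ or $P$, and no $B$-factor appears.
\item The entry query $\operatorname{next}(u_{\rm in})$ drops from the paper's $\Theta(\sqrt W)$ to $O(1)$.
\item The marking query now pays $L+1$ through the identity term.
\item The $P$-suffix terms of an internal successor query at layer $i$ sum to $\sum_{j\le i} j\cdot O(W^{-1/2})=O(i^2/L)$, instead of the paper's $O(i)$ (there each term is $\|B\|\,\|H_a\|\le 2$).
\end{itemize}

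Both profiles balance at $O(L)$ exactly when $\sqrt W=L$, so the cubic threshold enters the same way. For a general width $W$, your matrix gives a ratio of order $\min\{L,\sqrt W\}$, matching the paper's remark.

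The paper's flat, zero-diagonal form is the one it reuses as the $X$-core matrix, and reverses for the $Y$-buffer, in the double linked-list proof. Your version gives a slightly leaner self-contained argument for the single-list boundary case.
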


We defer the proof of Theorem~\ref{thm:boundary-aseg} to
Section~\ref{subsec:boundary-proof}.  For now, we explain why this boundary
case implies the full lower bound.

\subsubsection{Two standard adversary facts}
\label{subsec:adversary-standard-facts}

We shall use two standard properties of the negative-weight adversary bound.

First, the adversary bound is monotone under restrictions.  If a problem \(F\)
has a restriction that is isomorphic to another problem \(G\), then
\[
  \Advpm(F)\ge \Advpm(G).
\]
Indeed, any adversary matrix for \(G\) can be viewed as an adversary matrix for
\(F\) by placing it on the restricted set of inputs and filling all other rows
and columns with zeros.

Second, we use the repeated-block form of the adversary composition theorem.
Let
  $F:\Dom(F)\subseteq\{0,1\}^m\to\Lambda$
be a partial outer function, and let $G:\Dom(G)\subseteq\Sigma^r\to\{0,1\}$
be a partial Boolean inner function.  The block composition \(F\circ G^m\)
takes \(m\) disjoint inputs
  $y^{(1)},\ldots,y^{(m)}\in \Dom(G)$
and returns
\[
  (F\circ G^m)(y^{(1)},\ldots,y^{(m)})
  =
  F(G(y^{(1)}),\ldots,G(y^{(m)})),
\]
whenever the tuple of inner outputs lies in \(\Dom(F)\).  The adversary
composition theorem gives
\[
  \Advpm(F\circ G^m)
  \ge
  \Advpm(F)\Advpm(G).
\]
We shall apply this only when \(F\) is unique-OR.

Let \(\UOR_m\) denote OR on \(m\) bits under the promise that at most one
input bit is equal to \(1\).  This is unique-solution unstructured search.
By Grover's algorithm~\cite{Grover97} and the standard quantum lower bound for
unstructured search~\cite{BBBV97},
  $Q(\UOR_m)=\Theta(\sqrt m)$.
Using the tight characterization \(Q(F)=\Theta(\Advpm(F))\)
from~\cite{LeeMittalReichardtSpalekSzegedy11}, we obtain
\[
  \Advpm(\UOR_m)=\Theta(\sqrt m).
\]

\subsubsection{Serial packing}
\label{subsec:serial-packing}

The next lemma explains how to place several anchored segments in series inside
one hidden linked list.  This is the mechanism that converts the boundary lower
bound into a smaller universe lower bound.

\begin{lemma}[Serial packing]
\label{lem:serial-packing}
Let \(m,N_0,L_0\ge 1\) with \(N_0\ge L_0\). There is a restriction of
\[
  \HLLDEC(mN_0+m,\ mL_0+m)
\]
to inputs obtained by concatenating \(m\) anchored segments of parameters
\((N_0,L_0)\) in series.  Under the promise that at most one segment is
positive, this restricted problem is exactly
\[
  \UOR_m\circ \ASEGDEC(N_0,L_0)^m .
\]
\end{lemma}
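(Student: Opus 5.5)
The plan is to write down the embedding explicitly and then check that it is an input-level isomorphism that respects the block structure. Partition the ambient universe $[mN_0+m]$ into $m$ disjoint blocks $B_1,\dots,B_m$ of size $N_0$ and $m$ distinguished bridge addresses $b_1,\dots,b_m$. Fix bijections $\phi_j:[N_0]\to B_j$. Given $m$ anchored-segment inputs $(f^{(j)},g^{(j)})$ for $\ASEGDEC(N_0,L_0)$, build a single list input $(f,g)$ as follows. Set $f(s)=\phi_1(f^{(1)}(u_{\mathrm{in}}))$. For $x=\phi_j(z)\in B_j$, let $f(x)$ be the translate of $f^{(j)}(z)$: $\phi_j(w)$ if $f^{(j)}(z)=w\in[N_0]$, $b_j$ if $f^{(j)}(z)=u_{\mathrm{out}}$, and $\bot$ if $f^{(j)}(z)=\bot$. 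For the bridges, set $f(b_j)=\phi_{j+1}(f^{(j+1)}(u_{\mathrm{in}}))$ for $j<m$ and $f(b_m)=\bot$. For marks, set $g(\phi_j(z))=g^{(j)}(z)$ and $g(b_j)=0$.

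First I will verify the promise. The resulting list is $s\to x^{(1)}_1\to\cdots\to x^{(1)}_{L_0}\to b_1\to x^{(2)}_1\to\cdots\to b_m\to\bot$. It consists of $mL_0+m$ pairwise distinct vertices, since the blocks are disjoint and the bridges lie outside them. Every off-list address maps to $\bot$, because off-segment addresses inside $B_j$ inherit $\bot$ from $f^{(j)}$. Hence this is a valid $\HLLDEC(mN_0+m,mL_0+m)$ instance. Under the promise that at most one segment is positive, the list contains at most one mark, and $\HLLDEC=1$ exactly when $\bigvee_j\ASEGDEC(f^{(j)},g^{(j)})=1$, which is $\UOR_m$ of the segment outputs.

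The step that needs care is showing that the restricted problem is \emph{exactly} the block composition in the query-coordinate sense, so that $\Advpm$ is preserved. Each query coordinate of the big instance must depend on exactly one block's coordinate, through a fixed relabeling of answers. For $(F,\phi_j(z))$ and $(G,\phi_j(z))$ this is immediate. The coordinates $(F,s)$ and $(F,b_{j-1})$ both depend on the single inner coordinate $(F,u_{\mathrm{in}})$ of block $j$, and the coordinates $(G,b_j)$ and $(F,b_m)$ are constant. Constant coordinates never distinguish inputs, so they do not matter. A duplicated coordinate is also harmless: $\Delta_{(F,s)}=\Delta_{(F,u_{\mathrm{in}}^{(1)})}$ as matrices, so the denominator of the adversary bound is unchanged. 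Equivalently, I will identify $(F,s)$ with block $1$'s entry coordinate and $(F,b_{j-1})$ with block $j$'s entry coordinate, and treat $(F,b_m)$ and the $g$-bridges as constant.

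The translation maps $[N_0]\cup\{u_{\mathrm{out}},\bot\}\to B_j\cup\{b_j,\bot\}$ are injective. Therefore the map from tuples of inner inputs to big inputs is a bijection onto the restricted domain, and it preserves every $\Delta$ pattern. Consequently any adversary matrix for $\UOR_m\circ\ASEGDEC(N_0,L_0)^m$ transfers verbatim, and by the monotonicity fact in Section~\ref{subsec:adversary-standard-facts} we get $\Advpm(\HLLDEC(mN_0+m,mL_0+m))\ge\Advpm(\UOR_m\circ\ASEGDEC(N_0,L_0)^m)$. Making this coordinate bookkeeping rigorous, in particular the duplicated entry coordinate, is the main, if mild, obstacle.
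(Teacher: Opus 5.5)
Your construction is the paper's: disjoint blocks of size $N_0$ and bridges $b_1,\ldots,b_m$, with block $t$ entering from $b_{t-1}$ (where $b_0=s$) and exiting to $b_t$, $f(b_m)=\bot$, unmarked bridges, and a global mark existing iff some segment is positive. Your extra bookkeeping spells out what the paper compresses into ``the bridge data are fixed'': you identify $(F,b_{t-1})$ with block $t$'s entry coordinate $(F,u_{\mathrm{in}})$ and note that the answer relabelings are injective, so every $\Delta$ pattern is preserved (no coordinate is actually duplicated, since each inner entry coordinate corresponds to exactly one outer coordinate).
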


\begin{proof}
Choose pairwise disjoint address sets
\[
  U_1,\ldots,U_m,
  |U_t|=N_0, t\in [m],
\]
and choose distinct bridge vertices
  $b_1,\ldots,b_m$
outside \(U_1\cup\cdots\cup U_m\).  For convenience put \(b_0=s\), where \(s\)
is the public start symbol of the hidden linked-list problem.

For each \(t\in[m]\), take an input to \(\ASEGDEC(N_0,L_0)\) whose private
address universe is \(U_t\), whose entry symbol is \(b_{t-1}\), and whose exit
symbol is \(b_t\).  If the hidden segment in block \(t\) is
\[
  x_{t,1}\to x_{t,2}\to\cdots\to x_{t,L_0},
  \qquad x_{t,i}\in U_t,
\]
then in the global linked list we set
\[
  f(b_{t-1})=x_{t,1},
\]
\[
  f(x_{t,i})=x_{t,i+1}\quad (1\le i<L_0),
\]
and
\[
  f(x_{t,L_0})=b_t.
\]
Finally set
\[
  f(b_m)=\bot.
\]
All addresses not used by these blocks and bridges are made dead by setting
their successor value to \(\bot\).

The resulting hidden list is
\[
  s=b_0
  \to x_{1,1}\to\cdots\to x_{1,L_0}\to b_1
  \to x_{2,1}\to\cdots\to x_{2,L_0}\to b_2
  \to\cdots
  \to x_{m,1}\to\cdots\to x_{m,L_0}\to b_m
  \to\bot .
\]
Thus the list has exactly
\[
  mL_0+m
\]
vertices, and the ambient universe used has exactly
\[
  mN_0+m
\]
addresses.

The marking oracle is defined blockwise: on \(U_t\) it is the marking oracle of
the \(t\)-th anchored segment, and every bridge vertex is forced to be unmarked:
\[
  g(b_t)=0
  \qquad (1\le t\le m).
\]

Now encode each local anchored-segment input as a separate block
\[
  y^{(t)}\in\Dom(\ASEGDEC(N_0,L_0)).
\]
The only variable coordinates of the global input are the coordinates inside
the \(m\) disjoint local blocks.  The bridge data are fixed and carry no input
information.

The global hidden list contains a marked vertex if and only if at least one
local anchored segment is positive.  Under the promise that at most one local
segment is positive, the global decision problem is therefore precisely
\[
  \UOR_m\circ \ASEGDEC(N_0,L_0)^m .
\]
This proves the lemma.
\end{proof}

\subsubsection{From the boundary lower bound to all parameters}
\label{subsec:one-shot-amplification}

We now prove the full adversary lower bound for \(\HLLDEC(N,\ell)\), assuming
Theorem~\ref{thm:boundary-aseg}.

\begin{theorem}[Lower bound for hidden linked-list decision]
\label{thm:hll-lower-decision}
There is an absolute constant \(c>0\) such that, for all integers
\(N\ge \ell\ge 1\),
\[
  \Advpm(\HLLDEC(N,\ell))
  \ge
  c\min\{\ell,(N\ell)^{1/4}\}.
\]
\end{theorem}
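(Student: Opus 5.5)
We assume Theorem~\ref{thm:boundary-aseg} and reduce to it via Lemma~\ref{lem:serial-packing}, monotonicity under restrictions, and the composition theorem. We also need one extra monotonicity fact: \emph{padding}. An instance of $\HLLDEC(N',\ell')$ with $N'\le N$ and $\ell'\le \ell$ embeds as a restriction of $\HLLDEC(N,\ell)$. To do this, append $\ell-\ell'$ fixed unmarked vertices to the tail of the list: redirect the terminal vertex's $\bot$ to a fixed chain of fresh addresses, and make all other extra addresses dead. Care is needed because the last hidden vertex varies with the input. The clean way is to do the padding inside the packing, by making the final bridge $b_m$ point to a fixed unmarked tail chain instead of $\bot$. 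Since $b_m$ is a fixed public vertex, this tail carries no input information, so the restricted problem is still $\UOR_m\circ\ASEGDEC(N_0,L_0)^m$. Extra ambient addresses are simply set dead. Hence, whenever $mN_0+m+(\text{tail})\le N$ and $mL_0+m+(\text{tail})=\ell$, i.e.\ whenever $mN_0+m\le N$ and $mL_0+m\le \ell$ after choosing the tail length, we get
\[
\Advpm(\HLLDEC(N,\ell))\ \ge\ \Advpm(\UOR_m)\,\Advpm(\ASEGDEC(N_0,L_0))\ \ge\ c'\sqrt m\, c_{\mathrm{bd}}L_0,
\]
where we take $N_0=L_0^3$. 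The tail addresses number at most $\ell\le N$, so we budget the universe as $mN_0+m+\ell\le N$; replacing $N$ by $N/2$ only costs constants.

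\emph{Case $N\ge \ell^3$.} Take $m=1$ and $L_0=\lfloor (\ell-1)/2\rfloor$ (or handle $\ell\le 2$ trivially, since any nonconstant problem has $\Advpm\ge 1$). Then $L_0^3+1+\ell\le N$, and we get $\Omega(\ell)$.

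\emph{Case $N<\ell^3$.} Take $L_0=\max\{1,\lfloor c_1\sqrt{N/\ell}\rfloor\}$ and $m=\lfloor \ell/(2(L_0+1))\rfloor$. Since $N<\ell^3$ we have $\sqrt{N/\ell}<\ell$, so $L_0\le \ell/4$ for small $c_1$ and $m\ge 1$ once $\ell$ exceeds a constant. The length budget gives $mL_0+m\le \ell/2$. For the universe budget, $mL_0^3\approx (\ell/L_0)L_0^3=\ell L_0^2\le c_1^2 N$, and $m\le \ell\le N$, so the total fits within $N$ for small $c_1$. The bound is then
\[
\sqrt m\,L_0\ \approx\ \sqrt{\ell/L_0}\,L_0=\sqrt{\ell L_0}\ \approx\ (\ell\sqrt{N/\ell})^{1/2}=(N\ell)^{1/4}.
\]
When $\sqrt{N/\ell}<1/c_1$, i.e.\ $N=O(\ell)$, take $L_0=1$ and $m=\Theta(\ell)$. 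This needs $2m+\ell\le N$, which holds since $m\le\ell/4$ and $N\ge\ell$. It yields $\Omega(\sqrt\ell)=\Omega((N\ell)^{1/4})$.

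The main obstacle is the bookkeeping in the packing: the floors must not collapse $m$ or $L_0$ to zero, and the padding must be legitimate. In particular we need that pointing $b_m$ to a fixed tail and declaring unused addresses dead keeps every packed input inside the promise of $\HLLDEC(N,\ell)$ with list length exactly $\ell$. Small $\ell$, below an absolute constant, is handled by the trivial bound $\Advpm\ge 1$ for any nonconstant function. Combining both cases and taking the minimum of the constants gives the theorem.
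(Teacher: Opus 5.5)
Your proposal follows the paper's proof essentially step for step:
\begin{itemize}
\item small $\ell$ is absorbed into the constant;
\item when $N\ge\ell^3$, a single cubic-threshold anchored segment is used;
\item otherwise, $m\approx \ell/(2(L_0+1))$ anchored segments are packed in series, with $L_0\approx c_1\sqrt{N/\ell}$, $N_0=L_0^3$, and a fixed unmarked tail hung off $b_m$;
\item composition then gives $\Omega(\sqrt m\,L_0)=\Omega(\sqrt{\ell L_0})=\Omega((N\ell)^{1/4})$.
\end{itemize}

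The one step that is wrong as written is the address budget. Your sufficient condition $mN_0+m+\ell\le N$ is too crude when $N$ is close to $\ell$. In particular, in the $L_0=1$ subcase, the claim that $2m+\ell\le N$ ``holds since $m\le\ell/4$ and $N\ge\ell$'' is false. Take $N=\ell\ge 4$; then $m=\lfloor \ell/4\rfloor\ge 1$ and $2m+\ell>N$.

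The fix is to count exactly, as the paper does in its third feasibility condition. The tail uses only $\ell-m(L_0+1)$ fresh addresses, so the total number of addresses used is $\ell+m(N_0-L_0)$. The requirement is therefore $m(N_0-L_0)\le N-\ell$.
\begin{itemize}
\item When $L_0=1$, the left side is $0$, so the condition holds trivially.
\item When $L_0\ge 2$, you have $c_1\sqrt{N/\ell}\ge 2$, so $N\ge 4\ell/c_1^2$ and hence $N-\ell\ge N/2$ for $c_1\le 1$. Meanwhile $m(N_0-L_0)\le mL_0^3\le \ell L_0^2/2\le c_1^2N/2$, so the condition holds.
\end{itemize}
With this correction the rest of your argument goes through unchanged.
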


\begin{proof}
For \(\ell\le 3\), the desired right-hand side is \(O(1)\).  The problem has a
constant lower bound even on the restriction where the hidden path is fixed and
only one marked bit is variable.  Thus, after decreasing the constant \(c\), we
may assume throughout the rest of the proof that
  $\ell\ge 4$.

We split into two regimes.

\paragraph{Regime I: \(N\ge \ell^3\).}

Here the desired lower bound is linear in \(\ell\), because
\[
  N\ge \ell^3
  \quad\Longrightarrow\quad
  (N\ell)^{1/4}\ge \ell,
\]
and hence
\[
  \min\{\ell,(N\ell)^{1/4}\}=\ell.
\]

We realize one anchored segment of length \(\ell-1\) inside the hidden linked
list and use one bridge vertex to terminate it.  Set
\[
  m=1,\qquad
  L_0=\ell-1,\qquad
  N_0=(\ell-1)^3.
\]
Then
\[
  mL_0+m=(\ell-1)+1=\ell,
\]
and
\[
  mN_0+m=(\ell-1)^3+1\le \ell^3\le N.
\]
After padding all remaining addresses with dead vertices, this gives a
restriction of \(\HLLDEC(N,\ell)\) whose nontrivial part is
\[
  \ASEGDEC((\ell-1)^3,\ell-1).
\]
By monotonicity under restriction and Theorem~\ref{thm:boundary-aseg},
\[
  \Advpm(\HLLDEC(N,\ell))
  \ge
  \Advpm(\ASEGDEC((\ell-1)^3,\ell-1))
  \ge
  c_{\mathrm{bd}}(\ell-1)
  =
  \Omega(\ell).
\]
This proves the claimed lower bound in the regime \(N\ge \ell^3\).

\paragraph{Regime II: \(N < \ell^3\).}

This is the regime where the lower bound should be
\[
  \Omega((N\ell)^{1/4}).
\]
The idea is to pack many smaller cubic-threshold anchored segments in series.

Set
\[
  \alpha=\sqrt{N/\ell}.
\]
Since \(N\ge \ell\), we have \(\alpha\ge 1\).  Since \(N\le \ell^3\), we have
\(\alpha\le \ell\).

Choose
\[
  L_0=\max\{1,\lfloor \alpha/8\rfloor\},
  \qquad
  N_0=L_0^3,
\]
and
\[
  m=\max\left\{1,
      \left\lfloor \frac{\ell}{2(L_0+1)}\right\rfloor
    \right\}.
\]
Thus each inner block lies exactly at the cubic threshold: $N_0=L_0^3.$

 We serially pack \(m\) anchored
segments of parameters \((N_0,L_0)\), using one unmarked bridge vertex after each segment, as in Lemma~\ref{lem:serial-packing}. This creates a list length $mL_0+m=m(L_0+1),$ and uses $mN_0+m=m(N_0+1)$ addresses. If this list length is smaller than \(\ell\), we append a fixed
unmarked tail so that the total hidden list has length exactly \(\ell\). All remaining addresses are set to be dead.

Thus, to obtain a valid restriction of \(\HLLDEC(N,\ell)\), it is enough to
verify the following three feasibility conditions:
\begin{enumerate}
  \item The serially packed composition of anchored segments has list length at most
  \(\ell\), namely $m(L_0+1)\le \ell.$

  \item The serially packed composition of anchored segments uses at most \(N\) addresses, namely $m(N_0+1)\le N.$

  \item After the serially packed composition of anchored segments is positioned, there is enough unused address space
  to append the fixed unmarked tail needed to bring the total list length to  exactly \(\ell\), namely $ N-(mN_0+m)\ge \ell-(mL_0+m).$
\end{enumerate}

We verify these three conditions in turn next.

\medskip
\noindent
\textbf{Condition 1: serial composition of anchored segment has length at most \(\ell\).}

We claim that
\[
  m(L_0+1)\le \ell.
\]
If \(m>1\), then by definition
\[
  m
  =
  \left\lfloor \frac{\ell}{2(L_0+1)}\right\rfloor
  \le
  \frac{\ell}{2(L_0+1)},
\]
so
\[
  m(L_0+1)\le \ell/2\le \ell.
\]
If \(m=1\), then we need \(L_0+1\le \ell\).  Since
\[
  L_0\le \max\{1,\alpha/8\}\le \max\{1,\ell/8\},
\]
and \(\ell\ge 4\), this gives \(L_0+1\le \ell\).  Hence
\[
  mL_0+m=m(L_0+1)\le \ell
\]
in all cases.

\medskip
\noindent
\textbf{Condition 2: serial composition of anchored segments uses at most \(N\) addresses.}

We claim that
\[
  m(N_0+1)=m(L_0^3+1)\le N.
\]

First suppose \(L_0=1\).  Then \(N_0+1=2\), and
\[
  m
  =
  \max\left\{1,\left\lfloor\frac{\ell}{4}\right\rfloor\right\}.
\]
Since \(\ell\ge 4\),
\[
  2m\le \ell\le N.
\]
Thus the ambient budget holds when \(L_0=1\).

Now suppose \(L_0>1\). Then \(L_0=\lfloor \alpha/8\rfloor\), so \(L_0\le \alpha/8\). Moreover, since \(L_0>1\), we have \(L_0\ge 2\). Also \(\alpha\le \ell\), and hence \(L_0\le \ell/8\). Therefore
\[
  \frac{\ell}{2(L_0+1)}
  \ge
  \frac{8L_0}{2(L_0+1)}
  =
  \frac{4L_0}{L_0+1}
  >2.
\]
Thus,
\[
  m=
  \left\lfloor \frac{\ell}{2(L_0+1)}\right\rfloor
  \le
  \frac{\ell}{2(L_0+1)}
  \le
  \frac{\ell}{2L_0}.
\]
Since \(L_0\ge 2\), we also have \(L_0^3+1\le 2L_0^3\). Hence
\[
  m(N_0+1)
  =
  m(L_0^3+1)
  \le
  \frac{\ell}{2L_0}\cdot 2L_0^3
  =
  \ell L_0^2
  \le
  \ell(\alpha/8)^2
  =
  \frac{N}{64}
  \le N.
\]
So the packed blocks and bridges fit inside the ambient universe.

\medskip
\noindent
\textbf{Condition 3: after the serial composition of anchored segments there is enough unused space to append the fixed tail.}

After serially packing the \(m\) anchored segments and the \(m\) bridges, the
current list length is
\[
  mL_0+m.
\]
If this is smaller than \(\ell\), we append a fixed unmarked tail of length
\[
  \ell-(mL_0+m).
\]
This tail uses one fresh address per added list vertex.  Thus we need
\[
  N-(mN_0+m)
  \ge
  \ell-(mL_0+m).
\]
Equivalently,
\[
  N-\ell
  \ge
  m(N_0-L_0).
\]

If \(L_0=1\), then \(N_0-L_0=1^3-1=0\), so this condition is immediate.

Now suppose \(L_0>1\).  As above, \(L_0=\lfloor \alpha/8\rfloor\ge 2\), and
therefore \(\alpha\ge 16\).  Also
\[
  m\le \frac{\ell}{2L_0}.
\]
Hence
\[
  m(N_0-L_0)
  \le
  mL_0^3
  \le
  \frac{\ell}{2L_0}\cdot L_0^3
  =
  \frac{\ell L_0^2}{2}
  \le
  \frac{\ell(\alpha/8)^2}{2}
  =
  \frac{N}{128}.
\]
On the other hand,
\[
  N-\ell
  =
  \ell(\alpha^2-1)
  =
  N\left(1-\frac1{\alpha^2}\right).
\]
Since \(\alpha\ge 16\),
\[
  1-\frac1{\alpha^2}
  \ge
  1-\frac1{16^2}
  =
  \frac{255}{256}.
\]
Thus
\[
  N-\ell\ge \frac{255}{256}N>\frac{N}{128}
  \ge
  m(N_0-L_0).
\]
So the appended fixed tail also fits.

\medskip

The three conditions show that the serially
packed \(m\) anchored segments of parameters \((N_0,L_0)\) give a valid
restriction of \(\HLLDEC(N,\ell)\).  By Lemma~\ref{lem:serial-packing}, under
the promise that at most one block is positive, this
restriction is exactly
\[
  \UOR_m\circ \ASEGDEC(N_0,L_0)^m .
\]

Therefore, by monotonicity under restriction and the adversary composition
theorem,
\[
  \Advpm(\HLLDEC(N,\ell))
  \ge
  \Advpm(\UOR_m)\,
  \Advpm(\ASEGDEC(N_0,L_0)).
\]
Since \(N_0=L_0^3\), Theorem~\ref{thm:boundary-aseg} gives
\[
  \Advpm(\ASEGDEC(N_0,L_0))
  \ge
  c_{\mathrm{bd}}L_0.
\]
Also,
\[
  \Advpm(\UOR_m)=\Theta(\sqrt m).
\]
Hence
\[
  \Advpm(\HLLDEC(N,\ell))
  \ge
  \Omega(\sqrt m\,L_0).
\]

It remains only to compare \(\sqrt m\,L_0\) with \((N\ell)^{1/4}\).

Let
\[
  x=\frac{\ell}{2(L_0+1)}.
\]
Since
\[
  m=\max\{1,\lfloor x\rfloor\},
\]
we always have
\[
  m\ge x/2.
\]
Indeed, if \(x<2\), then \(m\ge 1>x/2\); while if \(x\ge 2\), then
\(\lfloor x\rfloor\ge x/2\).  Therefore
\[
  m\ge \frac{\ell}{4(L_0+1)}.
\]
It follows that
\[
  \sqrt m\,L_0
  \ge
  \frac12 L_0\sqrt{\frac{\ell}{L_0+1}}
  =
  \frac12\sqrt{\ell\cdot\frac{L_0^2}{L_0+1}}.
\]
Since \(L_0\ge 1\),
\[
  \frac{L_0^2}{L_0+1}
  \ge
  \frac{L_0}{2}.
\]
Thus
\[
  \sqrt m\,L_0
  \ge
  \frac{1}{2\sqrt 2}\sqrt{\ell L_0}.
\]

Now recall that
\[
  \alpha=\sqrt{N/\ell},
\]
so
\[
  (N\ell)^{1/4}
  =
  \sqrt{\alpha\ell}.
\]

If \(\alpha<8\), then \(L_0=1\), and
\[
  (N\ell)^{1/4}
  =
  \sqrt{\alpha\ell}
  \le
  \sqrt{8\ell}
  =
  O(\sqrt{\ell L_0}).
\]
If \(\alpha\ge 8\), then
\[
  L_0=\max\{1,\lfloor \alpha/8\rfloor\}\ge \alpha/16,
\]
and hence
\[
  \sqrt{\ell L_0}
  \ge
  \frac14\sqrt{\alpha\ell}
  =
  \Omega((N\ell)^{1/4}).
\]
In both cases,
\[
  \sqrt m\,L_0
  =
  \Omega((N\ell)^{1/4}).
\]
Therefore
\[
  \Advpm(\HLLDEC(N,\ell))
  \ge
  \Omega((N\ell)^{1/4})
\]
throughout the regime \(N\le \ell^3\).

All constants above are absolute.  Taking \(c>0\) to be the minimum of the
constant obtained from the finitely many cases \(\ell\le 3\), the constant in
Regime I, and the constant in Regime II, the three estimates combine to give
\[
  \Advpm(\HLLDEC(N,\ell))
  \ge c\min\{\ell,(N\ell)^{1/4}\}.
\]
This proves the theorem.
\end{proof}

Finally, the search lower bound follows immediately.

\begin{theorem}[Lower bound for decision and search]
\label{cor:hll-lower-final}
For all integers \(N\ge \ell\ge 1\),
\[
  Q(\HLLDEC(N,\ell)),\ Q(\HLLSearch(N,\ell))
  \in
  \Omega\!\left(\min\{\ell,(N\ell)^{1/4}\}\right).
\]
\end{theorem}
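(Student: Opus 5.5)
The plan is to derive this statement directly from Theorem~\ref{thm:hll-lower-decision} together with the characterization $Q(F)=\Theta(\Advpm(F))$ of~\cite[Theorem~1.1]{LeeMittalReichardtSpalekSzegedy11}. For the decision problem, Theorem~\ref{thm:hll-lower-decision} gives $\Advpm(\HLLDEC(N,\ell))\ge c\min\{\ell,(N\ell)^{1/4}\}$. The lower-bound direction of that characterization, namely that every bounded-error quantum algorithm for a partial Boolean function $F$ needs $\Omega(\Advpm(F))$ queries, holds in the finite-alphabet model we use, as explained in Section~\ref{prelims}. It therefore yields $Q(\HLLDEC(N,\ell))=\Omega(\min\{\ell,(N\ell)^{1/4}\})$ at once.

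For the search version I will give a reduction from decision to search that costs only $O(1)$ additional queries. Let $\mathcal A$ be a bounded-error algorithm for $\HLLSearch(N,\ell)$ using $T$ queries. Run $\mathcal A$ to obtain an output $z$. If $z$ is the failure symbol, output $0$. Otherwise $z\in[N]$, and we make one more query to $g(z)$ and output that bit.

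This is correct with probability at least $2/3$. On a yes-instance, $\mathcal A$ returns the unique marked vertex with probability at least $2/3$, and then $g(z)=1$. On a no-instance, $g$ is identically zero on $[N]$: it vanishes off the list, and no list vertex is marked. So whatever $z$ is, the answer is $0$ with certainty.

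Hence $Q(\HLLDEC(N,\ell))\le Q(\HLLSearch(N,\ell))+1$. Combined with the decision bound, this gives $Q(\HLLSearch(N,\ell))=\Omega(\min\{\ell,(N\ell)^{1/4}\})-1$. Since $\min\{\ell,(N\ell)^{1/4}\}\ge 1$ for $N\ge\ell\ge1$, the additive $1$ is absorbed as follows. For large values of the minimum it is swallowed by the $\Omega$. When the minimum is bounded by a constant, search trivially needs at least one query: with zero queries the output is independent of the input, so it cannot name a marked vertex that varies.

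I expect no real obstacle here. The one point to check is that the adversary characterization is invoked for a partial Boolean function over a non-binary alphabet, which Section~\ref{prelims} has already justified. Everything substantive lives in Theorem~\ref{thm:boundary-aseg} and Theorem~\ref{thm:hll-lower-decision}, which this argument assumes.
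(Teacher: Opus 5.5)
Your proposal is correct and follows the paper's proof: the decision bound comes from Theorem~\ref{thm:hll-lower-decision} via $Q=\Theta(\Advpm)$, and the search bound comes from reducing decision to search. Your extra verification query on $g(z)$ is harmless but unnecessary: outputting $1$ exactly when the search algorithm returns a non-failure symbol already decides $\HLLDEC(N,\ell)$ with the same error, so the additive-constant bookkeeping can be dropped.
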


\begin{proof}
By the tight characterization of bounded-error quantum query complexity by the
negative-weight adversary bound,
\[
  Q(\HLLDEC(N,\ell))
  =
  \Omega(\Advpm(\HLLDEC(N,\ell))).
\]
Theorem~\ref{thm:hll-lower-decision} therefore gives
\[
  Q(\HLLDEC(N,\ell))
  \in
  \Omega\!\left(\min\{\ell,(N\ell)^{1/4}\}\right).
\]
The search problem is at least as hard as the decision problem: any algorithm
that outputs the marked vertex, or reports that no marked vertex exists, also
decides whether a mark exists.  Hence the same lower bound holds for
\(\HLLSearch(N,\ell)\).
\end{proof}

\subsubsection{The boundary lower bound}
\label{subsec:boundary-proof}

It remains to prove Theorem~\ref{thm:boundary-aseg}.   The rest of
this section is devoted to that construction.  The proof is an application of
the negative-weight adversary method~\cite{HLS07}. Recall the definition of the adversary matrix from Section~\ref{prelims}.  The main idea is to restrict the anchored
segment problem to a very symmetric family of inputs.  In this family the hidden
path is forced to choose exactly one vertex from each of \(r\) layers.  Thus the hidden path is specified by a tuple
\[
z=(z_1,\ldots,z_r).
\]

The adversary matrix will compare a yes-input indexed by \(z\) with a no-input
indexed by \(w\), and it will give larger weight to pairs \((z,w)\) whose paths
agree for longer before they first split.

This is the whole intuition of the construction.  If two paths split very
early, then a successor query near the beginning can distinguish them.  Such
pairs should not be too heavily weighted.  If two paths remain identical for a
long prefix and split only near the end, then distinguishing them requires
learning deeper information about the hidden segment.  These pairs are harder
for a query to separate, and the adversary matrix gives them larger weight.

We now make this precise.

\paragraph{The layered hard family}

Fix integers \(r,M\ge 1\).  Partition the ambient universe into \(r\) layers
\[
  [rM]=V_1\cup V_2\cup\cdots\cup V_r,
  \qquad |V_i|=M.
\]
We identify each layer \(V_i\) with \([M]\).  Thus a tuple
\[
  z=(z_1,\ldots,z_r)\in [M]^r
\]
specifies the hidden segment
\[
  u_{\mathrm{in}}
  \to z_1
  \to z_2
  \to \cdots
  \to z_r
  \to u_{\mathrm{out}},
\]
where \(z_i\) denotes the vertex of layer \(V_i\) with local label \(z_i\). See Figure~\ref{fig:layered-hard-family}.

\begin{figure}[t]
\centering
\begin{tikzpicture}[
  x=1cm,y=1cm,scale=1.6,
  every node/.style={transform shape},
  layerbox/.style={
    draw=Ocean!60,
    fill=SoftBlue,
    rounded corners=6pt,
    line width=.7pt
  },
  candidatenode/.style={
    circle,
    draw=DeadEdge!90,
    fill=NodeBlue!55,
    minimum size=6.6pt,
    inner sep=0pt,
    line width=.7pt
  },
  selectednode/.style={
    circle,
    draw=Ocean!95!black,
    fill=Ocean!35,
    minimum size=8.2pt,
    inner sep=0pt,
    line width=.9pt
  },
  markednode/.style={
    circle,
    draw=MarkRed!95!black,
    fill=SoftRed,
    minimum size=8.6pt,
    inner sep=0pt,
    line width=1pt
  },
  patharrow/.style={
    -{Latex[length=2.5mm]},
    very thick,
    draw=Ocean!90!black
  },
  layerlabel/.style={
    font=\scriptsize,
    text=Ocean!90!black
  },
  vertexlabel/.style={
    font=\scriptsize,
    text=black
  },
  publiclabel/.style={
    font=\small
  }
]

\node[publiclabel] (uin)  at (-0.8,0) {$u_{\rm in}$};
\node[publiclabel] (uout) at (9.55,0) {$u_{\rm out}$};

\def\xA{1.40}
\def\xB{3.00}
\def\xC{4.60}
\def\xD{6.20}
\def\xE{7.80}

\def\ya{0.54}
\def\yb{0.18}
\def\yc{-0.18}
\def\yd{-0.54}

\foreach \x/\lab in {\xA/{1},\xB/{2},\xC/{3},\xD/{r-1},\xE/{r}}{
  \draw[layerbox] (\x-0.42,-0.86) rectangle (\x+0.42,0.86);
  \node[layerlabel] at (\x,1.07) {$V_{\lab}$};
}

\node[text=Ocean!70,font=\large] at (5.40,1.07) {$\cdots$};

\coordinate (v11) at (\xA,\ya);
\coordinate (v12) at (\xA,\yb);
\coordinate (v13) at (\xA,\yc);
\coordinate (v14) at (\xA,\yd);
\node[candidatenode] at (v11) {};
\node[candidatenode] at (v12) {};
\node[candidatenode] at (v13) {};
\node[candidatenode] at (v14) {};

\coordinate (v21) at (\xB,\ya);
\coordinate (v22) at (\xB,\yb);
\coordinate (v23) at (\xB,\yc);
\coordinate (v24) at (\xB,\yd);
\node[candidatenode] at (v21) {};
\node[candidatenode] at (v22) {};
\node[candidatenode] at (v23) {};
\node[candidatenode] at (v24) {};

\coordinate (v31) at (\xC,\ya);
\coordinate (v32) at (\xC,\yb);
\coordinate (v33) at (\xC,\yc);
\coordinate (v34) at (\xC,\yd);
\node[candidatenode] at (v31) {};
\node[candidatenode] at (v32) {};
\node[candidatenode] at (v33) {};
\node[candidatenode] at (v34) {};

\coordinate (v41) at (\xD,\ya);
\coordinate (v42) at (\xD,\yb);
\coordinate (v43) at (\xD,\yc);
\coordinate (v44) at (\xD,\yd);
\node[candidatenode] at (v41) {};
\node[candidatenode] at (v42) {};
\node[candidatenode] at (v43) {};
\node[candidatenode] at (v44) {};

\coordinate (v51) at (\xE,\ya);
\coordinate (v52) at (\xE,\yb);
\coordinate (v53) at (\xE,\yc);
\coordinate (v54) at (\xE,\yd);
\node[candidatenode] at (v51) {};
\node[candidatenode] at (v52) {};
\node[candidatenode] at (v53) {};
\node[candidatenode] at (v54) {};

\node[selectednode, label={[vertexlabel]above left:$z_1$}]     (p1) at (v11) {};
\node[selectednode, label={[vertexlabel]right:$z_2$}]          (p2) at (v22) {};
\node[selectednode, label={[vertexlabel]right:$z_3$}]          (p3) at (v33) {};
\node[selectednode, label={[vertexlabel]right:$z_{r-1}$}]      (p4) at (v44) {};
\node[markednode,   label={[vertexlabel]above right:$z_r$}]    (p5) at (v52) {};

\draw[patharrow] (uin.east) -- (p1);
\draw[patharrow] (p1) -- (p2);
\draw[patharrow] (p2) -- (p3);
\draw[patharrow] (p3) -- (p4);
\draw[patharrow] (p4) -- (p5);
\draw[patharrow] (p5) -- (uout.west);

\node[font=\scriptsize, text=MarkRed!90!black] at (\xE,-1.12) {possible mark};

\end{tikzpicture}

\caption{Each layer $V_i$ contains several possible addresses; the path chooses exactly one of them. All selected vertices except $z_r$ are always unmarked; only $z_r$ may be marked or unmarked.}
\label{fig:layered-hard-family}
\end{figure}

Formally, the successor oracle \(f_z\) is defined by
\[
  f_z(u_{\mathrm{in}})=z_1,
  \qquad
  f_z(z_i)=z_{i+1}\quad (1\le i<r),
  \qquad
  f_z(z_r)=u_{\mathrm{out}},
\]
and
\[
f_z(x)=\bot \quad\text{for every } x\in [rM]\setminus\{z_1,\ldots,z_r\}.
\]

For each tuple \(z=(z_1,\ldots,z_r)\in[M]^r\), define two marking
oracles associated with the same successor oracle \(f_z\).  Let
\(g_z^0\) be the all-zero marking oracle, and let \(g_z^1\) be the marking
oracle defined by
\[
g_z^1(z_r)=1,\qquad g_z^1(x)=0 \text{ for every } x\neq z_r.
\]
We then define two oracle inputs
\[
X_z^0 = (f_z,g_z^0),
\qquad
X_z^1 = (f_z,g_z^1).
\]
Thus \(X_z^0\) is a no-input and \(X_z^1\) is a yes-input.  The restricted
zero- and one-input sets are
\[
\mathcal X_0 = \{X_z^0 : z\in[M]^r\},
\qquad
\mathcal X_1 = \{X_z^1 : z\in[M]^r\}.
\]
For a fixed path label \(z\), the two inputs \(X_z^0\) and \(X_z^1\)
have the same successor oracle \(f_z\) and differ only in whether the
tail vertex \(z_r\) is marked.  

This is a restriction of \(\ASEGDEC(rM,r)\).  The cubic threshold corresponds to
\[
  M=r^2,
  \qquad
  rM=r^3.
\]

\paragraph{The adversary matrix}

Rows of the adversary matrix are indexed by the yes-inputs \(\mathcal X_1\), and columns are indexed by the
no-inputs \(\mathcal X_0\). We identify the row \(X^1_z\) with its path label \(z \in [M]^r\), and the column
\(X^0_w\) with its path label \(w \in [M]^r\). Thus, for the matrices defined below, we write entries
as functions of the path labels \(z\) and \(w\).

For \(m\in[r]\), let \(A^{(m)}\) be the matrix that keeps exactly those pairs
\((z,w)\) whose first disagreement is at coordinate \(m\).  Entrywise,
\[
  A^{(m)}(z,w)=
  \begin{cases}
  M^{-(r-m)}M^{-1/2},
  & z_1=w_1,\ldots,z_{m-1}=w_{m-1}
    \text{ and } z_m\ne w_m,\\
  0,&\text{otherwise.}
  \end{cases}
\]
The full adversary matrix is
\[
  \Gamma_r=\sum_{m=1}^r A^{(m)}.
\]

With the above identification of rows and columns by path labels, \(\Gamma_r(z,w)\) denotes the
entry of \(\Gamma_r\) whose actual row is \(X^1_z\) and whose actual column is \(X^0_w\).
It is useful to rewrite these matrices in a coordinate-by-coordinate form.
Let
\[
  |s\rangle=\frac1{\sqrt M}(1,\ldots,1)^T,
  \qquad
  P=|s\rangle\langle s|,
  \qquad
  B=\frac1{\sqrt M}(J-I),
\]
where the dimension of \(|s\rangle\) is \(M\) and \(J\) is the \(M\times M\) all-ones matrix.  Then
\[
  P(x,y)=\frac1M
  \quad\text{for all }x,y,
\]
while
\[
  B(x,y)=
  \begin{cases}
  1/\sqrt M,&x\ne y,\\
  0,&x=y.
  \end{cases}
\]
Intuitively \(I\) means ``equal at this coordinate,'' \(B\) means ``different at this
coordinate,'' and \(P\) means ``average uniformly over this coordinate.''  With
this notation,
\[
  A^{(m)}
  =
  I^{\otimes(m-1)}
  \otimes B
  \otimes P^{\otimes(r-m)}.
\]
This tensor notation is only a compact way to say the entrywise rule above:
before the first split we require equality, at the first split we require
disagreement, and after the split we average uniformly.

Let us spell out why this tensor expression gives the same entries as the
definition above.  Fix two tuples
\[
  z=(z_1,\ldots,z_r),\qquad w=(w_1,\ldots,w_r).
\]
The \((z,w)\)-entry of
\[
  I^{\otimes(m-1)}\otimes B\otimes P^{\otimes(r-m)}
\]
is the product
\[
  \prod_{i=1}^{m-1} I(z_i,w_i)\cdot B(z_m,w_m)
  \cdot \prod_{i=m+1}^{r} P(z_i,w_i).
\]
The first product is equal to \(1\) exactly when
\[
  z_1=w_1,\ldots,z_{m-1}=w_{m-1},
\]
and is \(0\) otherwise.  The middle factor \(B(z_m,w_m)\) is equal to
\(M^{-1/2}\) exactly when \(z_m\ne w_m\), and is \(0\) otherwise.  Finally,
each suffix factor \(P(z_i,w_i)\) is always equal to \(1/M\), regardless of the
values of \(z_i\) and \(w_i\).  Therefore the product is nonzero exactly when
\(z\) and \(w\) agree before coordinate \(m\) and disagree at coordinate \(m\);
in that case its value is
\[
  M^{-1/2}\cdot M^{-(r-m)}
\]
This is precisely the entrywise definition of \(A^{(m)}(z,w)\).

\paragraph{The numerator of the adversary bound: }

\begin{lemma}
\label{lem:gamma-numerator}
For every \(r,M\ge 1\),
\[
  \|\Gamma_r\|
  \ge
  r\frac{M-1}{\sqrt M}.
\]
\end{lemma}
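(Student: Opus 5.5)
Since $\Gamma_r$ is entrywise nonnegative and all its factors share the eigenvector $|s\rangle$, the plan is to lower-bound $\|\Gamma_r\|$ by a Rayleigh quotient at the uniform vector $|s\rangle^{\otimes r}$. We have $\|\Gamma_r\|\ge \langle s|^{\otimes r}\Gamma_r|s\rangle^{\otimes r}$, because $|s\rangle^{\otimes r}$ is a unit vector and $\|A\|\ge \langle u|A|v\rangle$ for unit $u,v$. The tensor form $A^{(m)}=I^{\otimes(m-1)}\otimes B\otimes P^{\otimes(r-m)}$ lets the quadratic form factor coordinatewise. So I will compute each scalar $\langle s|I|s\rangle$, $\langle s|B|s\rangle$, $\langle s|P|s\rangle$ separately.

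First, $\langle s|I|s\rangle=1$ and $\langle s|P|s\rangle=|\langle s|s\rangle|^2=1$. For $B=\frac{1}{\sqrt M}(J-I)$, note that $J|s\rangle=M|s\rangle$, so $B|s\rangle=\frac{M-1}{\sqrt M}|s\rangle$ and $\langle s|B|s\rangle=\frac{M-1}{\sqrt M}$. Hence
\[
\langle s|^{\otimes r}A^{(m)}|s\rangle^{\otimes r}=1^{m-1}\cdot\frac{M-1}{\sqrt M}\cdot 1^{r-m}=\frac{M-1}{\sqrt M}
\]
for every $m\in[r]$. Summing over the $r$ values of $m$ gives $\langle s|^{\otimes r}\Gamma_r|s\rangle^{\otimes r}=r\frac{M-1}{\sqrt M}$, which is the bound claimed in Lemma~\ref{lem:gamma-numerator}.

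Equivalently, $|s\rangle^{\otimes r}$ is an exact eigenvector of every $A^{(m)}$, hence of $\Gamma_r$, with eigenvalue $r(M-1)/\sqrt M$. One could also verify this directly from the entrywise definition by computing row sums: each row $z$ has $(M-1)M^{r-m}$ entries from $A^{(m)}$, each equal to $M^{-(r-m)}M^{-1/2}$, so every row sum is $r(M-1)/\sqrt M$. Dividing the total sum of entries by $M^r$ gives the same Rayleigh quotient.

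There is no real obstacle in this argument. The only point needing care is that the rows and columns, indexed by $\mathcal X_1$ and $\mathcal X_0$, are both identified with $[M]^r$, so that the same test vector can be used on both sides. That identification was already fixed in the setup above.
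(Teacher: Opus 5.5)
Your proposal is correct and essentially matches the paper's proof: the paper also takes the uniform vector $v=|s\rangle^{\otimes r}$, uses $I|s\rangle=P|s\rangle=|s\rangle$ and $B|s\rangle=\frac{M-1}{\sqrt M}|s\rangle$ to get $\Gamma_r v = r\frac{M-1}{\sqrt M}v$, and concludes from $\|\Gamma_r\|\ge\|\Gamma_r v\|$. Your bilinear form $\langle v|\Gamma_r|v\rangle$ is only a cosmetic variant of this, and your row-sum check is a correct alternative verification.
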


\begin{proof}
We use the definition of the spectral norm:
\[
  \|\Gamma_r\|=\max_{\|u\|=1}\|\Gamma_ru\|.
\]
Therefore it is enough to find one unit vector that \(\Gamma_r\) stretches by a
large factor.

The natural vector is the uniform vector over all tuples:
\[
  v=|s\rangle^{\otimes r}.
\]
This vector is unit length.  Indeed, it has \(M^r\) entries, each equal to
\(M^{-r/2}\), so
\[
  \|v\|^2
  =
  M^r\cdot M^{-r}
  =
  1.
\]

We now compute the action of one summand \(A^{(m)}\) on \(v\).  The local
matrices satisfy
\[
  I|s\rangle=|s\rangle,
  \qquad
  P|s\rangle=|s\rangle,
\]
and
\[
  B|s\rangle
  =
  \frac1{\sqrt M}(J-I)|s\rangle
  =
  \frac{M-1}{\sqrt M}|s\rangle.
\]
The last identity holds because \(J|s\rangle=M|s\rangle\).

Since
\[
  A^{(m)}
  =
  I^{\otimes(m-1)}
  \otimes B
  \otimes P^{\otimes(r-m)},
\]
all tensor factors except the single \(B\) fix \(|s\rangle\), while the single
\(B\) contributes the scalar \((M-1)/\sqrt M\).  Hence
\[
  A^{(m)}v
  =
  \frac{M-1}{\sqrt M}v
  \qquad
  \text{for every }m\in[r].
\]
Summing over \(m\),
\[
  \Gamma_r v
  =
  \sum_{m=1}^r A^{(m)}v
  =
  r\frac{M-1}{\sqrt M}v.
\]
Since \(\|v\|=1\),
\[
  \|\Gamma_r\|
  \ge
  \|\Gamma_r v\|
  =
  r\frac{M-1}{\sqrt M}.
\]
\end{proof}

\paragraph{The denominator of the adversary bound}
\label{subsubsec:denominator}

We now upper-bound the denominator in the adversary ratio.  Fix a query \(q\).
Recall that \(\Delta_q\) is the \(0\)-\(1\) matrix whose \((z,w)\)-entry is
equal to \(1\) exactly when query \(q\) receives different answers on the
yes-input \(X_z^1\) and the no-input \(X_w^0\).  Thus
\[
  \Gamma_r\circ \Delta_q
\]
is obtained from \(\Gamma_r\) by keeping only those entries that the single
query \(q\) can distinguish.

The important point is that filtering does not create an entirely new kind of
matrix.  Each summand
\[
  A^{(m)}=I^{\otimes(m-1)}\otimes B\otimes P^{\otimes(r-m)}
\]
has a coordinate-by-coordinate structure.  Multiplying entrywise by
\(\Delta_q\) merely refines one or two of these local factors.  For example,
an equality condition \(I\) may become the more restrictive condition that both
coordinates equal a queried value \(a\), while a uniform suffix factor \(P\)
may be replaced by a local matrix that keeps only those row-column pairs on
which the query answers differ.

  We use throughout that the
spectral norm is multiplicative under tensor products:
$  \|R_1\otimes\cdots\otimes R_t\|  =  \prod_{i=1}^t \|R_i\|$. We shall use two norm estimates for the small local matrices that
arise from filtering.  The first estimate, Lemma~\ref{lem:one-coordinate-local-estimates},
covers the one-coordinate matrices \(E_a,F_a,T_a,S_a,K_a\).  These matrices will be defined properly when they first arise inside the proof of
Lemma~\ref{lem:filtered-norms}. Broadly they appear when
a query only restricts one coordinate of the tensor product, for instance by
forcing a coordinate to equal the queried value \(a\), or by keeping only
disagreements involving \(a\).  The second estimate,
Lemma~\ref{lem:two-coordinate-local-estimate}, covers the two-coordinate
matrix \(H_a\).  This matrix appears only for internal successor queries, when
the query depends simultaneously on a vertex and its successor. We just state these lemmas now. Both lemmas
are proved immediately after the proof of Lemma~\ref{lem:filtered-norms}; in Lemma~\ref{lem:filtered-norms} we only use their
stated norm bounds.

\begin{lemma}[One-coordinate local estimates]
\label{lem:one-coordinate-local-estimates}
For \(a\in[M]\), let
\[
  E_a=|a\rangle\langle a|,
  \qquad
  F_a=E_aP,
  \qquad
  T_a=E_aB,
\]
and define
\[
  S_a(x,y)=
  M^{-1/2}\mathbf 1[(x=a)\oplus(y=a)],
\]
\[
  K_a(x,y)=
  \frac1M\mathbf 1[(x=a)\oplus(y=a)].
\]
Then
\[
  \|E_a\|=1,\qquad
  \|P\|=1,\qquad
  \|B\|=\frac{M-1}{\sqrt M},
\]
\[
  \|F_a\|=\frac1{\sqrt M},
  \qquad
  \|T_a\|=\sqrt{\frac{M-1}{M}},
\]
and
\[
  \|S_a\|=\sqrt{\frac{M-1}{M}}\le 1,
  \qquad
  \|K_a\|=\frac{\sqrt{M-1}}{M}\le \frac1{\sqrt M}.
\]
\end{lemma}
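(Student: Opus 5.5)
The plan is to compute each norm directly from the structure of the matrix, using only elementary spectral facts: rank-one matrices $|u\rangle\langle v|$ have norm $\|u\|\,\|v\|$, and symmetric matrices of the form $\alpha J+\beta I$ are diagonalized by $|s\rangle$ and its orthogonal complement. Throughout, $|a\rangle$ denotes the standard basis vector of $\mathbb R^M$ at $a$.

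\textbf{The diagonal and averaging pieces.} $E_a=|a\rangle\langle a|$ is a rank-one projection, so $\|E_a\|=1$; the same holds for $P=|s\rangle\langle s|$. For $B=M^{-1/2}(J-I)$, we have $J=M|s\rangle\langle s|$, so $B$ has eigenvalue $M^{-1/2}(M-1)$ on $|s\rangle$ and eigenvalue $-M^{-1/2}$ on $|s\rangle^\perp$. Its norm is therefore $\max\{(M-1)/\sqrt M,\,1/\sqrt M\}$, which equals $(M-1)/\sqrt M$ for $M\ge 2$. The case $M=1$ is degenerate, since then $B=0$ and the stated value $0$ is still correct.

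\textbf{The one-sided products.} $F_a=E_aP=\langle a|s\rangle\,|a\rangle\langle s|=M^{-1/2}|a\rangle\langle s|$, which is rank one with norm $M^{-1/2}$. Next, $T_a=E_aB=|a\rangle\langle a|B$ is also rank one, with norm $\|B^T|a\rangle\|=\|B|a\rangle\|$. The vector $B|a\rangle$ has $M-1$ entries equal to $M^{-1/2}$ and a zero at $a$, so its norm is $\sqrt{(M-1)/M}$.

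\textbf{The XOR matrices.} These require the only small computation. The entry $\mathbf 1[(x=a)\oplus(y=a)]$ is nonzero exactly on row $a$ off the diagonal and on column $a$ off the diagonal. So $S_a=M^{-1/2}(|a\rangle\langle u|+|u\rangle\langle a|)$, where $u=\sum_{x\ne a}|x\rangle$ is orthogonal to $|a\rangle$ and has $\|u\|=\sqrt{M-1}$.

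This is a symmetric matrix supported on the two-dimensional span of the orthonormal vectors $|a\rangle$ and $u/\|u\|$. In that basis it acts as $M^{-1/2}\sqrt{M-1}\begin{pmatrix}0&1\\1&0\end{pmatrix}$, whose eigenvalues are $\pm\sqrt{(M-1)/M}$. Hence $\|S_a\|=\sqrt{(M-1)/M}\le 1$.

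The matrix $K_a$ is just $M^{-1/2}S_a$, so $\|K_a\|=\sqrt{M-1}/M\le M^{-1/2}$.

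\textbf{Expected difficulty.} None of these steps is a real obstacle. The only points needing care are recognizing the rank-two structure of $S_a$ and treating the edge case $M=1$ for $\|B\|$.
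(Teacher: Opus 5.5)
Your proposal is correct and follows essentially the same route as the paper: the eigen-decomposition of $J-I$ gives $\|B\|$, the single-nonzero-row (rank-one) structure gives $\|F_a\|$ and $\|T_a\|$, and the off-diagonal block (rank-two) form gives $\|S_a\|$, with $K_a=M^{-1/2}S_a$. Your explicit check of the degenerate case $M=1$ for $\|B\|$ is a small point the paper leaves implicit.
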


\begin{lemma}[The two-coordinate local estimate]
\label{lem:two-coordinate-local-estimate}
Let
\[
  \phi_a(x,y)=
  \begin{cases}
  y,&x=a,\\
  \bot,&x\ne a,
  \end{cases}
\]
and let \(H_a\), indexed by pairs in \([M]^2\), be defined by
\[
  H_a((x,y),(x',y'))
  =
  \frac1{M^2}
  \mathbf 1[\phi_a(x,y)\ne \phi_a(x',y')].
\]
Then
\[
  \|H_a\|\le \frac2{\sqrt M}.
\]
\end{lemma}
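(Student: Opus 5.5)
The plan is to compute the block structure of \(H_a\) exactly, since \(\phi_a\) partitions \([M]^2\) into very simple classes. I would split the index set \([M]^2\) into two parts:
\[
R=\{(a,y):y\in[M]\},\ |R|=M,
\qquad
Q=\{(x,y):x\ne a\},\ |Q|=M(M-1).
\]
On \(R\) we have \(\phi_a(a,y)=y\), and on \(Q\) we have \(\phi_a\equiv\bot\).

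From this I read off the entries of \(H_a\) block by block.
\begin{itemize}
\item For two indices in \(Q\), both values are \(\bot\), so the entry is \(0\).
\item For one index in \(R\) and one in \(Q\), the values are some \(y\in[M]\) versus \(\bot\). These always differ, so every entry is \(1/M^2\).
\item For two indices \((a,y),(a,y')\in R\), the entry is \(M^{-2}\mathbf 1[y\ne y']\).
\end{itemize}
Hence, after ordering rows and columns as \((R,Q)\),
\[
H_a=\frac1{M^2}\begin{pmatrix} J_M-I_M & J_{M\times M(M-1)}\\ J_{M(M-1)\times M} & 0\end{pmatrix}.
\]

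Next I bound the norm by the triangle inequality, writing \(H_a\) as the sum of its block-diagonal part and its off-diagonal part.
\begin{itemize}
\item The block-diagonal part has norm \(\|J_M-I_M\|/M^2=(M-1)/M^2\le 1/M\). This uses that the eigenvalues of \(J_M-I_M\) are \(M-1\) and \(-1\).
\item The off-diagonal part is a symmetric matrix of the form \(\begin{pmatrix}0&C\\ C^T&0\end{pmatrix}\), whose norm equals \(\|C\|\). Here \(C\) is the all-ones \(M\times M(M-1)\) matrix scaled by \(M^{-2}\). Since an all-ones \(p\times q\) matrix has rank one and norm \(\sqrt{pq}\), we get \(\|C\|=M^{-2}\sqrt{M\cdot M(M-1)}=\sqrt{M-1}/M^2\cdot M\le 1/\sqrt M\).
\end{itemize}

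Adding the two parts gives \(\|H_a\|\le 1/M+1/\sqrt M\le 2/\sqrt M\), using \(M\ge1\).

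No real obstacle is expected. The only point needing care is correctly identifying that pairs within \(Q\) contribute nothing while every \(R\)–\(Q\) pair contributes. This is exactly why the large all-ones block of size \(M\times M^2\), scaled by \(M^{-2}\), still has norm only \(O(1/\sqrt M)\).
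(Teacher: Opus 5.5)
Your proof is correct and follows essentially the same route as the paper: the same split into active states $\{(a,y)\}$ and inactive states $\{(x,y): x\ne a\}$, the same block form, and the same triangle-inequality bound using $\|J\|=\sqrt{pq}$ and $\|J_M-I_M\|=M-1$. The only cosmetic difference is that you bound the diagonal block by $1/M$ rather than directly by $1/\sqrt M$, which changes nothing.
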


\begin{lemma}
\label{lem:filtered-norms}
There is an absolute constant \(C>0\) such that, for every \(r,M\ge 1\),
\[
  \max_q \|\Gamma_r\circ \Delta_q\|
  \le
  C(\sqrt M+r).
\]
\end{lemma}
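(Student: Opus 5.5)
The plan is to bound \(\|\Gamma_r\circ\Delta_q\|\le\sum_{m=1}^r\|A^{(m)}\circ\Delta_q\|\) by the triangle inequality. For each \(m\), we write \(A^{(m)}\circ\Delta_q\) as a tensor product, or a sum of \(O(1)\) tensor products, of the local matrices from Lemmas~\ref{lem:one-coordinate-local-estimates} and~\ref{lem:two-coordinate-local-estimate}. Multiplicativity of the spectral norm under tensor products then gives the estimate. We split into three query types: the marking query at a vertex \(a\) of layer \(V_j\), the successor query at \(u_{\mathrm{in}}\), and the successor query at an internal vertex \(a\in V_j\). The goal is to show that each query has \(O(1)\) summands with norm up to \(O(\sqrt M)\), and that all remaining summands have norm \(O(1)\) each, or sum to \(O(1)\) geometrically. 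Summing then gives \(C(\sqrt M+r)\).

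\emph{Marking queries.} Since rows are yes-inputs and columns are no-inputs, \(g\) differs at \(a\) exactly when \(j=r\) and \(z_r=a\). So \(\Delta_q\) keeps only entries with \(z_r=a\); for \(j<r\) the filtered matrix is zero. For \(m<r\) the filtered summand is \(I^{\otimes(m-1)}\otimes B\otimes P^{\otimes(r-m-1)}\otimes F_a\), with norm \(\frac{M-1}{\sqrt M}\cdot\frac1{\sqrt M}\le1\). For \(m=r\) it is \(I^{\otimes(r-1)}\otimes T_a\), with norm at most \(1\). The total is \(O(r)\).

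\emph{Successor query at \(u_{\mathrm{in}}\).} This query distinguishes exactly the pairs with \(z_1\ne w_1\), so only \(A^{(1)}\) survives, unchanged. Its norm is \(\|B\|\le\sqrt M\).

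\emph{Internal successor query at \(a\in V_j\).} The answer \(f_z(a)\) equals \(z_{j+1}\) (or \(u_{\mathrm{out}}\) if \(j=r\)) when \(z_j=a\), and \(\bot\) otherwise. Thus \(\Delta_q\) depends on coordinates \(j\) and \(j+1\) through \(\phi_a\). We case on \(m\):
\begin{itemize}
\item \(m<j\): coordinates \(j,j+1\) lie in the \(P\)-suffix, and the pair factor becomes \(H_a\). The norm is \(\frac{M-1}{\sqrt M}\cdot M^{0}\cdot\|H_a\|/\|P\otimes P\|\) scaled appropriately, giving \(\sqrt M\cdot\frac2{\sqrt M}=O(1)\) per \(m\). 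This is \(O(r)\) in total. (If \(j=r\), the single factor becomes \(K_a\), with norm at most \(1/\sqrt M\).)
\item \(m=j\): the \(B\) factor at coordinate \(j\) is filtered. Either exactly one side equals \(a\), which gives an \(S_a\)-type factor tensored with \(P\), of norm at most \(1\). Or neither side equals \(a\), and the entry vanishes, since both answers are \(\bot\).
\item \(m=j+1\): the prefix \(I\) at coordinate \(j\) together with the split \(B\) at \(j+1\) becomes \(E_a\otimes B\) (answers \(z_{j+1}\ne w_{j+1}\)). This has norm about \(\sqrt M\), but there is only one such \(m\).
\item \(m>j+1\): coordinates \(j,j+1\) are equal on both sides, so the answers agree and the entry vanishes.
\end{itemize}

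Summing over \(m\) gives \(O(\sqrt M+r)\) for every query. The main obstacle is the bookkeeping in the internal successor case when \(m<j\): we must verify that the filtered two-coordinate block is exactly \(H_a\) with the stated normalization \(1/M^2\), matching \(P\otimes P\), so that Lemma~\ref{lem:two-coordinate-local-estimate} applies and each of these up to \(r\) terms contributes only \(O(1)\) rather than \(O(\sqrt M)\). Boundary cases \(j=r\), where the answer is \(u_{\mathrm{out}}\), also need checking; there the relevant factor is \(K_a\) or \(S_a\) instead.
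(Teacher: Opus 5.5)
Your proposal is correct and follows the paper's proof essentially step for step. Both use the triangle inequality over the first-split index $m$, the same case split ($m>j+1$, $m=j+1$, $m=j$, $m<j$) with the local factors $E_a$, $S_a$, $H_a$, $F_a$, $T_a$, $K_a$, and reach the same bounds $r$, $\sqrt M$, $\sqrt M+2r$. The only differences are organizational: you fold the last-layer successor query into the internal case as $j=r$, whereas the paper treats it separately. Also, your garbled norm expression in the $m<j$ case should simply read $\|B\|\,\|H_a\|\le \frac{M-1}{\sqrt M}\cdot\frac{2}{\sqrt M}\le 2$.
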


\begin{proof}
We consider the possible queries.

\paragraph{Mark queries.}

First consider the marking query at address \(a\), that is the query coordinate
\(q=(G,a)\). If \(a \notin V_r\), then \(a\) is never the marked tail in our hard family.  Therefore the answer to \(g(a)\) is \(0\)
on every yes-input and every no-input, and hence
\[
  \Gamma_r\circ \Delta_q=0.
\]

Now suppose \(a\in V_r\), and identify \(a\) with its local label in \([M]\).
In a no-input no vertex is marked.  In a yes-input \(X_z^1\), the unique marked
vertex is the tail \(z_r\).  Hence the query \(g(a)\) distinguishes
\(X_z^1\) from \(X_w^0\) exactly when
\[
  z_r=a.
\]
So the filter keeps precisely those rows whose last coordinate is \(a\).

Let
\[
  E_a=|a\rangle\langle a|.
\]
This is the one-coordinate projector that keeps only the value \(a\) on the row
side.  Define
\[
  F_a=E_aP,
  \qquad
  T_a=E_aB.
\]

If \(m<r\), then the last coordinate lies in the suffix part of \(A^{(m)}\),
where the local factor is \(P\).  Filtering to rows with \(z_r=a\) replaces this
last \(P\)-factor by \(F_a=E_aP\).  Thus
\[
  A^{(m)}\circ \Delta_q
  =
  I^{\otimes(m-1)}
  \otimes B
  \otimes P^{\otimes(r-m-1)}
  \otimes F_a.
\]

By Lemma~\ref{lem:one-coordinate-local-estimates}, 
we get
\[
\|A^{(m)}\circ \Delta_q\| = \|B\| \|F_a\|
= \frac{M - 1}{\sqrt M}\cdot \frac{1}{\sqrt M}
= \frac{M - 1}{M}
\le 1.
\]

If \(m=r\), then the last coordinate is the \(B\)-coordinate.  Filtering to
rows with \(z_r=a\) replaces \(B\) by \(T_a=E_aB\).  Hence
\[
  A^{(r)}\circ \Delta_q
  =
  I^{\otimes(r-1)}\otimes T_a,
\]
and by Lemma~\ref{lem:one-coordinate-local-estimates},
\[
  \|A^{(r)}\circ \Delta_q\|
  =
  \|T_a\|
  \le 1.
\]

Summing over all \(m\),
\[
  \|\Gamma_r\circ \Delta_q\|
  \le
  \sum_{m=1}^r \|A^{(m)}\circ \Delta_q\|
  \le r.
\]

\paragraph{The successor query at the entry symbol.}

Next consider the successor query at the entry symbol,
i.e. the query coordinate \(q=(F,u_{\rm in})\). On input \(z\), this query returns the first vertex
\(z_1\).  Therefore
\[
  \Delta_q(z,w)=1
  \quad\Longleftrightarrow\quad
  z_1\ne w_1.
\]
This is exactly the condition that the first disagreement between \(z\) and
\(w\) occurs at coordinate \(1\).  Thus the filter keeps \(A^{(1)}\) and kills
all \(A^{(m)}\) with \(m>1\):
\[
  \Gamma_r\circ \Delta_q=A^{(1)}.
\]
Therefore
\[
  \|\Gamma_r\circ \Delta_q\|
  =
  \|A^{(1)}\|
  =
  \|B\|
  =
  \frac{M-1}{\sqrt M}
  \le \sqrt M.
\]

\paragraph{Internal successor queries.}

Now fix a successor query to a vertex in an internal layer, namely the
query coordinate
\[
q=(F,v_{j,a}), \qquad 1 \le j < r,
\]
where \(v_{j,a}\) is the vertex in layer \(V_j\) with local label \(a\).

On input \(z\), the query answer depends only on the pair
\[
  (z_j,z_{j+1}).
\]
Indeed, if \(z_j=a\), then the hidden path passes through \(v_{j,a}\), and the
successor is the next-layer vertex \(z_{j+1}\).  If \(z_j\ne a\), then
\(v_{j,a}\) is off the hidden path, and the successor is \(\bot\).  Thus define
\[
  \phi_a(x,y)=
  \begin{cases}
  y,&x=a,\\
  \bot,&x\ne a.
  \end{cases}
\]
Then
\[
  \Delta_q(z,w)=1
  \quad\Longleftrightarrow\quad
  \phi_a(z_j,z_{j+1})\ne \phi_a(w_j,w_{j+1}).
\]

We analyze \(A^{(m)}\circ\Delta_q\) according to the position \(m\) of the
first disagreement.

\medskip
\noindent
\emph{Case 1: \(m>j+1\).}
On the support of \(A^{(m)}\), the tuples agree up to coordinate \(m-1\).
Since \(m>j+1\), this includes both coordinates \(j\) and \(j+1\).  Hence
\[
  z_j=w_j
  \quad\text{and}\quad
  z_{j+1}=w_{j+1}.
\]
The query answers are therefore equal, so the query distinguishes no such
entry:
\[
  A^{(m)}\circ\Delta_q=0.
\]

\medskip
\noindent
\emph{Case 2: \(m=j+1\).}
Here \(z_j=w_j\), but
\[
  z_{j+1}\ne w_{j+1}.
\]
The query distinguishes the two inputs exactly when the common value
\(z_j=w_j\) is \(a\).  Thus the coordinate-\(j\) equality factor \(I\) is
refined to \(E_a\), while the coordinate-\(j+1\) disagreement factor remains
\(B\).  Therefore
\[
  A^{(j+1)}\circ\Delta_q
  =
  I^{\otimes(j-1)}
  \otimes E_a
  \otimes B
  \otimes P^{\otimes(r-j-1)}.
\]
Using \(\|E_a\|=1\) and \(\|B\|=(M-1)/\sqrt M\)   from Lemma~\ref{lem:one-coordinate-local-estimates}, we get
\[
  \|A^{(j+1)}\circ\Delta_q\|
  \le
  \|B\|
  =
  \frac{M-1}{\sqrt M}
  \le \sqrt M.
\]

\medskip
\noindent

\emph{Case 3: \(m=j\).}
Here the first disagreement occurs exactly at the queried coordinate. Thus, on the
support of \(A^{(j)}\), we have
\[
z_1=w_1,\ldots,z_{j-1}=w_{j-1},
\qquad
z_j\neq w_j .
\]
The successor query is to the vertex \(v_{j,a}\). Recall that its answer on a
path \(z\) is
\[
\phi_a(z_j,z_{j+1})
=
\begin{cases}
z_{j+1}, & z_j=a,\\
\bot, & z_j\neq a.
\end{cases}
\]
Therefore, for a pair \((z,w)\) in the support of \(A^{(j)}\), the query
distinguishes the two inputs exactly when
\[
\phi_a(z_j,z_{j+1})\neq \phi_a(w_j,w_{j+1}).
\]
Since \(z_j\neq w_j\), there are three possibilities. If neither \(z_j\) nor
\(w_j\) is equal to \(a\), then both query answers are \(\bot\), so the query
does not distinguish the inputs. If \(z_j=a\) and \(w_j\neq a\), then the two
answers are \(z_{j+1}\) and \(\bot\), respectively, so they differ. Similarly,
if \(z_j\neq a\) and \(w_j=a\), then the two answers are \(\bot\) and
\(w_{j+1}\), respectively, so they differ. Hence, on the support of
\(A^{(j)}\),
\[
\Delta_q(z,w)=1
\quad\Longleftrightarrow\quad
(z_j=a)\oplus(w_j=a).
\]
Before filtering, the \(j\)-th local factor of \(A^{(j)}\) is
$B(x,y)=M^{-1/2}\mathbf 1[x\neq y].$
Multiplication by \(\Delta_q\) keeps precisely those disagreements in which
exactly one side is equal to \(a\). Thus this local factor is replaced by
$S_a(x,y) = M^{-1/2}\mathbf 1[(x=a)\oplus(y=a)].$
All other tensor factors are unchanged. Therefore
\[
A^{(j)}\circ \Delta_q
=
I^{\otimes(j-1)}
\otimes S_a
\otimes P^{\otimes(r-j)}.
\]
By multiplicativity of the spectral norm under tensor products, and using
\(\|I\|=\|P\|=1\), we obtain via Lemma~\ref{lem:one-coordinate-local-estimates},
\[
\|A^{(j)}\circ \Delta_q\|
=
\|I\|^{j-1}\,\|S_a\|\,\|P\|^{r-j}
=
\|S_a\|.
\]
Again by Lemma~\ref{lem:one-coordinate-local-estimates}, \(\|S_a\|\le 1\). Hence
\[
\|A^{(j)}\circ \Delta_q\|\le 1.
\]

\emph{Case 4: \(m<j\).}
Here the first disagreement occurs strictly before the queried layer. Thus the
coordinates \(j\) and \(j+1\) both lie in the suffix part of \(A^{(m)}\). Before
applying the query filter, the contribution of these two coordinates is $P\otimes P.$
Equivalently, if we write the local row pair as
$(x,y)=(z_j,z_{j+1})$
and the local column pair as $(x',y')=(w_j,w_{j+1}),$
then the local \((P\otimes P)\)-entry is
\[
(P\otimes P)\bigl((x,y),(x',y')\bigr)
=
P(x,x')P(y,y')
=
\frac{1}{M^2}.
\]

The query to \(v_{j,a}\) depends only on these two local coordinates. It returns
\[
\phi_a(x,y)
=
\begin{cases}
y, & x=a,\\
\bot, & x\neq a
\end{cases}
\]
on the row input, and
\[
\phi_a(x',y')
=
\begin{cases}
y', & x'=a,\\
\bot, & x'\neq a
\end{cases}
\]
on the column input. Therefore the query distinguishes the two inputs exactly
when
\[
\phi_a(x,y)\neq \phi_a(x',y').
\]
Thus multiplying \(P\otimes P\) by the filter \(\Delta_q\) replaces the local
factor \(P\otimes P\) by the two-coordinate matrix \(H_a\), defined by
\[
H_a\bigl((x,y),(x',y')\bigr)
=
\frac{1}{M^2}
\mathbf 1[\phi_a(x,y)\neq \phi_a(x',y')].
\]
All other tensor factors are unchanged. Hence, for \(m<j\),
\[
A^{(m)}\circ \Delta_q
=
I^{\otimes(m-1)}
\otimes B
\otimes P^{\otimes(j-m-1)}
\otimes H_a
\otimes P^{\otimes(r-j-1)}.
\]
By multiplicativity of the spectral norm under tensor products, and using
\(\|I\|=\|P\|=1\), we get
\[
\|A^{(m)}\circ \Delta_q\|
=
\|B\|\,\|H_a\|.
\]
By Lemma~\ref{lem:one-coordinate-local-estimates},
\[
\|B\|=\frac{M-1}{\sqrt M},
\]
and by Lemma~\ref{lem:two-coordinate-local-estimate},
\[
\|H_a\|\le \frac{2}{\sqrt M}.
\]
Therefore
\[
\|A^{(m)}\circ \Delta_q\|
\le
\frac{M-1}{\sqrt M}\cdot \frac{2}{\sqrt M}
=
\frac{2(M-1)}{M}
\le 2.
\]
Since there are exactly \(j-1\) choices of \(m<j\), the total contribution from
Case~4 is at most
\[
\sum_{m=1}^{j-1}\|A^{(m)}\circ \Delta_q\|
\le
2(j-1).
\]

Combining the four cases for the internal successor query,
\[
  \|\Gamma_r\circ\Delta_q\|
  \le
  \sqrt M+1+2(j-1)
  \le
  \sqrt M+2r.
\]

\paragraph{Successor queries in the last layer.}

Finally consider the successor query in the last layer,
namely the query coordinate
\[
q=(F,v_{r,a}).
\]

On input \(z\), the answer is
\[
  \psi_a(z_r),
  \qquad
  \psi_a(x)=
  \begin{cases}
  u_{\mathrm{out}},&x=a,\\
  \bot,&x\ne a.
  \end{cases}
\]
Thus
\[
  \Delta_q(z,w)=1
\]
exactly when one of \(z_r,w_r\) equals \(a\) and the other does not.

If \(m=r\), then the final coordinate is the \(B\)-coordinate.  The filter keeps
only disagreements involving \(a\), so \(B\) is replaced by \(S_a\):
\[
  A^{(r)}\circ\Delta_q
  =
  I^{\otimes(r-1)}\otimes S_a.
\]
Therefore
\[
  \|A^{(r)}\circ\Delta_q\|
  \le
  \|S_a\|
  \le 1.
\]

If \(m<r\), then the final coordinate lies in the suffix part.  Before
filtering, the last coordinate contributes \(P\).  The filter keeps exactly
those pairs \((x,y)\) for which exactly one of \(x,y\) equals \(a\).  Thus the
last \(P\)-factor is replaced by
\[
  K_a(x,y)=
  \frac1M\mathbf 1[(x=a)\oplus(y=a)].
\]
Hence
\[
  A^{(m)}\circ\Delta_q
  =
  I^{\otimes(m-1)}
  \otimes B
  \otimes P^{\otimes(r-m-1)}
  \otimes K_a.
\]
By Lemma~\ref{lem:one-coordinate-local-estimates},
\[
  \|A^{(m)}\circ\Delta_q\|
  =
  \|B\|\,\|K_a\|
  \le
  \frac{M-1}{\sqrt M}\cdot \frac1{\sqrt M}
  =
  \frac{M-1}{M}
  \le 1.
\]
Summing over \(m\), we get
\[
  \|\Gamma_r\circ\Delta_q\|\le r.
\]

The four query types give the bounds \(r\), \(\sqrt M\), \(\sqrt M+2r\), and
\(r\), respectively.  Therefore
\[
  \max_q\|\Gamma_r\circ\Delta_q\|
  \le
  C(\sqrt M+r)
\]
for an absolute constant \(C>0\).
\end{proof}

We now prove the elementary norm estimates used above, i.e. Lemmas~\ref{lem:one-coordinate-local-estimates} and ~\ref{lem:two-coordinate-local-estimate} respectively.  

\paragraph{Proof of Lemma~\ref{lem:one-coordinate-local-estimates}}
\begin{proof}
The projector \(E_a\) has norm \(1\), and \(P=|s\rangle\langle s|\) is a
rank-one projection, so \(\|P\|=1\).  Also
\[
  B=\frac1{\sqrt M}(J-I).
\]
The matrix \(J-I\) has eigenvalue \(M-1\) on the uniform vector and eigenvalue
\(-1\) on every vector orthogonal to the uniform vector.  Hence
\[
  \|B\|=\frac{M-1}{\sqrt M}.
\]

The matrix \(F_a=E_aP\) has only one nonzero row, namely
\[
  \left(\frac1M,\ldots,\frac1M\right).
\]
The length of this row is \(1/\sqrt M\), so
\[
  \|F_a\|=\frac1{\sqrt M}.
\]
Similarly, \(T_a=E_aB\) has one nonzero row, with \(M-1\) entries equal to
\(1/\sqrt M\).  Therefore
\[
  \|T_a\|=\sqrt{\frac{M-1}{M}}.
\]

After reordering the basis so that \(a\) is first, \(S_a\) has the block form
\[
  S_a=
  \frac1{\sqrt M}
  \begin{pmatrix}
  0 & \mathbf 1^T\\
  \mathbf 1 & 0
  \end{pmatrix},
\]
where \(\mathbf 1\in\mathbb R^{M-1}\) is the all-ones vector.  The unscaled
block matrix has norm \(\|\mathbf 1\|=\sqrt{M-1}\).  Hence
\[
  \|S_a\|=\sqrt{\frac{M-1}{M}}.
\]
The matrix \(K_a\) has the same support pattern, but with scaling \(1/M\)
instead of \(1/\sqrt M\).  Therefore
\[
  \|K_a\|=\frac{\sqrt{M-1}}{M}.
\]
\end{proof}

\paragraph{Proof of Lemma~\ref{lem:two-coordinate-local-estimate}}
\begin{proof}
Partition the local states \((x,y)\in[M]^2\) into inactive states
\[
  \mathcal I=\{(x,y):x\ne a\},
  \qquad |\mathcal I|=M(M-1),
\]
and active states
\[
  \mathcal A=\{(a,y):y\in[M]\},
  \qquad |\mathcal A|=M.
\]
All inactive states have \(\phi_a\)-value \(\bot\).  The active state
\((a,y)\) has \(\phi_a\)-value \(y\).

In the basis ordered as \(\mathcal I,\mathcal A\), the matrix \(H_a\) has the
block form
\[
  H_a
  =
  \frac1{M^2}
  \begin{pmatrix}
  0 & J\\
  J^T & J_M-I_M
  \end{pmatrix},
\]
where \(J\) is the all-ones matrix of size \(M(M-1)\times M\), and \(J_M\) is
the \(M\times M\) all-ones matrix.

By the triangle inequality,
\[
  \|H_a\|
  \le
  \frac1{M^2}
  \left\|
  \begin{pmatrix}
  0 & J\\
  J^T & 0
  \end{pmatrix}
  \right\|
  +
  \frac1{M^2}
  \left\|
  \begin{pmatrix}
  0 & 0\\
  0 & J_M-I_M
  \end{pmatrix}
  \right\|.
\]
The first norm is \(\|J\|\).  Since an all-ones \(p\times q\) matrix has norm
\(\sqrt{pq}\),
\[
  \|J\|
  =
  \sqrt{M(M-1)\cdot M}
  =
  M\sqrt{M-1}.
\]
The second norm is
\[
  \|J_M-I_M\|=M-1.
\]
Therefore
\[
  \|H_a\|
  \le
  \frac{M\sqrt{M-1}}{M^2}
  +
  \frac{M-1}{M^2}
  \le
  \frac1{\sqrt M}+\frac1{\sqrt M}
  =
  \frac2{\sqrt M}.
\]
\end{proof}

\paragraph{Conclusion of the boundary lower bound}

\begin{proof}[Proof of Theorem~\ref{thm:boundary-aseg}]
We first handle \(L=1\).  The problem \(\ASEGDEC(1,1)\) has a unique segment
vertex, and deciding whether it is marked requires \(\Omega(1)\) queries.
Equivalently, by the tightness of the negative-weight adversary bound,
\[
  \Advpm(\ASEGDEC(1,1))=\Omega(1).
\]
Thus the theorem holds for \(L=1\), after choosing the absolute constant small
enough.

Now assume \(L\ge 2\), and write
\[
  r=L,
  \qquad
  M=r^2.
\]
Then
\[
  rM=r^3=L^3.
\]
The layered family above is a restriction of
\[
  \ASEGDEC(r^3,r)=\ASEGDEC(L^3,L).
\]
Therefore, by monotonicity of the adversary bound under restrictions,
\[
  \Advpm(\ASEGDEC(L^3,L))
  \ge
  \frac{\|\Gamma_r\|}
       {\max_q \|\Gamma_r\circ \Delta_q\|}.
\]
By Lemma~\ref{lem:gamma-numerator},
\[
  \|\Gamma_r\|
  \ge
  r\frac{M-1}{\sqrt M}.
\]
Substituting \(M=r^2\), we get
\[
  \|\Gamma_r\|
  \ge
  r\frac{r^2-1}{r}
  =
  r^2-1.
\]
By Lemma~\ref{lem:filtered-norms},
\[
  \max_q \|\Gamma_r\circ \Delta_q\|
  \le
  C(\sqrt M+r)
  =
  C(r+r)
  =
  2Cr.
\]

Hence
\[
  \Advpm(\ASEGDEC(r^3,r))
  \ge
  \frac{r^2-1}{2Cr}.
\]
For \(r\ge 2\), this is \(\Omega(r)\).  Since \(r=L\), there is an absolute
constant \(c_{\mathrm{bd}}>0\) such that
\[
  \Advpm(\ASEGDEC(L^3,L))
  \ge
  c_{\mathrm{bd}}L.
\]
This proves Theorem~\ref{thm:boundary-aseg}.
\end{proof}

\begin{remark}
The proof above also explains why the layered adversary construction is used
only at the cubic threshold and then amplified by composition.

Indeed, suppose one tried to use the same layered construction with \(r\)
layers of width \(M\), without imposing \(M=r^2\).  The numerator estimate from
Lemma~\ref{lem:gamma-numerator} and the denominator estimate from
Lemma~\ref{lem:filtered-norms} would give $ \Advpm
  \ \gtrsim\  \frac{r(M-1)/\sqrt M}{\sqrt M+r}$.
Ignoring constant factors, this behaves like
 $ \frac{r\sqrt M}{\sqrt M+r}$.
Thus the construction gives a linear-in-\(r\) lower bound precisely when $  \sqrt M \gtrsim r,
 \text{ equivalently }M\gtrsim r^2.$
Since the ambient universe of the layered family has size \(rM\), this is
exactly the condition $rM \gtrsim r^3$.

Below this threshold, when \(M<r^2\), the same layered adversary no
longer gives \(\Omega(r)\).  In that regime the denominator is dominated by the
\(r\)-term, and the ratio is only of order  $\sqrt M$.
For example, if one tried to take \(r=\ell\) and \(M\approx N/\ell\), this
direct construction would yield only about $\sqrt{N/\ell}$,
which is much weaker than the desired $(N\ell)^{1/4}$.

This is the reason for the packing argument in
Section~\ref{subsec:one-shot-amplification}.  Instead of forcing one long
\(\ell\)-step segment into a universe too small to support the cubic-threshold
hard instance, we choose smaller segments of length $L_0\approx \sqrt{N/\ell}$,
place each of them at its own cubic threshold \(N_0=L_0^3\), and pack
\(m\approx \ell/L_0\) such segments in series.  The boundary lower bound gives
\(\Omega(L_0)\) for one segment, while the outer unique-OR contributes the
factor \(\Omega(\sqrt m)\).  This recovers $\Omega(\sqrt m\,L_0)
  = \Omega((N\ell)^{1/4})$,
which is the lower bound in the small-universe regime.
\end{remark}

\section{Double Linked List}
\label{app:doubly-linked-list}

This section is dedicated to proving Theorem~\ref{thm:main1}.  The upper
bound is the natural bidirectional version of the algorithm from Section~\ref{sec:uppbound}.  However, the lower bound is
more technical and requires additional ideas. For double linked lists, a predecessor query to a public bridge could
otherwise reveal the last hidden vertex of the preceding segment. This creates issues in the boundary lower bound as the denominator becomes too large. This forces us to use a different type of layered family. As a consequence a more delicate scheme of assigning weights has to be devised for the negative adversary matrix to work. We will discuss these issues later on in Section~\ref{subsec:doubly-lower-bound} when we prove the lower bound.

\subsection{Upper bound} \label{subsec:uppbound}

\begin{theorem}[Upper bound for hidden double linked lists]
\label{thm:dbl-upper}
For all integers $N\geq \ell\geq 1$,
\[
Q(\HDDLDEC(N,\ell)),\ Q(\HDDLSearch(N,\ell))
\in
O\!\left(\min\{\ell,(N\ell)^{1/4}\}\right).
\]
\end{theorem}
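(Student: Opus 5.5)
The plan is to reuse the single linked-list algorithm of Section~\ref{sec:uppbound} essentially verbatim, observing that it only ever queries $\operatorname{next}$ and $g$. Concretely, a double linked-list input $(\operatorname{next},\operatorname{prev},g)$ restricted to the coordinates $(\mathrm{Next},\cdot)$ and $(G,\cdot)$ is exactly a valid input $(f,g)$ to $\HLLDEC(N,\ell)$ with $f=\operatorname{next}$. The promise transfers directly:
\begin{itemize}
  \item $\operatorname{next}(s)=a_1$;
  \item $\operatorname{next}(a_i)=a_{i+1}$ for $1\le i<\ell$;
  \item $\operatorname{next}(a_\ell)=\bot$;
  \item $\operatorname{next}(x)=\bot$ for every off-list $x$;
  \item the marking promise is unchanged.
\end{itemize}
Any algorithm for $\HLLSearch(N,\ell)$ can therefore be run on the combined oracle $O_{\operatorname{next},\operatorname{prev},g}$ by translating each query to $(F,x)$ into a query to $(\mathrm{Next},x)$. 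This translation is an input-independent relabeling of the query register, so it costs one query per query.

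The key steps, in order, are as follows.
\begin{enumerate}
  \item State the simulation: define $f:=\operatorname{next}$ and check that $(f,g)$ satisfies the promise of Definition~\ref{def:def1}.
  \item Invoke Theorem~\ref{thm:upper} to obtain an algorithm with $O(\min\{\ell,(N\ell)^{1/4}\})$ queries that outputs the marked vertex or failure with bounded error.
  \item Conclude the search bound for $\HDDLSearch(N,\ell)$.
  \item Derive the decision bound for $\HDDLDEC(N,\ell)$ by outputting $1$ exactly when the search returns a vertex.
\end{enumerate}
If one prefers a self-contained argument, the same three ingredients can be checked directly in the double-linked setting. First, Lemma~\ref{lem:live-prep} prepares $\ket{L}$ via the predicate $[\operatorname{next}(x)\ne\bot]$. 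Second, Lemma~\ref{lem:good-check} evaluates $\Good_k$ by forward walks along $\operatorname{next}$. Third, Lemma~\ref{lem:good-fraction} lower-bounds the good fraction among live vertices; it is purely combinatorial about the path and so transfers unchanged. The same choice of $k=\Theta(\sqrt{N/\ell})$ for $N\le\ell^3$, and plain traversal for $N>\ell^3$, then gives the bound.

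Optionally, one could use the predecessor oracle as well. For instance, one could define live vertices as those with $\operatorname{prev}(x)\ne\bot$, which gives all $\ell$ list vertices exactly. Alternatively, one could walk both directions within distance $k$, which doubles the good set. These variants change only constants and are not needed.

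I expect no real obstacle. The only point requiring care is the formal justification that restricting to a subset of the oracle's coordinates is legitimate in the combined-oracle model: the simulating algorithm never places amplitude on $(\mathrm{Prev},\cdot)$ coordinates. I also need to confirm that the encoding of $\operatorname{next}$'s range $[N]\cup\{\bot\}$ matches that of $f$ up to a fixed unitary relabeling.
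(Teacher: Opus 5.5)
Your proof is correct, but it takes a different and shorter route than the paper. You give a black-box reduction. The $(\mathrm{Next},\cdot)$ and $(G,\cdot)$ coordinates of any valid $\HDDLDEC(N,\ell)$ input form a valid $\HLLDEC(N,\ell)$ input with $f=\operatorname{next}$ and the same answer. Relabeling $F\mapsto\mathrm{Next}$ in the query register is input-independent, so each query of the single-list algorithm costs one query here. Hence $Q(\HDDLSearch(N,\ell))\le Q(\HLLSearch(N,\ell))$, and Theorem~\ref{thm:upper} finishes the argument.

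The paper instead re-runs the algorithm natively using the predecessor oracle. It takes the live set to be all $\ell$ list vertices, using $\operatorname{prev}(a_\ell)\neq\bot$. It defines $\Good_k$ as two-sided list distance at most $k$ from the mark. It then checks that the good set has size at least $k$ wherever the mark sits, which lets it drop the prefix scan and use $\varepsilon=k/\ell$. This is exactly the ``optional variant'' you mention.

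Your route is shorter and makes explicit that predecessor access can only help, so no new correctness analysis is needed. The paper's route is self-contained, at the cost of repeating the amplitude-amplification and interval-counting argument. Your encoding caveat is harmless: both models use a fixed binary encoding of $[N]\cup\{\bot\}$, and even mismatched encodings cost only a constant factor to convert.
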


\begin{proof}
It suffices to solve the search problem, since the decision problem is obtained by checking
whether the search algorithm outputs a marked vertex.  If $\ell=1$, query $\operatorname{next}(s)$ to obtain
the unique list vertex and then query its mark bit.  Thus, for the rest of the proof, assume
$\ell\geq 2$.

Define
\[
L=\{x\in[N]:\operatorname{next}(x)\neq\bot
\text{ or } \operatorname{prev}(x)\neq\bot\}.
\]
By the promise,
\[
L=\{a_1,\ldots,a_\ell\}.
\]
Indeed, $a_1,\ldots,a_{\ell-1}$ belong to $L$ because their next pointers are non-$\bot$ and
$a_\ell$ belongs to $L$ because $\operatorname{prev}(a_\ell)=a_{\ell-1}$ when $\ell\geq 2$.
Every off-list vertex has both next and prev equal to $\bot$, so no other vertex belongs to $L$.
Hence $|L|=\ell$ is known.  Since membership in $L$ can be checked with a constant number of
oracle queries, Lemma~\ref{lem:exact-good-state-prep} prepares
\[
|L\rangle=\frac{1}{\sqrt{\ell}}\sum_{x\in L}|x\rangle
\]
using $O(\sqrt{N/\ell})$ queries.

\paragraph{Algorithm.}
Fix a parameter $k\in\{1,\ldots,\ell\}$.  For $x\in L$, define the predicate
$\mathrm{Good}_k(x)$ by
\[
\mathrm{Good}_k(x)=1
\quad\Longleftrightarrow\quad
\text{some marked vertex is within list distance at most $k$ from $x$}.
\]
Here distance is measured along the double linked path.

The predicate $\mathrm{Good}_k$ can be evaluated reversibly using $O(k)$ queries.  Starting from
$x$, walk for at most $k$ steps in the $\operatorname{next}$ direction and for at most $k$ steps in the
$\operatorname{prev}$ direction.  Query $g$ on the visited vertices, record whether a marked vertex was seen,
and then reverse the computation to uncompute the workspace.  If either walk reaches $\bot$
before $k$ steps, continue in a fixed dummy state so that the reversible computation has fixed
length.

\begin{center}
\fbox{
\begin{minipage}{0.93\linewidth}
\small
\noindent \textbf{Pseudocode for $\HDDLSearch(N,\ell)$ with parameter $k$}

\medskip

    \noindent \textbf{Input:} Oracle access to $\operatorname{next}, \operatorname{prev}$  and $g$; a parameter $k \in \{1,\ldots,\ell\}$.
\begin{enumerate}
   \item Prepare the state $|L\rangle$ using Lemma~\ref{lem:exact-good-state-prep}.
   \item Run the fixed-point search procedure of Lemma~\ref{lem:fixed-point-search} on the
state $|L\rangle$ with predicate $\mathrm{Good}_k$ and parameter
$\varepsilon=\frac{k}{\ell}$.
   \item If the fixed-point search returns a candidate $x\in L$, verify it by walking from $x$
for at most $k$ steps in both directions and querying $g$ on the visited vertices.
   \item If this verification walk finds a marked vertex, output it.  Otherwise output failure.
\end{enumerate}
\end{minipage}
}
\end{center}

\paragraph{Correctness.}
First consider a no-instance.  Then no list vertex is marked, so no vertex $x\in L$ satisfies
$\mathrm{Good}_k(x)=1$.  Thus the fixed-point search is in its zero-good-weight case, and the
final verification step ensures that the algorithm outputs failure.

Now consider a yes-instance, and let the unique marked vertex be $a_t$.  Every vertex
\[
a_j
\qquad\text{with}\qquad
\max\{1,t-k\}\leq j\leq \min\{\ell,t+k\}.
\]
has list distance at most $k$ from $a_t$, and therefore satisfies $\mathrm{Good}_k(a_j)=1$.

This interval has size at least $k$.  If $t>k$, then it contains
$a_{t-k},a_{t-k+1},\ldots,a_t,$ so it contains at least $k$ vertices.  If $t\leq k$, then it contains
$a_1,\ldots,a_k$ unless the list ends earlier in which case $k = \ell$.  Hence, in a yes-instance,
\[
\frac{|{x\in L:\mathrm{Good}_k(x)=1}|}{|L|}
\geq \frac{k}{\ell}.
\]
Thus Lemma~\ref{lem:fixed-point-search} applies with $\varepsilon=k/\ell$ and outputs, with
constant success probability, a vertex $x\in L$ satisfying $\mathrm{Good}_k(x)=1$.

By the definition of $\mathrm{Good}_k$, once such an $x$ is found, a walk of length at most $k$
from $x$ in one of the two list directions encounters the marked vertex.  Therefore the final
verification and recovery walk outputs the unique marked vertex.  The algorithm therefore solves
$\HDDLSearch(N,\ell)$ with bounded error, and hence also solves $\HDDLDEC(N,\ell)$.

\paragraph{Query complexity.}
Preparing $|L\rangle$ costs $O(\sqrt{N/\ell})$ queries.  One reversible evaluation of
$\mathrm{Good}_k$ costs $O(k)$ queries.  Since the good weight is either zero or at least
$k/\ell$, Lemma~\ref{lem:fixed-point-search} uses
$O\!\left(\sqrt{\frac{\ell}{k}}\right)$
outer iterations.  Each outer iteration uses one preparation of $|L\rangle$ and one evaluation of $\mathrm{Good}_k$.  The final verification and recovery
walk costs an additional $O(k)$ queries.  The
total number of queries is therefore
\[
  O\!\left(k+\left(\sqrt{\frac{N}{\ell}}+k\right)\sqrt{\frac{\ell}{k}}\right).
\]
If $N\leq \ell^3$, choose $k=\lceil\sqrt{N/\ell}\rceil$ and the above expression is
$O((N\ell)^{1/4})$.  If $N>\ell^3$, simply traverse the whole list from $s$ using $\operatorname{next}$ and query
$g$ on every list vertex, using $O(\ell)$ queries.  This proves the stated upper bound for search,
and hence also for decision.
\end{proof}

\subsection{Lower bound}
\label{subsec:doubly-lower-bound}

We now prove the lower bound for hidden double linked lists. 

\begin{theorem}[Lower bound for hidden double linked-list decision]
\label{thm:hddl-lower}
There is an absolute constant \(c>0\) such that, for all integers \(N\ge \ell\ge 1\),
\[
        \mathrm{Adv}^{\pm}(\mathrm{HDDLDEC}(N,\ell))
        \ge
        c\min\{\ell,(N\ell)^{1/4}\}.
\]
Consequently,
\[
        Q(\mathrm{HDDLDEC}(N,\ell)),
        Q(\mathrm{HDDLSearch}(N,\ell))
        \in
        \Omega\!\left(\min\{\ell,(N\ell)^{1/4}\}\right).
\]
\end{theorem}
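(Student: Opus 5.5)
The plan mirrors the single-linked-list argument: first a boundary lower bound at the cubic scale, then serial packing plus unique-OR composition.

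\textbf{Step 1: the buffered boundary problem.} Define $\mathrm{DBUFDEC}(N,L)$ as the segment problem with public connectors $p,q$ and hidden path
$p\leftrightarrow x_1\leftrightarrow\cdots\leftrightarrow x_L\leftrightarrow y_1\leftrightarrow\cdots\leftrightarrow y_L\leftrightarrow q$.
Only $x_L$ may be marked, and all off-path addresses are dead in both directions. The target is $\Advpm(\mathrm{DBUFDEC}(\Theta(L^3),L))\ge c'L$.

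I would use a layered family with $2r$ layers of width $M=r^2$. A path is a pair $(z,y)\in[M]^r\times[M]^r$. The adversary matrix pairs the yes-input $(z,y)$ with the no-input $(w,y')$. On the $X$-part the weights are as in $\Gamma_r$: first disagreement at $m$, equality before, $B$ at $m$, $P$ after. On the $Y$-buffer the roles are reversed, read from $q$ backward: equality on a suffix $y_{k+1..r}$, $B$ at $k$, and $P$ on $y_{1..k-1}$.

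Concretely, I would take
$\Gamma=\sum_{m}\sum_{k} c_{m,k}\, I^{\otimes(m-1)}\otimes B\otimes P^{\otimes(r-m)}\otimes P^{\otimes(k-1)}\otimes B\otimes I^{\otimes(r-k)}$,
together with terms in which the $X$-part or the $Y$-part is identical ($I^{\otimes r}$). The coefficients $c_{m,k}$ should be normalized so that the uniform vector is an eigenvector with eigenvalue $\Theta(r\sqrt M)$.

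The reason for the reversed buffer weighting is the predecessor query at $q$. That query reveals $y_r$, so it kills only the pairs that first disagree at $y_r$ counted from $q$, exactly as the entry query $u_{\rm in}$ did before. Likewise, a predecessor query at $y_1$ reveals $x_L$ only on inputs that agree at $y_1$.

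\textbf{Step 2: the denominator.} I would run the same case analysis as in Lemma~\ref{lem:filtered-norms}, now over next, prev and mark queries at every layer. Each filtered summand is again a tensor product of local factors from the list $E_a,F_a,T_a,S_a,K_a,H_a$. The predecessor query adds an $H_a$-type factor in which the coordinate pair is $(x_j,x_{j-1})$ instead of $(x_j,x_{j+1})$, and the same bound $\|H_a\|\le 2/\sqrt M$ applies by the identical block computation.

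The aim is $\max_q\|\Gamma\circ\Delta_q\|\le C(\sqrt M+r)$, so the ratio is $\Omega(r)$ at $M=r^2$.

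\textbf{Step 3: the main obstacle.} The hardest point is the marking query at $x_L$ and the queries adjacent to the $X$/$Y$ junction. With naive weights, pairs that split in the buffer but agree on $x_L$ all survive the mark filter, and summing them over $r$ buffer positions inflates the denominator. I expect the reversed weighting to fix this: after filtering to $z_r=a$, every term contains either a factor $F_a$ of norm $M^{-1/2}$ or a factor $T_a$. The $Y$-part contributes a factor of norm $O(\|B\|)$ per term, which should cancel against $F_a$.

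If summing over both $m$ and $k$ gives $r^2$ terms, I would make the weighting additive rather than a product. That is, $\Gamma=\Gamma^X\otimes I_Y\cdot(\text{avg})+\ldots$, with yes/no pairs required to differ in exactly one of the two halves. Then each query touches $O(r)$ summands of norm $O(1)$ plus $O(1)$ summands of norm $O(\sqrt M)$. Checking this bookkeeping for the mark query, and for prev queries into $y_1$, is where the care is needed.

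\textbf{Step 4: packing and composition.} The rest follows the proof of Theorem~\ref{thm:hll-lower-decision} almost verbatim. Place $m$ buffered segments in series, joined by fixed bridge vertices $b_t$ whose next and prev values are fixed. The entries $\mathrm{prev}(b_t)$ and $\mathrm{next}(b_{t-1})$ point into the hidden blocks, so I would treat those two coordinates as belonging to block $t$'s input (they play the roles of $\mathrm{prev}(q)$ and $\mathrm{next}(p)$). This keeps the blocks disjoint, and the restriction becomes $\UOR_m\circ\mathrm{DBUFDEC}(N_0,L_0)^m$.

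Each block has length $2L_0+1$ and uses $O(L_0^3)$ addresses. The choices $L_0\approx\sqrt{N/\ell}/c$ and $m\approx\ell/(4L_0)$ satisfy the same three feasibility conditions up to constants. Composition then gives $\Omega(\sqrt m\,L_0)=\Omega((N\ell)^{1/4})$ for $N<\ell^3$, and a single block gives $\Omega(\ell)$ for $N\ge\ell^3$. Small $\ell$ is handled by a constant bound, and the search bound follows as in Theorem~\ref{cor:hll-lower-final}.
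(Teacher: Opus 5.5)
Your architecture matches the paper's. The paper also uses $\mathrm{DBUFDEC}$ with the mark at $x_L$, a layered family of $2r$ layers of width $M=r^2$, and forward first-disagreement weights on the $X$-core with reversed weights $P^{\otimes(k-1)}\otimes B\otimes I^{\otimes(r-k)}$ on the buffer. The packing step is the same too: $\mathrm{prev}(b_t)$ and $\mathrm{next}(b_{t-1})$ are assigned to block $t$, followed by $\UOR_m$-composition.

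\textbf{The gap.} The buffered boundary bound carries the whole theorem, and you never fix its adversary matrix. Moreover, each of the two extra ingredients you propose would reduce the ratio to $O(1)$.
\begin{itemize}
\item \emph{Identical $X$-parts.} A matrix supported on such pairs is block-diagonal in $x$. A single mark query at the $X_r$-vertex $a$ keeps every such row with $x_r=a$. For instance, $I^{\otimes r}\otimes\Lambda^Y_r$ filters to $(I^{\otimes(r-1)}\otimes E_a)\otimes\Lambda^Y_r$, which has the full norm.
\item \emph{Identical $Y$-parts.} A matrix supported on such pairs is block-diagonal in $y$. On the block of a fixed $y$, the single query $\mathrm{prev}$ at the $Y_1$-vertex $y_1$ returns $x_r$ versus $u_r$, and the remaining pairs with $x_r=u_r$ are caught by mark queries. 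For instance, $\Gamma^X_r\otimes I^{\otimes r}$ filters to $(\Gamma^X_r\circ\mathbf{1}[x_r\ne u_r])\otimes(E_{y_1}\otimes I^{\otimes(r-1)})$, which still has norm of order $r\sqrt M$ on the uniform vector.
\end{itemize}
In your Step~3 fallback (an additive matrix in which yes/no pairs differ in exactly one half), the filtered $X$-identical part is the pinching of $\Gamma\circ\Delta_q$ onto $\{x=u\}$. The filtered $Y$-identical part is its pinching onto $\{y=v\}$. Pinchings do not increase the norm, so the denominator is at least the larger of the two, and that fallback certifies only a constant lower bound. In your primary formula, the identical-half terms can carry only relative weight $O(1/r)$, so they should simply be dropped.

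\textbf{What works.} Take the pure product: your double sum with $c_{m,k}\equiv 1/\|\Lambda^Y_r\|$ and nothing else. That is, $\Gamma=\Gamma^X_r\otimes H^Y_r$ with $\Lambda^Y_r=\sum_k P^{\otimes(k-1)}\otimes B\otimes I^{\otimes(r-k)}$ and $H^Y_r=\Lambda^Y_r/\|\Lambda^Y_r\|$. This is exactly the paper's matrix.

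Your worry about $r^2$ terms disappears because the sum factors as a tensor product. Coordinate reversal gives $\|\Lambda^Y_r\|=\|\Gamma^X_r\|=r(M-1)/\sqrt M$, so the numerator is $r(M-1)/\sqrt M$. A filter of product form $\Delta^X\otimes\Delta^Y$ costs $\|\Gamma^X_r\circ\Delta^X\|\,\|H^Y_r\circ\Delta^Y\|$.

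The estimates you actually have to prove are the junction and endpoint ones.
\begin{itemize}
\item For $\mathrm{prev}$ at the $Y_1$-vertex $a$, split the filter into $\{y_1=a\ne v_1\}$, $\{v_1=a\ne y_1\}$ and $\{y_1=v_1=a,\ x_r\ne u_r\}$. The $Y_1$-coordinate is a $P$-factor in every summand with $k\ge 2$, and $B$ has zero diagonal. Hence the filtered buffer factors $H^Y_r\circ\mathbf{1}[y_1=a\ne v_1]$, $H^Y_r\circ\mathbf{1}[v_1=a\ne y_1]$ and $H^Y_r\circ\mathbf{1}[y_1=v_1=a]$ have norms $O(M^{-1/2})$, $O(M^{-1/2})$ and $O(1/M)$. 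Against $\|\Gamma^X_r\|\le r\sqrt M$ (and $\|\Gamma^X_r\circ\mathbf{1}[x_r\ne u_r]\|\le\|\Gamma^X_r\|$ by entrywise nonnegativity) this gives $O(r)$, $O(r)$ and $O(1)$.
\item For $\mathrm{next}$ at the $X_r$-vertex $a$, you symmetrically need $\|\Gamma^X_r\circ\mathbf{1}[z_r=w_r=a]\|\le r/\sqrt M$ together with $\|H^Y_r\circ\mathbf{1}[y_1\ne v_1]\|=O(1)$.
\item For $\mathrm{prev}(q)$ only $k=r$ survives, so $\|H^Y_r\circ\mathbf{1}[y_r\ne v_r]\|=O(1/r)$.
\item The mark query gives $\|\Gamma^X_r\circ\mathbf{1}[z_r=a]\|\le r$.
\end{itemize}

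Two smaller omissions remain. First, the $X$-core predecessor filter, on summands whose first disagreement is at $j-1$, produces the factor $E_aP+PE_a-\frac1M E_a$ (norm at most $3/\sqrt M$), which is not on your list. Second, internal buffer queries are handled by mapping them under coordinate reversal to internal $X$-core queries.

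With these in place the denominator is $O(r)$, the ratio is $\Omega(r)$, and your Step~4 goes through as written.
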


The proof follows the same two-step strategy as the single linked-list lower bound: first prove a boundary lower
bound for one hard anchored segment at the cubic threshold, and then amplify this boundary
lower bound by serially packing many independent segments and applying adversary composition. The serial composition of independent segments follows the same pattern as in the proof for singly linked lists, however proving the boundary lower bound for double linked lists requires more technicalities and additional ideas.

Let us spell out the obstructions faced for proving the boundary lower bound for double linked lists more
quantitatively.  Suppose that we try to use exactly the anchored 
layered family from the single linked-list lower bound inside a double linked
list.  Thus the hidden path has the form
        $p \to z_1 \to z_2 \to \cdots \to z_{r} \to q$, with $z_r$ either marked or unmarked. The issue in the double linked-list model is the predecessor query at \(q\). This forces the adversary ratio for the matrix to be $O(1)$.

The remedy is to separate the marked vertex from the public exit by a buffer. A buffered
segment consists of a one-sided path $x_1\to x_2\to \cdots \to x_{r},$
followed by an unmarked buffer path $y_1\to y_2\to \cdots \to y_{r}.$ We represent this input by $(x,y)$ with $x = (x_1,x_2, \cdots, x_{r})$ and $y = (y_1, y_2,\cdots,y_{r})$.
The mark, if present, is at \(x_{r}\), but the public
exit sees only the end of the buffer, \(y_{r}\).  This prevents the query \((\operatorname{prev},q)\) from directly revealing the marked vertex.
  
However, the buffer alone is not sufficient if we assign weights in the same way as we did for single linked lists along the entire \(2r\)-layer path.  To see this, suppose we
define the naive first-disagreement matrix with the disagreement at $m$,
        $\widetilde \Gamma
        =
        \sum_{m=1}^{2r} \widetilde A^{(m)},
        \qquad
        \widetilde A^{(m)}
        =
        I^{\otimes(m-1)}\otimes B\otimes P^{\otimes(2r-m)}$,
on the coordinates $x_1,\ldots,x_{r},y_1,\ldots,y_{r}$.
Then, $\|\widetilde\Gamma\|=\Theta(r\sqrt M),$
with an additional constant factor \(2\) hidden in the \(\Theta(\cdot)\).

The problem is now caused by mark queries in the denominator.  Fix an address \(a\) in the layer
\(X_{r}\).  In a yes-input, the only possible marked vertex is
\(x_{r}\), while in a no-input no vertex is marked.  Hence the mark query
\(g(a)\) distinguishes the row indexed by \((x,y)\) from any no-column
indexed by \((u,v)\) exactly when $x_{r}=a$.
Thus the corresponding filter is $ \Delta^g_a((x,y),(u,v)) = \mathbf 1[x_{r}=a]$.

Now consider the summands \(\widetilde A^{(m)}\) with \(m>r\), namely the
summands whose first disagreement occurs somewhere in the buffer. Note that these types of summands did not occur for mark queries for single linked lists.  On the
support of such a summand, all earlier coordinates agree, including the marked
coordinate: $x_{r}=u_{r}$.
After applying the filter \(x_{r}=a\), the local identity factor $I$ at the
marked coordinate is refined to the projector
        $E_a=|a\rangle\langle a|$,
which still has norm \(1\).  

More explicitly, define
       $ v_a
        =
        |s_M\rangle^{\otimes(r-1)}
        \otimes |a\rangle
        \otimes |s_M\rangle^{\otimes r}$.
For every \(m>r\), the filtered summand
\(\widetilde A^{(m)}\circ \Delta^g_a\) acts on \(v_a\) by the scalar
\((M-1)/\sqrt M\). Indeed for \(m>r\), after filtering, the local factors at all coordinates are 
\(I\), \(E_a\), \(B\), or \(P\).  The factors \(I\), \(E_a\), and \(P\) have
norm at most \(1\). However the single \(B\)-coordinate gives,
$B|s_M\rangle
        = \frac{1}{\sqrt M}(J-I)|s_M\rangle = \frac{M-1}{\sqrt {M}}|s_M\rangle$ .

Hence the contribution of the \(r\) buffer-disagreement
summands gives
\[
        \| \widetilde \Gamma\circ \Delta^g_a \|
        \ge
        r\frac{M-1}{\sqrt M}
        =
        \Omega(r\sqrt M).
\]
This is again the same order as the numerator
\(\|\widetilde\Gamma\|=\Theta(r\sqrt M)\).  Thus the adversary ratio of this
naive buffered matrix is still only \(O(1)\).

Thus we see that if the adversary weights are assigned
by first disagreement along the whole buffered path, then many heavily weighted pairs agree at the marked coordinate \(x_{r}\) and first split only later inside the buffer.  A single mark query at \(x_{r}\) sees all such pairs, so the denominator becomes as large as the numerator. We overcome these difficulties by following a different weighting strategy to the matrix $\Gamma$ in Section~\ref{subsubsec:boundbuffdoub}.

In Sections~\ref{subsubsec:buffdoub} to \ref{subsubsec:allbuffdoub} we first show the serial composition of buffered segments works. This is similar to Sections  ~\ref{subsec:anchored-segments} to ~\ref{subsec:serial-packing}. The calculations require some basic modifications, so we give all the proofs for completeness. Then we show the boundary lower bound in Section~\ref{subsubsec:boundbuffdoub} which requires different ideas.

\subsubsection{Buffered anchored segment} \label{subsubsec:buffdoub}

We now define the inner problem used in the lower bound. This plays the same role for double
linked lists that \(\mathrm{ASEGDEC}(N,L)\) played for single linked lists.

\begin{definition}[Buffered anchored double linked segment] \label{def:dbufdec}
Fix integers \(N \ge 2L \ge 1\). Let \(p,q\) be two public connector
symbols, distinct from the private address universe \([N]\).

An input to \(\operatorname{DBUFDEC}(N,L)\) consists of oracle access to
\[
\operatorname{next}:\{p\}\cup [N]\to [N]\cup\{q,\bot\},
\qquad
\operatorname{prev}:[N]\cup\{q\}\to \{p\}\cup [N]\cup\{\bot\},
\qquad
g:[N]\to\{0,1\},
\]
with the promise that there are pairwise distinct private vertices
\[
x_1,\ldots,x_L,y_1,\ldots,y_L\in [N]
\]
such that the hidden double linked path inside the block is
\[
p\to x_1\to x_2\to\cdots\to x_L
\to y_1\to y_2\to\cdots\to y_L\to q .
\]
Thus
\[
\operatorname{next}(p)=x_1,\qquad
\operatorname{next}(x_i)=x_{i+1}\quad (1\le i<L),
\]
\[
\operatorname{next}(x_L)=y_1,\qquad
\operatorname{next}(y_i)=y_{i+1}\quad (1\le i<L),
\qquad
\operatorname{next}(y_L)=q,
\]
and
\[
\operatorname{prev}(x_1)=p,\qquad
\operatorname{prev}(x_i)=x_{i-1}\quad (2\le i\le L),
\]
\[
\operatorname{prev}(y_1)=x_L,\qquad
\operatorname{prev}(y_i)=y_{i-1}\quad (2\le i\le L),
\qquad
\operatorname{prev}(q)=y_L.
\]
Every private address not lying on the displayed path is dead in both directions:
\[
\operatorname{next}(v)=\operatorname{prev}(v)=\bot
\qquad
\text{for every }
v\in [N]\setminus\{x_1,\ldots,x_L,y_1,\ldots,y_L\}.
\]
We will frequently refer to the vertices \(\{x_1,\ldots,x_L\}\) as belonging
to the \(X\)-core part and vertices \(\{y_1,\ldots,y_L\}\) as belonging to
the \(Y\)-buffer part.

The marking promise is that all private vertices are unmarked except possibly
\(x_L\):
\(g(v)=0\) for every  \(v\in [N]\setminus\{x_L\}\).
The output of \(\operatorname{DBUFDEC}(N,L)\) is \(1\) if and only if
\[
g(x_L)=1.
\]
Equivalently, a no-input has no marked vertex, while a yes-input marks exactly
\(x_L\).
\end{definition}

The boundary lower bound for buffered segments is the analogue of Theorem~\ref{thm:boundary-aseg}.

\begin{theorem}[Buffered boundary lower bound]
\label{thm:dbuf-boundary}
There is an absolute constant \(c_{\mathrm{buf}}>0\) such that, for every \(L\ge 1\),
\[
        \mathrm{Adv}^{\pm}\!\left(\mathrm{DBUFDEC}(2L^3,L)\right)
        \ge c_{\mathrm{buf}}L.
\]
\end{theorem}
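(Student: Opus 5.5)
We restrict $\mathrm{DBUFDEC}(2L^3,L)$ to a layered family and exhibit an explicit adversary matrix. For $L=1$ the bound is $\Omega(1)$, since the problem is nontrivial. For $L\ge 2$ set $r=L$ and $M=r^2$, and split the $2r^3$ addresses into $2r$ layers $X_1,\ldots,X_r,Y_1,\ldots,Y_r$, each of width $M$. An input is a pair $(x,y)\in[M]^r\times[M]^r$ giving the path $p\to x_1\to\cdots\to x_r\to y_1\to\cdots\to y_r\to q$, with the mark possibly at $x_r$. Yes-rows are indexed by $(x,y)$ and no-columns by $(u,v)$. Following the hint in the text, the weighting is \emph{first disagreement from the left} on the $X$-core and \emph{first disagreement from the right}, that is, read from $q$, on the $Y$-buffer. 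Concretely we take
\[
\Gamma=\sum_{m=1}^{r} A^{(m)}\otimes P^{\otimes r}+\sum_{m=1}^{r} P^{\otimes r}\otimes \bigl(P^{\otimes(m-1)}\otimes B\otimes I^{\otimes(r-m)}\bigr).
\]
Here $A^{(m)}$ is the single-list summand from Section~\ref{subsec:boundary-proof}. The second family says that the buffers agree on $y_{m+1},\ldots,y_r$, differ at $y_m$, and are averaged over $y_1,\ldots,y_{m-1}$; the $X$-part is averaged completely. The numerator is computed exactly as in Lemma~\ref{lem:gamma-numerator}: the uniform vector $|s\rangle^{\otimes 2r}$ is an eigenvector of every summand with eigenvalue $(M-1)/\sqrt M$, so $\|\Gamma\|\ge 2r(M-1)/\sqrt M=\Omega(r^2)$.

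For the denominator we go through the query types as in Lemma~\ref{lem:filtered-norms}, bounding each filtered summand by a tensor product of the local matrices from Lemmas~\ref{lem:one-coordinate-local-estimates} and~\ref{lem:two-coordinate-local-estimate}.

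\emph{Mark queries at $a\in X_r$.} The filter is $\mathbf 1[x_r=a]$ on rows.
\begin{itemize}
\item For $X$-summands this is the single-list computation, giving total at most $r$.
\item For $Y$-summands the $x_r$ factor is $P$, which becomes $F_a$ with norm $M^{-1/2}$; with $\|B\|\le\sqrt M$ each term is $O(1)$, so the total is $O(r)$.
\end{itemize}
This is precisely why the reversed buffer weighting was needed: it removes the $I$ factor at $x_r$ that caused the obstruction for the naive matrix.

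\emph{$\operatorname{next}(p)$ and $\operatorname{prev}(q)$.} These isolate $A^{(1)}$ and the $m=r$ buffer summand respectively, each of norm at most $\sqrt M$. Both behave like entrance queries.

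\emph{Internal $\operatorname{next}$ queries within $X$.} These are exactly the single-list analysis, giving $O(\sqrt M+r)$. On $Y$-summands the two coordinates $(x_j,x_{j+1})$ lie in $P\otimes P$ and become $H_a$, so each term is $\|B\|\,\|H_a\|=O(1)$, for $O(r)$ in total.

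\emph{Internal $\operatorname{prev}$ queries within $Y$.} A $\operatorname{prev}$ query at $y_j$ depends on $(y_{j-1},y_j)$. This is the mirror image of the single-list next-query analysis under the reversal $y_i\leftrightarrow y_{r+1-i}$, since the buffer weighting is the reversed one. The four cases therefore carry over verbatim and give $O(\sqrt M+r)$.

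\emph{Cross queries.} These are the $\operatorname{next}$ queries in $Y$, the $\operatorname{prev}$ queries in $X$, and the junction queries $\operatorname{next}(x_r\text{-layer})$ and $\operatorname{prev}(y_1\text{-layer})$. A cross query depends on a pair of coordinates. We check, summand by summand, that the local factors on that pair are of one of the following forms:
\begin{itemize}
\item $P\otimes P$, which becomes $H_a$ (or its transposed analogue), giving $O(1/\sqrt M)$;
\item $I\otimes I$, where the query answers agree and the term vanishes;
\item $E_a\otimes B$- or $S_a\otimes P$-type refinements, giving norm at most $\sqrt M$ but occurring for only $O(1)$ values of $m$;
\item a $B$ next to $I$, where the differing coordinate is the returned value.
\end{itemize}

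The main obstacle is the junction queries, e.g.\ $\operatorname{next}(v)$ for $v\in X_r$, which returns $y_1$. On an $X$-summand the $y_1$ factor is $P$, and on a $Y$-summand the $x_r$ factor is $P$, so in both cases one factor of the relevant pair is $P$. The hardest instance is the $Y$-summands with large $m$, where $y_1$ sits in $P^{\otimes(m-1)}$ but $x_r$ also sits in $P$. Here we get $H_a$-type factors, and since $\|B\|\cdot 2/\sqrt M=O(1)$, the sum is $O(r)$. For $m=1$ the $B$ sits at $y_1$, and the factor becomes the one-sided version of $S_a$ crossed with $E_a$-refinement, giving $O(1)$.

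Collecting all cases gives $\max_q\|\Gamma\circ\Delta_q\|\le C(\sqrt M+r)=2Cr$. Hence $\mathrm{Adv}^{\pm}\ge\Omega(r^2)/(2Cr)=\Omega(L)$, and monotonicity under restriction finishes the proof.
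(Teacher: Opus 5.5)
Your matrix fails at the two connector queries; in fact its adversary ratio is only $O(1)$. Each family of summands is completely averaged, $P^{\otimes r}$, on the opposite half. Filtering a $P$-factor by a disagreement predicate costs almost nothing, since $P\circ\mathbf 1[x\neq y]=P-M^{-1}I$ has norm $1-1/M$.

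The query $\operatorname{next}(p)$ reveals $x_1$ on every input. It does kill the $X$-summands with $m>1$. On the $Y$-summands, however, it only replaces the $x_1$-factor $P$ by $P-M^{-1}I$:
\[
\Gamma\circ\Delta_{\operatorname{next}(p)}
= A^{(1)}\otimes P^{\otimes r}
+(P-M^{-1}I)\otimes P^{\otimes(r-1)}\otimes\Lambda^Y,
\qquad
\Lambda^Y=\sum_{m=1}^{r}P^{\otimes(m-1)}\otimes B\otimes I^{\otimes(r-m)} .
\]
All matrices here are entrywise nonnegative, so this norm is at least $(1-1/M)\|\Lambda^Y\|\ge(1-1/M)\,r(M-1)/\sqrt M$. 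At $M=r^2$ this is $\Omega(r^2)$, while $\|\Gamma\|=2r(M-1)/\sqrt M$ exactly.

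Symmetrically, $\operatorname{prev}(q)$ reveals $y_r$. It retains $\Gamma_r\otimes P^{\otimes(r-1)}\otimes(P-M^{-1}I)$, with $\Gamma_r=\sum_m A^{(m)}$, which again has norm $\Omega(r^2)$. Your claim that these queries ``isolate $A^{(1)}$ and the $m=r$ buffer summand'' checked only the summands on the queried side. These cross-terms are harmless for address-specific queries, where $P\otimes P$ becomes $H_a$. They are not harmless for the connector queries, which reveal a coordinate unconditionally.

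The missing idea is that every weighted pair must disagree on \emph{both} halves, each in its structured way. Then a query on either half sees only the controlled portion of that half's weight. The paper achieves this with a tensor product rather than a sum: $\Gamma^{\mathrm{buf}}_r=\Gamma_r\otimes H^Y_r$ with $H^Y_r=\Lambda^Y/\|\Lambda^Y\|$. With this matrix:
\begin{enumerate}
\item $\operatorname{next}(p)$ keeps only $A^{(1)}\otimes H^Y_r$, of norm at most $\sqrt M$.
\item $\operatorname{prev}(q)$ keeps only $\Gamma_r\otimes\bigl(P^{\otimes(r-1)}\otimes B\bigr)/\|\Lambda^Y\|$, of norm at most $r\sqrt M\cdot\sqrt M/\Omega(r\sqrt M)=O(r)$.
\item A mark query touches only the $\Gamma_r$ factor, giving $r$ exactly as in the single list, since $\|H^Y_r\|=1$.
\end{enumerate}
The numerator becomes $\|\Gamma_r\|$ rather than twice it, which is still $\Omega(r^2)$. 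The junction queries at $X_r$ and $Y_1$ then need finer estimates, namely $\|\Gamma_r\circ\mathbf 1[x_r=u_r=a]\|\le r/\sqrt M$ and $\|H^Y_r\circ\mathbf 1[y_1=v_1=a]\|=O(1/M)$. These are what the paper's filtered-estimate lemmas supply.
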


We defer the proof of Theorem~\ref{thm:dbuf-boundary} to Section~\ref{subsubsec:boundbuffdoub}. We first
show how it implies the full lower bound for hidden double linked lists, i.e. Theorem~\ref{thm:hddl-lower}. 

\subsubsection{Serial packing of buffered segments} \label{subsubsec:serbuffdoub}

We use the same two standard adversary facts as in Section~\ref{sec:lower-bounds}. First, the adversary bound is
monotone under restrictions. Second, if \(F\) is an outer Boolean function and \(G\) is an inner
Boolean function, then the negative adversary bound composes:  $\mathrm{Adv}^{\pm}(F\circ G^m)
        \ge
        \mathrm{Adv}^{\pm}(F)\,\mathrm{Adv}^{\pm}(G).$
We apply this with \(F=\mathrm{UOR}_m\), unique-OR on \(m\) bits, for which $\mathrm{Adv}^{\pm}(\mathrm{UOR}_m)=\Theta(\sqrt m).$

\begin{lemma}
\label{lem:dbuf-serial-packing}
Let \(m,N_0,L_0\ge 1\) with \(N_0\ge 2L_0^3\). There is a restriction of
\[
\operatorname{HDDLDEC}\bigl(mN_0+m+1,\;m(2L_0+1)+1\bigr)
\]
which is exactly
\[
\operatorname{UOR}_m\circ \operatorname{DBUFDEC}(N_0,L_0)^m
\]
under the promise that at most one segment is positive.
\end{lemma}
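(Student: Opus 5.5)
The construction mirrors Lemma~\ref{lem:serial-packing}, with one adjustment: the public connectors of each buffered block become bridge vertices that must also carry correct predecessor pointers. Choose pairwise disjoint private universes $U_1,\ldots,U_m\subseteq[N]$ with $|U_t|=N_0$. Choose $m+1$ further distinct bridge addresses $b_0,b_1,\ldots,b_m$ outside $\bigcup_t U_t$. This uses exactly $mN_0+m+1$ addresses, which matches the stated universe size.

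The global list is
\[
s\to b_0\to x_{1,1}\to\cdots\to x_{1,L_0}\to y_{1,1}\to\cdots\to y_{1,L_0}\to b_1\to x_{2,1}\to\cdots\to y_{m,L_0}\to b_m\to\bot .
\]
Its length is $1+m(2L_0+1)$, where the term $1$ counts $b_0$ and each block contributes $2L_0$ private vertices plus one bridge. This is $m(2L_0+1)+1$, as claimed. Block $t$ uses $p=b_{t-1}$ and $q=b_t$ as its entry and exit connectors.

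The oracles are set as follows.
\begin{itemize}
\item Inside $U_t$, $\operatorname{next}$, $\operatorname{prev}$ and $g$ are exactly the oracles of the $t$-th local input $y^{(t)}\in\Dom(\operatorname{DBUFDEC}(N_0,L_0))$. Local symbols $p,q$ are renamed to $b_{t-1},b_t$.
\item Bridge $b_t$ with $1\le t<m$ has $\operatorname{prev}(b_t)=y_{t,L_0}$ and $\operatorname{next}(b_t)=x_{t+1,1}$. This is the local value $\operatorname{prev}(q)$ of block $t$ and the local value $\operatorname{next}(p)$ of block $t+1$.
\item The endpoint bridges have $\operatorname{next}(s)=b_0$, $\operatorname{prev}(s)=\bot$, $\operatorname{prev}(b_0)=s$, $\operatorname{next}(b_0)=x_{1,1}$, $\operatorname{prev}(b_m)=y_{m,L_0}$ and $\operatorname{next}(b_m)=\bot$.
\item All bridges are unmarked. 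Every remaining address is dead in both directions and unmarked.
\end{itemize}

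One must check that this satisfies the promise of Definition~\ref{def:dbl-hddl}: distinct vertices, correct $\operatorname{next}$ and $\operatorname{prev}$ along the path, and at most one mark under the unique-OR promise. These are routine verifications.

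The step needing care is showing that the restriction is \emph{exactly} a block composition, meaning every query coordinate depends on at most one block. The potential problem is the bridge coordinates, for example $(\operatorname{prev},b_t)$ and $(\operatorname{next},b_t)$, which involve both block $t$ and block $t+1$. I would resolve this by reassigning coordinates. The coordinate $(\operatorname{prev},b_t)$ is treated as the local coordinate $(\operatorname{prev},q)$ of block $t$. The coordinate $(\operatorname{next},b_{t-1})$ is treated as the local coordinate $(\operatorname{next},p)$ of block $t$. Under this assignment each global coordinate is either a fixed constant or a coordinate of exactly one local block, and the block coordinate sets are disjoint.

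The global mark queries are $g(b_t)=0$ (fixed) and $g$ on $U_t$ (block $t$). So the global input is a disjoint tuple $(y^{(1)},\ldots,y^{(m)})$ plus fixed data. The global list is marked if and only if some local $g(x_{t,L_0})=1$. Under the promise that at most one block is positive, the function is therefore $\UOR_m\circ\operatorname{DBUFDEC}(N_0,L_0)^m$.

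The hypothesis $N_0\ge 2L_0^3$ plays no role in the packing itself; any $N_0\ge 2L_0$ suffices. It is only what later feeds Theorem~\ref{thm:dbuf-boundary}.
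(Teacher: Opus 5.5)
Your proposal is correct and follows essentially the same route as the paper: the same $m+1$ shared unmarked connectors, the same length and address counts, and the same observation that the coordinate $(\operatorname{prev},b_t)$ depends only on block $t$ while $(\operatorname{next},b_t)$ depends only on block $t+1$, so the blocks are disjoint as query blocks and the restriction is $\operatorname{UOR}_m\circ\operatorname{DBUFDEC}(N_0,L_0)^m$. Your explicit endpoint pointers (e.g.\ $\operatorname{prev}(b_0)=s$) and your remark that the packing itself needs only $N_0\ge 2L_0$ are accurate refinements of the paper's proof, not a different argument.
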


\begin{proof}
Choose pairwise disjoint private address sets
$U_1,\ldots,U_m$ and $|U_t|=N_0\ (t\in[m])$. Choose fixed connector vertices
$c_0,c_1,\ldots,c_m$
outside \(U_1\cup\cdots\cup U_m\). The public start symbol is \(s\), which is
not an address in \([N]\).

Inside block \(t\), place one copy of \(\operatorname{DBUFDEC}(N_0,L_0)\) on
the private universe \(U_t\), with left connector \(c_{t-1}\) and right
connector \(c_t\). Equivalently, in the notation of
\(\operatorname{DBUFDEC}\), block \(t\) uses
\[
p=c_{t-1},\qquad q=c_t.
\]
If the private segment variables in block \(t\) are
$x_{t,1},\ldots,x_{t,L_0},
y_{t,1},\ldots,y_{t,L_0}$,
then the global hidden double linked list is
$s\to c_0 \to x_{1,1}\to\cdots\to x_{1,L_0}
  \to y_{1,1}\to\cdots\to y_{1,L_0}\to c_1 \to \cdots
\to c_{m-1}
  \to x_{m,1}\to\cdots\to x_{m,L_0}
  \to y_{m,1}\to\cdots\to y_{m,L_0}\to c_m\to\bot $.

Thus the connector \(c_t\) is simultaneously the right connector of block \(t\)
and the left connector of block \(t+1\).

Every address outside the displayed path is dead in both directions.

The list has
$2mL_0+(m+1)=m(2L_0+1)+1$
vertices: \(2L_0\) private vertices from each block and \(m+1\) shared
connector vertices. The ambient address space used has size
$mN_0+(m+1)=mN_0+m+1$.

All connector vertices are forced to be unmarked:
$g(c_t)=0, (0\le t\le m)$.
Inside block \(t\), the marking oracle marks the \(t\)-th
copy of \(\operatorname{DBUFDEC}(N_0,L_0)\).

The only input-dependent parts are the \(m\) private buffered segments. The
shared connector \(c_t\) causes no overlap of input-dependent query entries:
the left block uses the query entry \((\operatorname{prev},c_t)\), while the
right block uses the query entry \((\operatorname{next},c_t)\). These are
different query entries in the double linked-list query model. Therefore the
\(m\) blocks are disjoint as query blocks.

Hence the global list contains a marked vertex if and only if at least one of
the \(m\) buffered segments is positive. Under the promise that at most one
segment is positive, the restricted problem is exactly
\[
\operatorname{UOR}_m\circ \operatorname{DBUFDEC}(N_0,L_0)^m .
\]
\end{proof}

\subsubsection{From the boundary lower bound to all parameters}
\label{subsubsec:allbuffdoub}

This section is the double-linked analogue of the serial-packing step in the
single linked-list lower bound. We will now prove Theorem~\ref{thm:hddl-lower}.

The cases \(\ell\le 5\) are absorbed into the absolute constant, since a fixed
path with one variable mark bit gives a constant lower bound. Assume from now on
that \(\ell\ge 6\).

We split into two regimes.

\paragraph{Regime I: \(N\ge \ell^3\).}
In this regime,
\[
        (N\ell)^{1/4}\ge \ell,
\]
so the desired lower bound is \(\Omega(\ell)\).

Choose
\[
        L_0=\left\lfloor \frac{\ell-2}{2}\right\rfloor,
        \qquad
        N_0=2L_0^3.
\]
Then $2L_0+2\le \ell$ and $L_0=\Omega(\ell)$.
Place one copy of \(\mathrm{DBUFDEC}(N_0,L_0)\) inside the hidden double linked
list. If \(2L_0+2<\ell\), append a fixed unmarked tail after the segment so
that the total list length is exactly \(\ell\). By Lemma~\ref{lem:dbuf-serial-packing}
with \(m=1\), the packed segment uses \(N_0+2\) addresses. Hence the total
number of addresses used after adding the fixed tail is
\[
        N_0+2+\bigl(\ell-(2L_0+2)\bigr)
        =
        2L_0^3+\ell-2L_0.
\]
Since \(L_0\le \ell/2\), this is at most
       $ \frac{\ell^3}{4}+\ell
        \le
        \ell^3$
for all \(\ell\ge 6\). Because \(N\ge \ell^3\), the construction fits inside
the available ambient universe. All unused addresses are padded as dead vertices.

Thus \(\mathrm{HDDLDEC}(N,\ell)\) has a restriction containing one copy of
\(\mathrm{DBUFDEC}(2L_0^3,L_0)\). By monotonicity under restrictions and
Theorem~\ref{thm:dbuf-boundary},
\[
        \mathrm{Adv}^{\pm}(\mathrm{HDDLDEC}(N,\ell))
        \ge
        \mathrm{Adv}^{\pm}(\mathrm{DBUFDEC}(2L_0^3,L_0))
        \ge
        \Omega(L_0)
        =
        \Omega(\ell).
\]
This proves the desired bound in the large-universe regime.

\paragraph{Regime II: \(N < \ell^3\).}
In this regime the target lower bound is
\[
        \Omega\!\left((N\ell)^{1/4}\right).
\]
Put
        $\alpha=\sqrt{\frac{N}{\ell}}$.
Since \(N\ge \ell\), we have \(\alpha\ge 1\), and since \(N < \ell^3\), we have
\(\alpha\le \ell\).

Choose
       $ L_0=\max\left\{1,\left\lfloor \frac{\alpha}{64}\right\rfloor\right\}$,
        $N_0=2L_0^3$,
and
        $m=\max\left\{1,\left\lfloor \frac{\ell-1}{4L_0+2}\right\rfloor\right\}$.
We pack \(m\) buffered segments of parameters \((N_0,L_0)\) using
Lemma~\ref{lem:dbuf-serial-packing}. Thus the packed part has list length $m(2L_0+1)+1$
and uses $mN_0+m+1$
addresses. If needed, we append a fixed unmarked tail to bring the total list
length to exactly \(\ell\).

{\bf Condition 1: We first verify that the packed part has length at most \(\ell\).} If \(m>1\),
then
        $m
        \le
        \frac{\ell-1}{4L_0+2}$,
and hence
\[
        m(2L_0+1)+1
        \le
        \frac{\ell-1}{4L_0+2}(2L_0+1)+1
        =
        \frac{\ell-1}{2}+1
        \le
        \ell.
\]
If \(m=1\) and \(L_0=1\), then
        $m(2L_0+1)+1=4\le \ell$,
since \(\ell\ge 6\). If \(m=1\) and \(L_0>1\), then
        $L_0\le \frac{\alpha}{64}\le \frac{\ell}{64}$,
and therefore
        $m(2L_0+1)+1
        =
        2L_0+2
        \le
        \frac{\ell}{32}+2
        \le
        \ell$.
Thus in all cases,
      $m(2L_0+1)+1\le \ell.$

{\bf Condition 2: the packed segments use at most \(N\) addresses.}
By Lemma~\ref{lem:dbuf-serial-packing}, the packed part uses
        $mN_0+m+1 = m(2L_0^3+1)+1$
addresses.

First suppose \(L_0=1\). Then the packed part uses
       $ 3m+1$
addresses. If \(m>1\), then
        $m\le \frac{\ell-1}{6}$,
so
        $3m+1
        \le
        \frac{\ell-1}{2}+1
        \le
        \ell
        \le
        N.$
If \(m=1\), then \(3m+1=4\le \ell\le N\). Thus the address budget holds when
\(L_0=1\).

Now suppose $L_0>1$. Then
$L_0=\left\lfloor \frac{\alpha}{64}\right\rfloor
\le \frac{\alpha}{64}
\le \frac{\ell}{64}$,
because $\alpha=\sqrt{N/\ell}\le \ell$ in the regime $N\le \ell^3$.
Therefore $\ell/(4L_0)\ge 16$, and in particular
$1\le \ell/(4L_0)$. Moreover,
$\left\lfloor \frac{\ell-1}{4L_0+2}\right\rfloor
\le
\frac{\ell}{4L_0}$.
Since both terms in $m$ are at most
$\ell/(4L_0)$, we get
$m\le \frac{\ell}{4L_0}$.
Therefore
\[
\begin{aligned}
        m(2L_0^3+1)+1
        &\le
        \frac{\ell}{4L_0}(2L_0^3+1)+1  \\
        &=
        \frac{\ell L_0^2}{2}
        +
        \frac{\ell}{4L_0}
        +
        1.
\end{aligned}
\]
Since \(L_0\le \alpha/64\), the first term is at most
       $ \frac{\ell}{2}\left(\frac{\alpha}{64}\right)^2
        =
        \frac{N}{8192}$.
Since \(L_0\ge 2\), the second term is at most
        $\frac{\ell}{8}
        \le
        \frac{N}{8}$.
Finally, because \(L_0 \ge 2\) implies \(\alpha\ge 128\), we have \(N\ge 128^2\ell\ge 6\), so
       $1\le \frac{N}{8}$.
Hence
\[
        m(2L_0^3+1)+1
        \le
        \frac{N}{8192}+\frac{N}{8}+\frac{N}{8}
        \le
        N.
\]

{\bf Condition 3: Now we check that the fixed tail also fits inside the unused address space.}
The remaining list length is
        $\ell-\bigl(m(2L_0+1)+1\bigr)$.
This tail uses one fresh address per added list vertex. Therefore we need
       $N-\bigl(mN_0+m+1\bigr)
        \ge
        \ell-\bigl(m(2L_0+1)+1\bigr)$.
Since \(N_0=2L_0^3\), this is equivalent to
\[
        N-\ell
        \ge
        m(N_0-2L_0)
        =
        m(2L_0^3-2L_0).
\]

If \(L_0=1\), the right-hand side is zero, so the condition follows from
\(N\ge \ell\). Now suppose \(L_0>1\). Then
\[
        L_0=\left\lfloor \frac{\alpha}{64}\right\rfloor\ge 2,
\]
so \(\alpha\ge 128\). Also \(L_0\le \alpha/64\). Since \(L_0\le \ell/64\), we
have
        $\frac{\ell}{4L_0}\ge 16$.
Hence, whether \(m=1\) or \(m>1\), we have
\[
        m\le \frac{\ell}{4L_0}.
\]
Therefore
\[
        m(2L_0^3-2L_0)
        \le
        2mL_0^3
        \le
        \frac{\ell L_0^2}{2}
        \le
        \frac{\ell}{2}\left(\frac{\alpha}{64}\right)^2
        =
        \frac{N}{8192}.
\]
On the other hand,
\[
        N-\ell
        =
        N\left(1-\frac{1}{\alpha^2}\right)
        \ge
        N\left(1-\frac{1}{128^2}\right)
        \ge
        \frac{N}{2}
        \ge
        \frac{N}{8192}.
\]
Thus the fixed tail fits.

So the packed construction gives a valid restriction of
\(\mathrm{HDDLDEC}(N,\ell)\). By Lemma~\ref{lem:dbuf-serial-packing}, this
restriction is
\[
        \mathrm{UOR}_m\circ \mathrm{DBUFDEC}(N_0,L_0)^m.
\]
By monotonicity under restrictions, adversary composition, and
Theorem~\ref{thm:dbuf-boundary},
\[
\begin{aligned}
        \mathrm{Adv}^{\pm}(\mathrm{HDDLDEC}(N,\ell))
        &\ge
        \mathrm{Adv}^{\pm}(\mathrm{UOR}_m)\,
        \mathrm{Adv}^{\pm}(\mathrm{DBUFDEC}(N_0,L_0))  \\
        &=
        \Omega(\sqrt m\,L_0).
\end{aligned}
\]

It remains to compare \(\sqrt m\,L_0\) with \((N\ell)^{1/4}\). Let
        $x=\frac{\ell-1}{4L_0+2}$.
Since
        $m=\max\{1,\lfloor x\rfloor\}$,
we always have
        $m\ge \frac{x}{2}$.
Indeed, if \(x<2\), then \(m\ge 1>x/2\), while if \(x\ge 2\), then
\(\lfloor x\rfloor\ge x/2\). Thus
        $m
        \ge
        \frac{\ell-1}{8L_0+4}$.
Since \(\ell\ge 6\), we have \(\ell-1\ge \ell/2\), and hence
       $m
        \ge
        \frac{\ell}{16(L_0+1)}$.
Therefore
\[
        L_0\sqrt m
        \ge
        L_0\sqrt{\frac{\ell}{16(L_0+1)}}
        =
        \frac{1}{4}
        \sqrt{\frac{\ell L_0^2}{L_0+1}}.
\]
Since \(L_0\ge 1\),
        $\frac{L_0^2}{L_0+1}\ge \frac{L_0}{2}$.
Therefore
\[
        L_0\sqrt m
        \ge
        \frac{1}{4\sqrt 2}\sqrt{\ell L_0}.
\]

Now recall that
\[
        (N\ell)^{1/4}
        =
        \sqrt{\alpha\ell}.
\]
If \(\alpha<128\), then \(L_0=1\), and
\[
        (N\ell)^{1/4}
        =
        \sqrt{\alpha\ell}
        \le
        \sqrt{128}\sqrt{\ell}
        =
        O(L_0\sqrt m).
\]
If \(\alpha\ge 128\), then
\[
        L_0
        =
        \left\lfloor \frac{\alpha}{64}\right\rfloor
        \ge
        \frac{\alpha}{128},
\]
and hence
\[
        L_0\sqrt m
        =
        \Omega(\sqrt{\ell L_0})
        =
        \Omega(\sqrt{\ell\alpha})
        =
        \Omega\!\left((N\ell)^{1/4}\right).
\]
Thus
\[
        \mathrm{Adv}^{\pm}(\mathrm{HDDLDEC}(N,\ell))
        =
        \Omega\!\left((N\ell)^{1/4}\right)
\]
throughout the regime \(N\le \ell^3\).

Combining the two regimes gives
\[
        \mathrm{Adv}^{\pm}(\mathrm{HDDLDEC}(N,\ell))
        \ge
        c\min\{\ell,(N\ell)^{1/4}\}
\]
for an absolute constant \(c>0\).

Finally, the standard adversary characterization gives
\[
        Q(\mathrm{HDDLDEC}(N,\ell))
        =
        \Omega\!\left(\mathrm{Adv}^{\pm}(\mathrm{HDDLDEC}(N,\ell))\right).
\]
Also, any algorithm for \(\mathrm{HDDLSearch}(N,\ell)\) decides whether a marked
vertex exists. Hence the same lower bound holds for
\(\mathrm{HDDLSearch}(N,\ell)\).

\subsubsection{Proof of the buffered boundary lower bound} \label{subsubsec:boundbuffdoub}

It remains to prove Theorem~\ref{thm:dbuf-boundary}.  The proof is a negative-weight
adversary argument.  

Recall the obstruction discussed above.  In the single linked-list boundary
proof, the adversary matrix gives larger weight to a yes/no pair when the two
hidden paths agree for a longer prefix before they first split.  This is the
right intuition there: if two paths split only late, then a query near the
beginning cannot distinguish them, so the pair should be treated as harder.

For double linked lists this idea cannot be applied naively to the whole
buffered path
\[
        p \to x_1 \to x_2 \to \cdots \to x_r
        \to y_1 \to y_2 \to \cdots \to y_r \to q .
\]
The mark, if present, is at the middle vertex \(x_r\), while the public exit
\(q\) sees only the end of the buffer, namely \(y_r\).  The buffer is needed
because otherwise the query \(\mathrm{prev}(q)\) would reveal the marked
endpoint directly.  However, if we simply weight pairs according to the first
disagreement along the entire \(2r\)-vertex path, then many large-weight pairs
would agree throughout the \(X\)-part and first split only inside the buffer.
All such pairs also agree at the marked coordinate \(x_r\).  A single mark
query at \(x_r\) would then see essentially the whole mass of these
buffer-disagreement pairs, making the denominator as large as the numerator.

The construction below avoids this by making the adversary matrix asymmetric
around the middle interface \(x_r \to y_1\).  On the \(X\)-core we use the same matrix from single linked lists: the weight becomes larger when the first disagreement
is closer to \(x_r\).  On the \(Y\)-buffer we use the reversed
matrix: the weight becomes larger when the last disagreement is closer to
\(y_1\), and smaller when disagreements occur near the public exit \(q\). This is analogous to the intuition that if two paths split around the middle then both prev and next queries cannot distinguish them soon, so this pair should be treated as harder.
Thus the largest weights are concentrated around the private middle interface
rather than near the public endpoints.  

We now make this precise. For \(L=1\), the problem has adversary bound \(\Omega(1)\): after fixing the path, a yes-input and a no-input differ in a single mark bit. 
Assume from now on that \(L \ge 2\), and set
       $r = L$ and $M = r^2$.
We prove an \(\Omega(r)\) adversary lower bound for a layered restriction of
\(\operatorname{DBUFDEC}(2r^3,r)\).

\paragraph{The layered buffered hard family.}
Partition the private address universe \([2r^3]\) into \(2r\) layers
\[
        X_1,\ldots,X_r,Y_1,\ldots,Y_r,
        \qquad
        |X_i|=|Y_i|=M .
\]
Since \(M=r^2\), the total number of private addresses is
\[
        2rM = 2r^3 .
\]
We identify every layer with \([M]\).  Hence a pair of strings
\[
        x=(x_1,\ldots,x_r)\in[M]^r,
        \qquad
        y=(y_1,\ldots,y_r)\in[M]^r
\]
specifies the private path
\[
        p \to x_1 \to x_2 \to \cdots \to x_r
        \to y_1 \to y_2 \to \cdots \to y_r \to q,
\]
where \(x_i\) denotes the vertex of layer \(X_i\) with local label \(x_i\),
and \(y_i\) denotes the vertex of layer \(Y_i\) with local label \(y_i\).

For every pair \((x,y)\in[M]^r\times[M]^r\), define two inputs
\[
        X^0_{x,y}
        \qquad\text{and}\qquad
        X^1_{x,y}.
\]
They have the same \(\mathrm{next}\)- and \(\mathrm{prev}\)-oracles, namely the
double linked path above.  The no-input \(X^0_{x,y}\) has no marked vertex.
The yes-input \(X^1_{x,y}\) marks exactly the vertex \(x_r\).  Thus the
restricted zero- and one-input sets are
\[
        \mathcal{X}^0
        =
        \{X^0_{x,y}:x,y\in[M]^r\},
        \qquad
        \mathcal{X}^1
        =
        \{X^1_{x,y}:x,y\in[M]^r\}.
\]
This is a restriction of \(\operatorname{DBUFDEC}(2r^3,r)\).

\paragraph{Construction of the adversary matrix.}

We first fix the disagreement notation.  This is the same notation as in the
single linked-list construction, except that the \(Y\)-buffer is read from the
public exit side back toward the middle.

For strings \(z,w\in[M]^r\), define
\[
        t_X(z,w)
        =
        \min\{i\in[r]:z_i\ne w_i\},
\]
with the convention that \(t_X(z,w)=\infty\) when \(z=w\).  Thus \(t_X(z,w)\)
is the first disagreement in the usual order \(1,2,\ldots,r\).

For buffer strings \(y,v\in[M]^r\), define
\[
        t_Y(y,v)
        =
        \max\{i\in[r]:y_i\ne v_i\},
\]
with the convention that \(t_Y(y,v)=\infty\) when \(y=v\).  Thus
\(t_Y(y,v)=t\) means that \(y\) and \(v\) agree on
\[
        Y_{t+1},Y_{t+2},\ldots,Y_r
\]
and first disagree, when the buffer is read from the public exit side \(q\)
toward the middle, at the layer \(Y_t\).

Let
\[
        |s_M\rangle
        =
        \frac{1}{\sqrt M}\sum_{a=1}^M |a\rangle,
        \qquad
        P=|s_M\rangle\langle s_M|,
        \qquad
        B=M^{-1/2}(J-I),
\]
where the dimension of \(|s\rangle\) is \(M\) and \(J\) is the \(M\times M\) all-ones matrix.

\subparagraph{The matrix on the \(X\)-core.}

For \(1\le t\le r\), define
\[
        A_X^{(t)}
        =
        I^{\otimes(t-1)}
        \otimes B
        \otimes P^{\otimes(r-t)} .
\]
Rows and columns are indexed by strings \(z,w\in[M]^r\).  Entrywise,
\[
        A_X^{(t)}(z,w)
        =
        \begin{cases}
        M^{-(r-t)}M^{-1/2},
        & t_X(z,w)=t,\\
        0,
        & \text{otherwise.}
        \end{cases}
\]
Indeed, \(I^{\otimes(t-1)}\) forces equality before the first split, the
\(B\)-factor enforces disagreement at coordinate \(t\), and the
\(P^{\otimes(r-t)}\)-suffix averages over all coordinates after the split.

Define
\[
        \Gamma^X_r
        =
        \sum_{t=1}^r A_X^{(t)} .
\]
Thus
\[
        \Gamma^X_r(z,w)
        =
        \begin{cases}
        M^{-(r-t_X(z,w))}M^{-1/2},
        & t_X(z,w)<\infty,\\
        0,
        & t_X(z,w)=\infty.
        \end{cases}
\]
This is exactly the adversary matrix from the single linked-list
boundary proof.  Its weight is larger when the first disagreement occurs later,
closer to the marked endpoint \(x_r\).

\subparagraph{The reversed matrix on the \(Y\)-buffer.}

The \(Y\)-buffer uses the same construction, but in the reversed
direction.  Since the original coordinate order is still
\(Y_1,\ldots,Y_r\), the tensor factors are reversed.

For \(1\le t\le r\), define
\[
        A_Y^{(t)}
        =
        P^{\otimes(t-1)}
        \otimes B
        \otimes I^{\otimes(r-t)} .
\]
Rows and columns are indexed by buffer strings \(y,v\in[M]^r\), in the original
coordinate order \(Y_1,\ldots,Y_r\).  Entrywise,
\[
        A_Y^{(t)}(y,v)
        =
        \begin{cases}
        M^{-(t-1)}M^{-1/2},
        & t_Y(y,v)=t,\\
        0,
        & \text{otherwise.}
        \end{cases}
\]
Indeed, the \(I^{\otimes(r-t)}\)-suffix forces agreement on
\(Y_{t+1},\ldots,Y_r\), the \(B\)-factor enforces disagreement at \(Y_t\), and
the \(P^{\otimes(t-1)}\)-prefix averages over \(Y_1,\ldots,Y_{t-1}\), namely
the part lying after the split when the buffer is read from right to left.

Define
\[
        \Lambda^Y_r
        =
        \sum_{t=1}^r A_Y^{(t)} .
\]
Therefore
\[
        \Lambda^Y_r(y,v)
        =
        \begin{cases}
        M^{-(t_Y(y,v)-1)}M^{-1/2},
        & t_Y(y,v)<\infty,\\
        0,
        & t_Y(y,v)=\infty.
        \end{cases}
\]
Thus the buffer weight is largest when the first disagreement from the public
exit side occurs near the middle interface \(y_1\), and smallest when it occurs
near the public exit \(q\).

Finally define the normalized buffer matrix
\[
        H^Y_r
        =
        \frac{\Lambda^Y_r}{\|\Lambda^Y_r\|}.
\]
Then
\[
        \|H^Y_r\|=1.
\]

The normalization keeps the buffer factor at norm one, so the numerator is controlled entirely by the $X$-core matrix and the denominator estimates factor cleanly.

\subparagraph{The final buffered adversary matrix.}

Rows are indexed by yes-inputs \(X^1_{x,y}\), and columns are indexed by
no-inputs \(X^0_{u,v}\).  Equivalently, a row is indexed by
\[
        (x,y)\in[M]^r\times[M]^r,
\]
and a column is indexed by
\[
        (u,v)\in[M]^r\times[M]^r.
\]
The row input marks \(x_r\), while the column input has no marked vertex.

Define
\[
        \Gamma^{\mathrm{buf}}_r
        =
        \Gamma^X_r\otimes H^Y_r .
\]
Thus
\[
        \Gamma^{\mathrm{buf}}_r((x,y),(u,v))
        =
        \Gamma^X_r(x,u)\,H^Y_r(y,v).
\]
Equivalently, if
\[
        s=t_X(x,u),
        \qquad
        t=t_Y(y,v),
\]
then
\[
        \Gamma^{\mathrm{buf}}_r((x,y),(u,v))
        =
        \begin{cases}
        \displaystyle
        \frac{
        M^{-(r-s)}M^{-1/2}\,
        M^{-(t-1)}M^{-1/2}
        }{\|\Lambda^Y_r\|},
        & s,t<\infty,\\[2ex]
        0,
        & s=\infty\text{ or }t=\infty.
        \end{cases}
\]
This is the desired middle-weighted structure.  On the \(X\)-core, the weight
increases as the first disagreement moves toward \(x_r\).  On the \(Y\)-buffer,
the weight increases as the first disagreement from the public exit side moves
toward \(y_1\).  Hence the largest weights are concentrated around the private
middle interface \(x_r\to y_1\), not near either public endpoint. Now let's calculate the norms of the numerator and denominator of this matrix.

\paragraph{The numerator.}

Since
\[
        \Gamma^{\mathrm{buf}}_r
        =
        \Gamma^X_r\otimes H^Y_r,
\]
we have
\[
        \|\Gamma^{\mathrm{buf}}_r\|
        =
        \|\Gamma^X_r\|\,\|H^Y_r\|
        =
        \|\Gamma^X_r\|.
\]
The matrix \(\Gamma^X_r\) is exactly the  matrix from the single
linked-list boundary proof.  Therefore, by Lemma~\ref{lem:gamma-numerator},
\[
        \|\Gamma^X_r\|
        \ge
        r\,\frac{M-1}{\sqrt M}.
\]
Thus
\[
        \|\Gamma^{\mathrm{buf}}_r\|
        \ge
        r\,\frac{M-1}{\sqrt M}.
\]
Substituting \(M=r^2\), we get
\[
        \|\Gamma^{\mathrm{buf}}_r\|
        \ge
        r\,\frac{r^2-1}{r}
        =
        r^2-1
        =
        \Omega(r^2),
\]
because \(r\ge 2\).

\paragraph{The denominator.}
We next prove that every query has filtered norm \(O(r)\).  This is the
denominator estimate for the adversary ratio.

We shall use the following elementary upper bound on the
\(X\)-core matrix:
\[
  \|\Gamma^X_r\|
  \le \sum_{t=1}^r \|A_X^{(t)}\|
  = \sum_{t=1}^r \|B\|
  \le r\sqrt M .
\]
We use,
\[
  A_X^{(t)}=I^{\otimes(t-1)}\otimes B\otimes P^{\otimes(r-t)}
\]
together with \(\|I\|=\|P\|=1\), and the final inequality uses
\(\|B\|=(M-1)/\sqrt M\le \sqrt M\). Check Lemma~\ref{lem:one-coordinate-local-estimates}.

We now state the filtered estimates that will be used in the denominator
argument.  Their proofs are deferred until after the denominator bound of Lemma~\ref{lem:buffered-denominator-bound}. The context in which they are used will also become clearer during the proof of Lemma~\ref{lem:buffered-denominator-bound}.

For \(1\le j\le r-1\) and \(a\in[M]\), define
\[
  \phi_a(u,v)=
  \begin{cases}
    v, & u=a,\\
    \bot, & u\ne a,
  \end{cases}
\]
and for \(2\le j\le r\), define
\[
  \psi_a(u,v)=
  \begin{cases}
    u, & v=a,\\
    \bot, & v\ne a.
  \end{cases}
\]
For strings \(z,w\in[M]^r\), define the following:
\[
\begin{aligned}
  D^X_{\mathrm{in}}(z,w)
    &= \mathbf 1[z_1\ne w_1],\\
  S^{X,+}_{j,a}(z,w)
    &= \mathbf 1[\phi_a(z_j,z_{j+1})\ne \phi_a(w_j,w_{j+1})],
        && 1\le j\leq r-1,\\
  S^{X,-}_{j,a}(z,w)
    &= \mathbf 1[\psi_a(z_{j-1},z_j)\ne \psi_a(w_{j-1},w_j)],
        && 2\le j\le r,\\
  E^X_{1,a}(z,w)
    &= \mathbf 1[(z_1=a)\oplus(w_1=a)],\\
  R^X_a(z,w)
    &= \mathbf 1[z_r=a,\ w_r\ne a],\\
  C^X_a(z,w)
    &= \mathbf 1[z_r\ne a,\ w_r=a],\\
  \Pi^X_a(z,w)
    &= \mathbf 1[z_r=w_r=a],\\
  M^X_a(z,w)
    &= \mathbf 1[z_r=a].
\end{aligned}
\]

\begin{lemma}[Filtered estimates on the \(X\)-core]
\label{lem:X-core-filtered-estimates}
For every \(a\in[M]\), the following estimates hold:
\[
\begin{array}{rclcrcl}
  \|\Gamma^X_r\circ D^X_{\mathrm{in}}\|
    &\le& \sqrt M,
  &&
  \|\Gamma^X_r\circ S^{X,+}_{j,a}\|
    &\le& \sqrt M+2r
       \quad (1\le j<r),
  \\[0.5em]
  \|\Gamma^X_r\circ S^{X,-}_{j,a}\|
    &\le& 2r
       \quad (2\le j\le r),
  &&
  \|\Gamma^X_r\circ E^X_{1,a}\|
    &\le& 1,
  \\[0.5em]
  \|\Gamma^X_r\circ R^X_a\|,
  \ \|\Gamma^X_r\circ C^X_a\|
    &\le& r,
  &&
  \|\Gamma^X_r\circ \Pi^X_a\|
    &\le& \dfrac{r}{\sqrt M},
  \\[0.9em]
  \|\Gamma^X_r\circ M^X_a\|
    &\le& r.
\end{array}
\]
\end{lemma}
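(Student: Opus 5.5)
The approach is the same as in the proof of Lemma~\ref{lem:filtered-norms}. Write $\Gamma^X_r=\sum_{t=1}^r A_X^{(t)}$ with $A_X^{(t)}=I^{\otimes(t-1)}\otimes B\otimes P^{\otimes(r-t)}$. For each filter, determine how the entrywise product modifies the local tensor factors of each summand. Then bound each filtered summand by the product of its local norms, using multiplicativity of the spectral norm under tensor products, and finally apply the triangle inequality over $t$. The local norm bounds needed come from Lemma~\ref{lem:one-coordinate-local-estimates} and Lemma~\ref{lem:two-coordinate-local-estimate}.

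Several estimates are already contained in the proof of Lemma~\ref{lem:filtered-norms}.
\begin{itemize}
\item $D^X_{\mathrm{in}}$ is the entry-symbol filter, which keeps only $A^{(1)}$; this gives $\|B\|\le\sqrt M$.
\item $S^{X,+}_{j,a}$ is the internal successor filter, with the four-case analysis over $m>j+1$, $m=j+1$, $m=j$ and $m<j$; this gives $\sqrt M+1+2(j-1)\le \sqrt M+2r$.
\item $M^X_a$ is the mark filter, which gives at most $r$.
\item $R^X_a$ and $C^X_a$ are sub-filters of the last-layer successor filter $(z_r=a)\oplus(w_r=a)$, split into its two one-sided parts. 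For $t<r$ the last $P$-factor becomes $E_aP(I-E_a)$ or $(I-E_a)PE_a$, each of norm at most $1/\sqrt M$. For $t=r$ the $B$-factor becomes $E_aB$ or $BE_a$ (the diagonal vanishes automatically), each of norm at most $1$. Summing gives at most $r$.
\end{itemize}

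The new estimates are handled as follows.
\begin{itemize}
\item $\Pi^X_a$: for $t<r$ the last factor becomes $E_aPE_a$, of norm $1/M$, so the summand has norm $\|B\|/M\le 1/\sqrt M$. For $t=r$ the support requires $z_r\ne w_r$, so that term vanishes. The total is at most $r/\sqrt M$.
\item $E^X_{1,a}$: only summands with disagreement at coordinate $1$ can contribute, since for $t>1$ we have $z_1=w_1$ and the filter is zero. So only $A^{(1)}$ survives, and its $B$-factor is replaced by $S_a$, giving $\|S_a\|\le1$.
\item $S^{X,-}_{j,a}$ (the predecessor query at the vertex of layer $j$ with label $a$): the answer depends on $(z_{j-1},z_j)$. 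For $t>j$ the filter is zero.
  \begin{itemize}
  \item $t=j$: the two answers are $z_{j-1}=w_{j-1}$ versus $\bot$ exactly when $(z_j=a)\oplus(w_j=a)$. So $B$ is replaced by $S_a$, with norm at most $1$.
  \item $t=j-1$: we need $z_j=w_j=a$ (otherwise both answers are $\bot$, or the answers are $\bot$ against a value). More carefully, the answer is $z_{j-1}$ if $z_j=a$, and the filter keeps pairs with a differing answer. Thus $B\otimes P$ on coordinates $(j-1,j)$ becomes a matrix whose nonzero blocks have the form $B$ tensored with $E_aPE_a$ or with the one-sided pieces. I would bound this by the norm of the full two-coordinate matrix $(B\otimes P)\circ \mathbf 1[\psi_a\ne\psi_a]$, treated like $H_a$ but with a $B$-scaling. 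The expected bound is $O(1)$ via a block decomposition into active and inactive states, exactly as in the proof of Lemma~\ref{lem:two-coordinate-local-estimate}.
  \item $t<j-1$: both coordinates are in the $P$ suffix, so $P\otimes P$ becomes the reflected analogue of $H_a$ (the roles of the two coordinates are swapped). Its norm is at most $2/\sqrt M$ by the same block computation, so each such term is at most $\|B\|\cdot 2/\sqrt M\le 2$.
  \end{itemize}
  Summing gives at most $2(j-2)+O(1)\le 2r$, after checking the constants.
\end{itemize}

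The main obstacle I expect is the $t=j-1$ case of the predecessor filter, where the filter straddles the $B$-coordinate and the first suffix coordinate. Here I would decompose by whether $z_j=a$ and $w_j=a$. If both equal $a$, the factor becomes $B\otimes E_aPE_a$, of norm $\sqrt M\cdot(1/M)\le 1$. If exactly one equals $a$, the answers are a vertex and $\bot$, so the pair always differs. This gives $B\otimes (E_aP(I-E_a)+(I-E_a)PE_a)$, of norm at most $\sqrt M\cdot (1/\sqrt M)=1$. The triangle inequality then yields a constant, and tightening the constants (or absorbing the count as $2(j-2)+2\le 2r$) gives the stated $2r$.
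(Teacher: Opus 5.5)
Your proposal is correct and follows essentially the same route as the paper. Each summand $A_X^{(t)}$ is refined tensor factor by tensor factor, bounded using norm multiplicativity and Lemmas~\ref{lem:one-coordinate-local-estimates} and~\ref{lem:two-coordinate-local-estimate}, and the bounds are summed over $t$. Your split of the $t=j-1$ predecessor case into $B\otimes E_aPE_a$ plus $B\otimes\bigl(E_aP(I-E_a)+(I-E_a)PE_a\bigr)$ is exactly the paper's $L_a=E_aP+PE_a-\frac1M E_a$, and it even gives constant $2$ instead of $3$. Discard your first remark that this case needs $z_j=w_j=a$; pairs with exactly one of $z_j,w_j$ equal to $a$ are kept, as your later paragraph correctly uses. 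Also include the $t=j$ term in the count, so the total is $2(j-2)+2+1=2j-1\le 2r$.
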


\medskip

For buffer strings \(y,v\in[M]^r\), define
\[
\begin{aligned}
  \Delta^Y_R(y,v) &= \mathbf 1[y_r\ne v_r],
  &
  \Delta^Y_L(y,v) &= \mathbf 1[y_1\ne v_1],
  \\[0.3em]
  \Pi^Y_a(y,v) &= \mathbf 1[y_1=v_1=a],
  &
  R^Y_a(y,v) &= \mathbf 1[y_1=a,\ v_1\ne a],
  \\[0.3em]
  C^Y_a(y,v) &= \mathbf 1[y_1\ne a,\ v_1=a],
  &
  T^Y_a(y,v) &= \mathbf 1[(y_r=a)\oplus(v_r=a)].
\end{aligned}
\]

\begin{lemma}[Filtered estimates on the normalized \(Y\)-buffer]
\label{lem:Y-buffer-filtered-estimates}
Assume \(M=r^2\).  For every \(a\in[M]\),
\[
\begin{array}{rclcrcl}
  \|H^Y_r\circ \Delta^Y_R\|
    &=& O(1/r),
  &&
  \|H^Y_r\circ \Pi^Y_a\|
    &=& O(1/M),
  \\[0.5em]
  \|H^Y_r\circ R^Y_a\|,
  \ \|H^Y_r\circ C^Y_a\|
    &=& O(M^{-1/2}),
  &&
  \|H^Y_r\circ \Delta^Y_L\|
    &=& O(1),
  \\[0.5em]
  \|H^Y_r\circ T^Y_a\|
    &=& O\!\left(\frac{1}{r\sqrt M}\right).
\end{array}
\]
Moreover, for any query filter \(\Delta\) coming from an internal
\(Y\)-buffer query, namely from
\(\operatorname{next}(y_{j,a})\) with \(1\le j\le r-1\) or from
\(\operatorname{prev}(y_{j,a})\) with \(2\le j\le r\), we have
\[
  \|H^Y_r\circ \Delta\|
  =
  O\!\left(\frac{\sqrt M+r}{r\sqrt M}\right).
\]
The two boundary queries
\(\operatorname{prev}(y_{1,a})\) and \(\operatorname{next}(y_{r,a})\)
are not included in this estimate and are handled separately.

\end{lemma}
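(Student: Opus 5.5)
The plan is to reduce every estimate to a statement about the unnormalized matrix \(\Lambda^Y_r\) and then divide by its norm. First I will pin down \(\|\Lambda^Y_r\|=\Theta(r\sqrt M)\). Reversing the coordinate order \(Y_1,\ldots,Y_r\) is a permutation of the index set applied simultaneously to rows and columns. It maps \(A_Y^{(t)}=P^{\otimes(t-1)}\otimes B\otimes I^{\otimes(r-t)}\) to \(I^{\otimes(r-t)}\otimes B\otimes P^{\otimes(t-1)}=A_X^{(r-t+1)}\). Hence \(\Lambda^Y_r\) is unitarily equivalent to \(\Gamma^X_r\). Lemma~\ref{lem:gamma-numerator} then gives \(\|\Lambda^Y_r\|\ge r(M-1)/\sqrt M\), and the triangle inequality gives \(\|\Lambda^Y_r\|\le r\sqrt M\). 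So for \(r\ge 2\) it suffices to show \(\|\Lambda^Y_r\circ\Delta\|\le K\) for each filter \(\Delta\); the claimed bound is then \(O(K/(r\sqrt M))\). Every filter listed depends on at most two adjacent coordinates. I will therefore treat each summand \(A_Y^{(t)}\) separately: the filter either kills it or replaces one or two local factors, after which I apply multiplicativity of the norm under tensor products together with Lemmas~\ref{lem:one-coordinate-local-estimates} and~\ref{lem:two-coordinate-local-estimate}.

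\textbf{One-coordinate filters.} I will handle these as follows, using the target \(K\) for each.
\begin{itemize}
\item \(\Delta^Y_R\): on the support of \(A_Y^{(t)}\) with \(t<r\) we have \(y_r=v_r\), so the filter kills these terms. Only \(A_Y^{(r)}=P^{\otimes(r-1)}\otimes B\) survives, so \(K\le\sqrt M\), giving \(O(1/r)\).
\item \(T^Y_a\): the last factor is \(I\) for \(t<r\), which forces \(y_r=v_r\) and hence zero; for \(t=r\) the factor \(B\) becomes \(S_a\). So \(K\le 1\).
\item \(\Pi^Y_a\): the \(t=1\) term dies because \(B\) has zero diagonal. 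For \(t\ge 2\) the first factor \(P\) becomes \(E_aPE_a\), of norm \(1/M\). This gives \(K\le r\sqrt M/M\), hence \(O(1/M)\).
\item \(R^Y_a\) and \(C^Y_a\): for \(t=1\) the factor \(B\) becomes \(E_aB(I-E_a)\) (or its transpose), of norm at most \(\|T_a\|\le 1\). For \(t\ge 2\) the factor \(P\) becomes \(E_aP(I-E_a)\), of norm at most \(\|F_a\|=1/\sqrt M\). Each such term is at most \(\|B\|/\sqrt M\le 1\), so \(K\le r+1\), hence \(O(M^{-1/2})\).
\item \(\Delta^Y_L\): the \(t=1\) term is unchanged. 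For \(t\ge 2\) the first factor \(P\) becomes \((J-I)/M\), of norm at most \(1\), so \(K\le r\sqrt M\) and the ratio is \(O(1)\).
\end{itemize}

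\textbf{Internal buffer queries.} This is the main step, and I will mirror Cases~1--4 of Lemma~\ref{lem:filtered-norms}. For \(\operatorname{prev}(y_{j,a})\), whose answer is \(\psi_a(y_{j-1},y_j)\), the reversed order makes coordinate \(j\) play the role that coordinate \(j\) played for a forward successor query in the single-list proof, with \(j-1\) as its neighbour. The four cases go through verbatim after reversal:
\begin{itemize}
\item \(t>j\): both coordinates lie in the \(I\)-suffix, so the term is zero.
\item \(t=j\): the factor \(B\) becomes \(S_a\), with \(I\) at \(j-1\) replaced by at most the identity, so the norm is at most \(1\).
\item \(t=j-1\): the factor \(E_a\) sits at coordinate \(j\) and \(B\) at \(j-1\), so the norm is at most \(\sqrt M\).
\item \(t<j-1\): the pair \(P\otimes P\) becomes \(H_a\), contributing \(\|B\|\cdot 2/\sqrt M\le 2\) each.
\end{itemize}
This gives \(K\le\sqrt M+2r+1\).

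For \(\operatorname{next}(y_{j,a})\), whose answer is \(\phi_a(y_j,y_{j+1})\), the cases \(t>j+1\) (zero), \(t<j\) (an \(H_a\)-type factor, norm at most \(2\) each) and \(t=j\) (the factor \(S_a\) at \(j\) combined with the constraint \(I\) at \(j+1\), norm at most \(1\)) are routine. The one term I expect to require care is the mixed term \(t=j+1\), in which coordinate \(j\) carries \(P\) and coordinate \(j+1\) carries \(B\). There \(y_{j+1}\ne v_{j+1}\), so the answers differ exactly when \(y_j=a\) or \(v_j=a\). I plan to write
\[
\mathbf 1[y_j=a\vee v_j=a]=\mathbf 1[y_j=a]+\mathbf 1[v_j=a]-\mathbf 1[y_j=v_j=a].
\]
This splits the filtered local factor into three tensor products \((E_aP)\otimes B\), \((PE_a)\otimes B\) and \((E_aPE_a)\otimes B\), each of norm at most \(\|B\|/\sqrt M\le 1\). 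Alternatively, if the inclusion--exclusion pieces must be restricted to \(y_{j+1}\ne v_{j+1}\), that restriction is already enforced by \(B\). Either way this term contributes \(O(1)\), and so \(K=O(\sqrt M+r)\), which yields the stated \(O\bigl((\sqrt M+r)/(r\sqrt M)\bigr)\) for all internal buffer queries.
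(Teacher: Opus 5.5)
Your normalization step matches the paper's proof essentially line by line: you conjugate by the coordinate reversal to get $\|\Lambda^Y_r\|=\|\Gamma^X_r\|\ge r(M-1)/\sqrt M$. The same holds for your treatment of $\Delta^Y_R$, $T^Y_a$, $\Pi^Y_a$, $R^Y_a$, $C^Y_a$ and $\Delta^Y_L$.

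You diverge on the internal buffer queries. The paper conjugates $\Lambda^Y_r\circ\Delta$ by the reversal permutation and observes that $\operatorname{next}(y_{j,a})$ and $\operatorname{prev}(y_{j,a})$ become the $X$-core filters $S^{X,-}_{r+1-j,a}$ and $S^{X,+}_{r+1-j,a}$. It then simply quotes Lemma~\ref{lem:X-core-filtered-estimates}. You instead redo the four-case analysis directly in the $Y$-coordinates. That is fine in principle, but as written your two far ranges are reversed.

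In $A_Y^{(t)}=P^{\otimes(t-1)}\otimes B\otimes I^{\otimes(r-t)}$ the $I$-factors sit at coordinates $>t$ and the $P$-factors at coordinates $<t$. You used this correctly for the one-coordinate filters. Consequences:
\begin{itemize}
\item For $\operatorname{prev}(y_{j,a})$, the vanishing terms are those with $t<j-1$, not $t>j$. For $t>j$, both queried coordinates carry $P$, and $P\otimes P$ becomes $H_a$ conjugated by the swap of its two coordinates, which has the same norm bound $2/\sqrt M$.
\item Also for $\operatorname{prev}(y_{j,a})$, at $t=j$ the neighbour coordinate $j-1$ carries $P$, not $I$. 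This is harmless, since the filter there depends only on coordinate $j$.
\item For $\operatorname{next}(y_{j,a})$, the vanishing range is $t<j$ and the $H_a$ range is $t>j+1$, not the other way round.
\end{itemize}
Your middle cases are correct: $t=j-1$ for prev, and $t=j$ and the mixed term $t=j+1$ for next, including the decomposition $E_aP+PE_a-E_aPE_a$.

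The slip does not affect the final bound. With the labels corrected, each far range still contains at most $r$ indices, with per-term contribution $0$ or at most $2$. This gives $\sqrt M+1+2(r-j)\le\sqrt M+2r+1$ for prev and $1+3+2(r-j-1)\le 2r+2$ for next. Dividing by $\|\Lambda^Y_r\|=\Omega(r\sqrt M)$ yields the stated $O\bigl((\sqrt M+r)/(r\sqrt M)\bigr)$. The paper's reduction to the already-verified $X$-core estimates avoids exactly this orientation pitfall. Your direct route is more self-contained, since it does not lean on Lemma~\ref{lem:X-core-filtered-estimates}, but the case ranges must be rewritten as above.
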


Given these norm estimates let us now prove the bound on the denominator of the adversary matrix.

\begin{lemma}[Denominator bound for the buffered adversary matrix]
\label{lem:buffered-denominator-bound}
There is an absolute constant \(C>0\) such that
\[
  \max_q
  \bigl\|\Gamma^{\mathrm{buf}}_r\circ \Delta_q\bigr\|
  \le Cr,
\]
where \(q\) ranges over all possible queries.
\end{lemma}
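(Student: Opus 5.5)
The plan is a case analysis over query types. In each case the filter $\Delta_q$ on pairs $((x,y),(u,v))$ should factor, exactly or as a sum of a few terms, into an $X$-filter tensored with a $Y$-filter. Then multiplicativity of the spectral norm under tensor products, $\|(\Gamma^X_r\circ D_1)\otimes(H^Y_r\circ D_2)\|=\|\Gamma^X_r\circ D_1\|\,\|H^Y_r\circ D_2\|$, reduces everything to Lemmas~\ref{lem:X-core-filtered-estimates} and~\ref{lem:Y-buffer-filtered-estimates}. We also use the unfiltered bounds $\|\Gamma^X_r\|\le r\sqrt M$ and $\|H^Y_r\|=1$. Throughout, $M=r^2$, so $\sqrt M=r$.

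\textbf{Queries that touch only the $X$-core.} These are: $\operatorname{next}(p)$; $\operatorname{next}(x_{j,a})$ for $j<r$; $\operatorname{prev}(x_{j,a})$ for $j\ge 2$; and $\operatorname{prev}(x_{1,a})$, which returns $p$ or $\bot$ and so gives $E^X_{1,a}$. For each of these, $\Delta_q=D\otimes J$ with $D$ one of $D^X_{\mathrm{in}}$, $S^{X,\pm}_{j,a}$, $E^X_{1,a}$. The norm is therefore at most $\|\Gamma^X_r\circ D\|\cdot 1=O(\sqrt M+r)=O(r)$.

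A mark query at $a\in X_r$ gives $M^X_a\otimes J$, hence norm at most $r$. Mark queries anywhere else give a zero filter.

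\textbf{Queries that touch only the $Y$-buffer.} These are: the internal $\operatorname{next}(y_{j,a})$ and $\operatorname{prev}(y_{j,a})$; $\operatorname{next}(y_{r,a})$, which gives $T^Y_a$; and $\operatorname{prev}(q)$, which gives $\Delta^Y_R$. Here $\Delta_q=J\otimes D$. The norm is at most $r\sqrt M\cdot\|H^Y_r\circ D\|$. With Lemma~\ref{lem:Y-buffer-filtered-estimates} this is $r\sqrt M\cdot O((\sqrt M+r)/(r\sqrt M))=O(r)$, or $r\sqrt M\cdot O(1/(r\sqrt M))=O(1)$, or $r\sqrt M\cdot O(1/r)=O(r)$, respectively.

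\textbf{The two interface queries.} These couple both parts and are the main obstacle.

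For $\operatorname{next}(x_{r,a})$, the answer is $y_1$ if $x_r=a$ and $\bot$ otherwise. I would split the filter according to whether the row and the column pass through $a$:
\[
\Delta_q=R^X_a\otimes J+C^X_a\otimes J+\Pi^X_a\otimes\Delta^Y_L .
\]
The triangle inequality then gives at most
\[
r\cdot 1+r\cdot 1+\frac{r}{\sqrt M}\cdot O(1)=O(r).
\]

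For $\operatorname{prev}(y_{1,a})$, the answer is $x_r$ if $y_1=a$ and $\bot$ otherwise. I would split the filter as
\[
J\otimes R^Y_a+J\otimes C^Y_a+(\text{$x_r\ne u_r$ part})\otimes\Pi^Y_a .
\]
The first two terms contribute $r\sqrt M\cdot O(M^{-1/2})=O(r)$. For the last term, bound $\|\Gamma^X_r\circ\mathbf 1[x_r\neq u_r]\|$ by $O(r\sqrt M)$. To get this, note that for $t<r$ the last coordinate sits in the $P$-suffix, and filtering it yields a matrix of norm $\le 1$; the $t=r$ summand has norm $\|B\|$. Multiplying by $O(1/M)$ then gives $O(r\sqrt M/M)=O(1)$.

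If any of these Hadamard decompositions turns out not to be a clean tensor product, I would instead refine it into disjoint pieces, each of which is a tensor product, and still use the triangle inequality.

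Taking the maximum over all cases gives $\max_q\|\Gamma^{\mathrm{buf}}_r\circ\Delta_q\|\le Cr$ for an absolute constant $C$.
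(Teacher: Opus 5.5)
Your proposal is correct and follows the paper's proof essentially step for step. It uses the same case split (mark, connector, $X$-internal, $Y$-internal, and the two interface queries), the same three-term decompositions of the interface filters, and the same reduction via tensor multiplicativity to Lemmas~\ref{lem:X-core-filtered-estimates} and~\ref{lem:Y-buffer-filtered-estimates}, together with $\|\Gamma^X_r\|\le r\sqrt M$ and $\|H^Y_r\|=1$.

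The only deviation is minor. For the $\operatorname{prev}(y_{1,a})$ interface you bound $\|\Gamma^X_r\circ\mathbf 1[x_r\ne u_r]\|\le r\sqrt M$ summand by summand, since the filtered last $P$-factor becomes $(J-I)/M$ of norm at most $1$. The paper instead uses entrywise monotonicity of the spectral norm for nonnegative matrices; both give the same bound. Your writing the interface filters as exact sums of disjoint filters is also slightly more precise than the paper's ``contained in the union''.
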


\begin{proof}
Recall that
\[
  \Gamma^{\mathrm{buf}}_r=\Gamma^X_r\otimes H^Y_r
\]
and \(\|H^Y_r\|=1\).  We inspect the possible query coordinates.

{\bf Mark queries.}
A mark query can distinguish a yes-input from a no-input only when the queried
address lies in the last \(X\)-layer.  If the queried address has local label
\(a\in X_r\), then the row input is marked exactly when \(x_r=a\).  Hence the
\(X\)-filter is \(M^X_a\), and the \(Y\)-factor is unrestricted.  By
Lemma~\ref{lem:X-core-filtered-estimates},
\[
  \|\Gamma^{\mathrm{buf}}_r\circ \Delta_q\|
  \le
  \|\Gamma^X_r\circ M^X_a\|\,\|H^Y_r\|
  \le r.
\]
All mark queries outside \(X_r\) have zero filtered matrix.

{\bf Connector queries.}
The local connector queries are \(\operatorname{next}(p)\) and
\(\operatorname{prev}(q)\). The outside-neighbor values
\(\operatorname{prev}(p)\) and \(\operatorname{next}(q)\) are not local
query coordinates of \(\mathrm{DBUFDEC}(N,L)\), so they are not considered in
this boundary-segment denominator estimate.

The query \(\operatorname{next}(p)\) reveals \(x_1\). Its \(X\)-filter is
\(D^X_{\mathrm{in}}\), while the \(Y\)-factor is unchanged. Hence, by
Lemma~\ref{lem:X-core-filtered-estimates},
\[
  \|\Gamma^{\mathrm{buf}}_r\circ \Delta_q\|
  \le
  \|\Gamma^X_r\circ D^X_{\mathrm{in}}\|
  \le \sqrt M
  = r,
\]
since \(M=r^2\).

The query \(\operatorname{prev}(q)\) reveals \(y_r\). Therefore
\[
  \Gamma^{\mathrm{buf}}_r\circ \Delta_q
  =
  \Gamma^X_r\otimes (H^Y_r\circ \Delta^Y_R).
\]
Using the elementary bound \(\|\Gamma^X_r\|\le r\sqrt M\) and
Lemma~\ref{lem:Y-buffer-filtered-estimates},
\[
  \|\Gamma^{\mathrm{buf}}_r\circ \Delta_q\|
  \le
  r\sqrt M\cdot O(1/r)
  =
  O(\sqrt M)
  =
  O(r).
\]

{\bf Queries internal to the \(X\)-core.}
First consider a successor query to the vertex with local label \(a\) in
\(X_j\), where \(1\le j \le r-1\).  This query depends on \((x_j,x_{j+1})\), and its
filter on the \(X\)-coordinates is \(S^{X,+}_{j,a}\).  Therefore
\[
  \|\Gamma^{\mathrm{buf}}_r\circ \Delta_q\|
  \le
  \|\Gamma^X_r\circ S^{X,+}_{j,a}\|
  \le
  \sqrt M+2r
  =
  O(r).
\]

Next consider a predecessor query to the vertex with local label \(a\) in
\(X_j\), where \(2\le j\le r\).  This query depends on \((x_{j-1},x_j)\), and
its \(X\)-filter is \(S^{X,-}_{j,a}\).  Hence
\[
  \|\Gamma^{\mathrm{buf}}_r\circ \Delta_q\|
  \le
  \|\Gamma^X_r\circ S^{X,-}_{j,a}\|
  =
  O(r).
\]

Finally, a predecessor query to the vertex with local label \(a\) in \(X_1\)
distinguishes according to the filter \(E^X_{1,a}\).  Thus
\[
  \|\Gamma^{\mathrm{buf}}_r\circ \Delta_q\|
  \le
  \|\Gamma^X_r\circ E^X_{1,a}\|
  \le 1.
\]

{\bf Queries internal to the \(Y\)-buffer, away from the endpoints.}
Let \(\Delta\) be the \(Y\)-filter of such a query.  By
Lemma~\ref{lem:Y-buffer-filtered-estimates},
\[
  \|H^Y_r\circ \Delta\|
  =
  O\!\left(\frac{\sqrt M+r}{r\sqrt M}\right).
\]
Therefore
\[
  \|\Gamma^{\mathrm{buf}}_r\circ \Delta_q\|
  \le
  \|\Gamma^X_r\|\,\|H^Y_r\circ \Delta\|
  \le
  r\sqrt M\cdot
  O\!\left(\frac{\sqrt M+r}{r\sqrt M}\right)
  =
  O(\sqrt M+r)
  =
  O(r).
\]

It remains on the public-exit side to consider the endpoint query
\(\operatorname{next}(a)\) for \(a\in Y_r\).  This query returns \(q\) if
\(y_r=a\), and returns \(\bot\) otherwise.  Hence its \(Y\)-filter is \(T^Y_a\).
By Lemma~\ref{lem:Y-buffer-filtered-estimates},
\[
  \|\Gamma^{\mathrm{buf}}_r\circ \Delta_q\|
  \le
  \|\Gamma^X_r\|\,\|H^Y_r\circ T^Y_a\|
  \le
  r\sqrt M\cdot
  O\!\left(\frac{1}{r\sqrt M}\right)
  =
  O(1).
\]

{\bf The interface query from the $Y$ side.}
Now consider the query \(\operatorname{prev}(a)\) for \(a\in Y_1\).  On the row
input \((x,y)\), the answer is \(x_r\) if \(y_1=a\), and is \(\bot\) otherwise.
Thus the filter is contained in the union of the following
three filters:
\[
  1\otimes R^Y_a,
  \qquad
  1\otimes C^Y_a,
  \qquad
  \Delta^X_{\mathrm{last}}\otimes \Pi^Y_a,
\]
where
\[
  \Delta^X_{\mathrm{last}}(x,u)=\mathbf 1[x_r\ne u_r].
\]

The first two terms have norm at most
\[
  \|\Gamma^X_r\|\,
  O(M^{-1/2})
  \le
  r\sqrt M\cdot O(M^{-1/2})
  =
  O(r).
\]

For the third term, we use a simple monotonicity fact for entrywise nonnegative matrices. Since $\Gamma_r^X$ is entrywise nonnegative and $\Delta_{\mathrm{last}}^X$ is a (0)-(1) filter, we have
$0\le \Gamma_r^X\circ \Delta_{\mathrm{last}}^X \le \Gamma_r^X$
entrywise. Therefore
$||\Gamma_r^X\circ \Delta_{\mathrm{last}}^X||\le ||\Gamma_r^X||$.
Indeed, for any unit vectors $u,w$, writing $|u|$ and $|w|$ for the entrywise absolute-value vectors, we have
$\left|u^T(\Gamma_r^X\circ \Delta_{\mathrm{last}}^X)w\right|
\le
|u|^T(\Gamma_r^X\circ \Delta_{\mathrm{last}}^X)|w|
\le
|u|^T\Gamma_r^X|w|
\le
||\Gamma_r^X||$.
Taking the supremum over (u,w) we have, 
\[||\Gamma_r^X\circ \Delta_{\mathrm{last}}^X||
\le
||\Gamma_r^X||
\le
r\sqrt M .\]

Therefore the third term has norm at most
\[
  \|\Gamma^X_r\|\,\|H^Y_r\circ \Pi^Y_a\|
  \le
  r\sqrt M\cdot O(1/M)
  =
  O\!\left(\frac{r}{\sqrt M}\right)
  =
  O(1),
\]
because \(M=r^2\).  Hence the left interface query has filtered norm \(O(r)\).

{\bf The interface query from the $X$ side.}
Finally consider the query \(\operatorname{next}(a)\) for \(a\in X_r\).  On the
row input \((x,y)\), the answer is \(y_1\) if \(x_r=a\), and is \(\bot\)
otherwise.  Thus the filter is contained in the union of
\[
  R^X_a\otimes 1,
  \qquad
  C^X_a\otimes 1,
  \qquad
  \Pi^X_a\otimes \Delta^Y_L.
\]
By Lemma~\ref{lem:X-core-filtered-estimates},
\[
  \|\Gamma^X_r\circ R^X_a\|,
  \ \|\Gamma^X_r\circ C^X_a\|
  \le r,
\]
so the first two terms contribute \(O(r)\).  For the third term,
Lemma~\ref{lem:X-core-filtered-estimates} and
Lemma~\ref{lem:Y-buffer-filtered-estimates} give
\[
  \|(\Gamma^X_r\circ \Pi^X_a)
      \otimes
      (H^Y_r\circ \Delta^Y_L)\|
  \le
  O\!\left(\frac{r}{\sqrt M}\right)\cdot O(1)
  =
  O(1),
\]
again because \(M=r^2\).  Hence the middle interface query also has filtered
norm \(O(r)\).

Combining all query types, we obtain
\[
  \max_q
  \bigl\|\Gamma^{\mathrm{buf}}_r\circ \Delta_q\bigr\|
  =
  O(r).
\]
This proves the lemma.
\end{proof}

We now prove the two filtered-estimate lemmas.

\begin{proof}[{\bf Proof of Lemma~\ref{lem:X-core-filtered-estimates}}]

Write
\[
  \Gamma^X_r=\sum_{t=1}^r A_X^{(t)},
  \qquad
  A_X^{(t)}
  =
  I^{\otimes(t-1)}\otimes B\otimes P^{\otimes(r-t)}.
\]
We use multiplicativity of the spectral norm under tensor products, the bounds
\(\|I\|=\|P\|=1\), \(\|B\|\le \sqrt M\), and the local estimates from
Lemmas~\ref{lem:one-coordinate-local-estimates} and~\ref{lem:two-coordinate-local-estimate}.

First, consider 
\[
D^X_{\mathrm{in}}(z,w)=\mathbf{1}[z_1\ne w_1].
\]
Every summand \(A_X^{(t)}\) with \(t>1\)
forces \(z_1=w_1\), so it is killed by the filter.  Only \(A_X^{(1)}\)
survives.  Hence
\[
  \|\Gamma^X_r\circ D^X_{\mathrm{in}}\|
  \le
  \|A_X^{(1)}\|
  =
  \|B\|
  \le
  \sqrt M.
\]

Next consider \[
S^{X,+}_{j,a}(z,w)
=
\mathbf{1}\!\left[
\phi_a(z_j,z_{j+1})\ne \phi_a(w_j,w_{j+1})
\right], 1 \le j \le r-1,\] 
the filter for a successor query inside the
\(X\)-core.  If \(t>j+1\), then the coordinates \(j\) and \(j+1\) are both
equality coordinates in \(A_X^{(t)}\), so the query answers are equal and the
contribution is zero.  If \(t=j+1\), the coordinate \(j\) is an equality
coordinate and coordinate \(j+1\) is the \(B\)-coordinate.  The query
distinguishes exactly when the common value at coordinate \(j\) is \(a\), so
the local factor \(I\otimes B\) is refined to \(E_a\otimes B\).  This
contributes at most \(\|E_a\|\|B\|\le \sqrt M\) by Lemma~\ref{lem:one-coordinate-local-estimates}.

If \(t=j\), then the \(B\)-factor is restricted to disagreements in which
exactly one side is equal to \(a\).  Thus \(B\) is replaced by
  $S_a(u,v)=M^{-1/2}\mathbf 1[(u=a)\oplus(v=a)]$,
whose norm is at most \(1\) by Lemma~\ref{lem:one-coordinate-local-estimates}.  This case contributes at most \(1\).
Finally, if \(t<j\), the queried coordinates lie in the \(P\)-suffix.  The
local factor \(P\otimes P\) is replaced by the two-coordinate matrix from
Lemma~\ref{lem:two-coordinate-local-estimate}, whose norm is at most \(2/\sqrt M\).  Multiplication by the single
\(B\)-factor gives contribution at most \(2\) for each such \(t\).  Therefore
\[
  \|\Gamma^X_r\circ S^{X,+}_{j,a}\|
  \le
  \sqrt M+1+2(j-1)
  \le
  \sqrt M+2r.
\]

Now consider \[
S^{X,-}_{j,a}(z,w)=\mathbf{1}\!\left[\psi_a(z_{j-1},z_j)\ne \psi_a(w_{j-1},w_j)\right],\]
the filter for a predecessor query at coordinate
\(j\), where \(2\le j\le r\).  If \(t>j\), the queried coordinates
\(j-1,j\) are both equality coordinates, so the contribution is zero.  If
\(t=j\), the \(B\)-factor is again restricted to disagreements involving
\(a\) by  $S_a(u,v) $, and this contributes at most \(1\) by Lemma~\ref{lem:one-coordinate-local-estimates}.

If \(t=j-1\), then before filtering the local factor on coordinates
\((j-1,j)\) is \(B\otimes P\).  The filter replaces the \(P\)-factor by
\[
  L_a(u,v)=\frac1M\mathbf 1[(u=a)\vee(v=a)].
\]

  Since
\(P\) is the all-\(1/M\) matrix, we have
\[
(E_aP)(u,v)=\frac1M\mathbf 1[u=a],
\qquad
(PE_a)(u,v)=\frac1M\mathbf 1[v=a].
\]
  Since the desired matrix
\(L_a\) should have value \(1/M\), not \(2/M\), at this overlap, we subtract one
copy of the overlap. Thus,
\[
L_a
=
E_aP+PE_a-\frac1M E_a .
\]

and we have \(\|L_a\|\le 3/\sqrt M\) by Lemma~\ref{lem:one-coordinate-local-estimates}.  Multiplying by the \(B\)- factor of $\sqrt{M}$ gives
contribution at most \(3\).  If \(t<j-1\), both queried coordinates lie in the
\(P\)-suffix.  The resulting two-coordinate factor is the coordinate-reversed
analogue of the matrix in Lemma~\ref{lem:two-coordinate-local-estimate}, so it has the same norm bound
\(2/\sqrt M\).  After multiplying by the single \(B\)-factor, each such summand
contributes at most \(2\).  Hence
\[
  \|\Gamma^X_r\circ S^{X,-}_{j,a}\|
  \le
  1+3+2(j-2)
  \le
  2r.
\]

For \[E^X_{1,a} =1[(z_1=a)\oplus(w_1=a)],\] every summand \(A_X^{(t)}\) with \(t>1\) forces
\(z_1=w_1\), so the filter kills it.  For \(t=1\), the \(B\)-factor is replaced
by \(S_a\), whose norm is at most \(1\) by Lemma~\ref{lem:one-coordinate-local-estimates}.  Thus
\[
  \|\Gamma^X_r\circ E^X_{1,a}\|\le 1.
\]

We next prove the estimates for \(R^X_a,C^X_a,\Pi^X_a\). $R^X_a =\mathbf 1[z_r=a,\ w_r\ne a], C^X_a = \mathbf 1[z_r\ne a,\ w_r=a]$ and $\Pi^X_a = \mathbf 1[z_r=w_r=a]$.

The estimates for
\(R^X_a\) and \(C^X_a\) are identical by transposition, so consider
\(R^X_a\).  If \(t=r\), the final \(B\)-factor is restricted to one row, with
the diagonal removed. The entries in row \(a\) are \(1/M\) in columns other than \(a\), and all other entries are zero. This has norm at most \(1\).  If \(t<r\), the final
\(P\)-factor is restricted to row \(a\) and to columns different from \(a\). The row has weights $1/ {M}$ for all positions of the row and 0 elsewhere.
This local factor has norm at most \(1/\sqrt M\).  Multiplication by the single
\(B\)-factor gives contribution at most \(1\).  Summing over \(t\) gives
\[
  \|\Gamma^X_r\circ R^X_a\|,
  \ \|\Gamma^X_r\circ C^X_a\|
  \le r.
\]

For \(\Pi^X_a\), the summand \(t=r\) vanishes because \(B\) has zero diagonal and the filter forces $z_r = w_r = a$.
For each \(t<r\), the final \(P\)-factor becomes \(M^{-1}E_a\), whose norm is
\(1/M\).  Multiplication by the single \(B\)-factor gives contribution at most
\(1/\sqrt M\).  Therefore
\[
  \|\Gamma^X_r\circ \Pi^X_a\|
  \le
  \frac{r}{\sqrt M}.
\]

Finally, consider \(M^X_a(z,w)=\mathbf 1[z_r=a]\).  If \(t=r\), the final
\(B\)-factor is restricted to row \(a\),  giving norm at most \(1\).  If
\(t<r\), the final \(P\)-factor is restricted to row \(a\), which has norm
\(1/\sqrt M\) by Lemma~\ref{lem:one-coordinate-local-estimates}.  Multiplication by the single \(B\)-factor gives contribution at
most \(1\).  Summing over \(t\) gives
\[
  \|\Gamma^X_r\circ M^X_a\|\le r.
\]
This proves all the \(X\)-core estimates.
\end{proof}

\begin{proof}[{\bf Proof of Lemma~\ref{lem:Y-buffer-filtered-estimates}}]

We first prove estimates for the unnormalized matrix
\[
  \Lambda^Y_r=\sum_{t=1}^r A_Y^{(t)},
  \qquad
  A_Y^{(t)}
  =
  P^{\otimes(t-1)}\otimes B\otimes I^{\otimes(r-t)},
\]
and then divide by \(\|\Lambda^Y_r\|\) to get  
$H_r^Y=\frac{\Lambda_r^Y}{\|\Lambda_r^Y\|}.$
We first show that the normalization satisfies $\|\Lambda_r^Y\|=\Omega(r\sqrt M).$
Indeed, let $\rho:[M]^r\to[M]^r$ be the coordinate reversal
$\rho(y_1,\ldots,y_r)=(y_r,\ldots,y_1),$
and let $U_\rho$ be the corresponding permutation matrix. Then
\[
U_\rho A_Y^{(t)}U_\rho^*
=
I^{\otimes(r-t)}\otimes B\otimes P^{\otimes(t-1)}
=
A_X^{(r+1-t)}.
\]
After summing over $t$, this gives
$U_\rho\Lambda_r^YU_\rho^*=\Gamma_r^X$.
Thus $\|\Lambda_r^Y\|=\|\Gamma_r^X\|$. By the numerator estimate for the $X$-core matrix,
\[
\|\Gamma_r^X\|\ge r\frac{M-1}{\sqrt M}=\Omega(r\sqrt M),
\]
since $M=r^2$ and $r\ge 2$.

For  $\Delta^Y_R(y,v) = \mathbf 1[y_r\ne v_r]$  , every summand \(A_Y^{(t)}\) with \(t<r\) is killed, because
the last coordinate \(Y_r\) is then an identity coordinate.  Only \(t=r\) can
survive, and its norm is at most \(\|B\|\le \sqrt M\).  After normalization,
\[
  \|H^Y_r\circ \Delta^Y_R\|=O(1/r).
\]

We now bound the filtered norms one by one. For $\Pi^Y_a(y,v) = \mathbf 1[y_1=v_1=a]$   , the summand \(t=1\) vanishes because the first coordinate is
the \(B\)-coordinate and \(B\) has zero diagonal.  For every \(t>1\), the first
\(P\)-factor becomes \(M^{-1}E_a\), whose norm is \(1/M\), while the later
\(B\)-factor contributes at most \(\sqrt M\).  Thus the unnormalized norm is
\(O(r/\sqrt M)\).  Dividing by \(\Omega(r\sqrt M)\) gives
\[
  \|H^Y_r\circ \Pi^Y_a\|=O(1/M).
\]

For $R^Y_a(y,v)= \mathbf 1[y_1=a,\ v_1\ne a]$   , the summand \(t=1\) restricts the \(B\)-factor to one row, with
the diagonal removed, and hence has norm at most \(1\).  For each \(t>1\), the
first \(P\)-factor is restricted to \(y_1=a\) and \(v_1\ne a\), which has norm
at most \(1/\sqrt M\).  Multiplication by the later \(B\)-factor gives norm at
most \(1\).  Therefore the unnormalized norm is \(O(r)\), and after
normalization,
\[
  \|H^Y_r\circ R^Y_a\|=O(M^{-1/2}).
\]
The estimate for $ C^Y_a(y,v)= \mathbf 1[y_1\ne a,\ v_1=a]   $ is identical by transposition.

For $\Delta^Y_L(y,v) = \mathbf 1[y_1\ne v_1]$ , the summand \(t=1\) contributes at most \(\sqrt M\).  For
each \(t>1\), the first \(P\)-factor is replaced by \(M^{-1}(J-I)\), whose norm
is at most \(1\), and the later \(B\)-factor contributes at most \(\sqrt M\).
Thus the unnormalized norm is \(O(r\sqrt M)\).  After normalization,
\[
  \|H^Y_r\circ \Delta^Y_L\|=O(1).
\]

For $T^Y_a(y,v) = \mathbf 1[(y_r=a)\oplus(v_r=a)]$, every summand \(A_Y^{(t)}\) with \(t<r\) is killed because the
last coordinate \(Y_r\) is then an identity coordinate.  For \(t=r\), the final
\(B\)-factor is restricted to disagreements involving \(a\), whose norm is at
most \(1\).  After normalization,
\[
  \|H^Y_r\circ T^Y_a\|
  =
  O\!\left(\frac{1}{r\sqrt M}\right).
\]

Finally consider a successor or predecessor query wholly inside the $Y$-buffer and not crossing either interface.  We spell out the reversal because this is the only point where the direction of the buffer matters.  Let $\rho(y_1,\ldots,y_r)=(y_r,\ldots,y_1)$
be the coordinate reversal, and let $U_\rho$ be the corresponding permutation matrix.  Since
\[
A_Y^{(t)}=P^{\otimes(t-1)}\otimes B\otimes I^{\otimes(r-t)},
\]
we have
\[
U_\rho A_Y^{(t)}U_\rho^*
=
I^{\otimes(r-t)}\otimes B\otimes P^{\otimes(t-1)}
=
A_X^{(r+1-t)}.
\]
After summing over $t$, this gives $U_\rho\Lambda_r^YU_\rho^*=\Gamma_r^X$.

Now let $\Delta$ be the filter of an internal query in the $Y$-buffer.  First suppose the query is $\mathrm{next}(y_{j,a}), 1\le j<r$.
On a path $y=(y_1,\ldots,y_r)$, this query depends on the pair $(y_j,y_{j+1})$: it returns $y_{j+1}$ if $y_j=a$, and returns $\perp$ otherwise.  After reversal, put $\widetilde y_i=y_{r+1-i}$.
Then $y_j=\widetilde y_{r+1-j}$ and $y_{j+1}=\widetilde y_{r-j}$.  Thus the reversed query asks for the predecessor of the vertex with local label $a$ at coordinate $r+1-j$.  Therefore its reversed filter is precisely one of the $X$-core predecessor filters,
$S_{r+1-j,a}^{X,-}$,
which is covered by Lemma~\ref{lem:X-core-filtered-estimates}.

Similarly, suppose the query is $\mathrm{prev}(y_{j,a}), 2\le j\le r$.
This query depends on $(y_{j-1},y_j)$: it returns $y_{j-1}$ if $y_j=a$, and returns $\perp$ otherwise.  Under the same reversal, $y_j=\widetilde y_{r+1-j}$ and $y_{j-1}=\widetilde y_{r+2-j}$.  Hence the reversed query asks for the successor of the vertex with local label $a$ at coordinate $r+1-j$.  Its reversed filter is therefore one of the $X$-core successor filters,
$S_{r+1-j,a}^{X,+}$,
again covered by Lemma~\ref{lem:X-core-filtered-estimates}.

Equivalently, in both cases there is an internal $X$-core filter $\Delta'$ from Lemma~\ref{lem:X-core-filtered-estimates} such that
$\Delta(y,v)=\Delta'(\rho(y),\rho(v))$
for all $y,v\in[M]^r$.  Hence
$U_\rho(\Lambda_r^Y\circ\Delta)U_\rho^*
=
\Gamma_r^X\circ\Delta'$.
Permutation does not change spectral norm, and Lemma~\ref{lem:X-core-filtered-estimates} gives
\[
\left\|\Lambda_r^Y\circ\Delta\right\|
=
\left\|\Gamma_r^X\circ\Delta'\right\|
=
O(\sqrt M+r).
\]
Finally, since $H_r^Y=\Lambda_r^Y/\left\|\Lambda_r^Y\right\|$ and
$\left\|\Lambda_r^Y\right\|=\Omega(r\sqrt M)$, we get
\[
\left\|H_r^Y\circ\Delta\right\|
=
O\!\left(\frac{\sqrt M+r}{r\sqrt M}\right).
\]
This proves the lemma.
\end{proof}

\paragraph{Conclusion of the buffered boundary lower bound.}
By Lemma~\ref{lem:buffered-denominator-bound},
\[
  \max_q
  \bigl\|\Gamma^{\mathrm{buf}}_r\circ \Delta_q\bigr\|
  =
  O(r).
\]
Together with the numerator estimate
\[
  \|\Gamma^{\mathrm{buf}}_r\|=\Omega(r^2),
\]
this gives
\[
  \frac{\|\Gamma^{\mathrm{buf}}_r\|}
       {\max_q \|\Gamma^{\mathrm{buf}}_r\circ \Delta_q\|}
  =
  \Omega(r).
\]
The layered buffered family is a restriction of
\[
  \mathrm{DBUFDEC}(2r^3,r).
\]
Therefore, by monotonicity of the adversary bound under restrictions,
\[
  \mathrm{Adv}^{\pm}\!\left(\mathrm{DBUFDEC}(2r^3,r)\right)
  =
  \Omega(r).
\]
Since \(r=L\), and since the case \(L=1\) was already handled separately, there
is an absolute constant \(c_{\mathrm{buf}}>0\) such that, for every \(L\ge1\),
\[
  \mathrm{Adv}^{\pm}\!\left(\mathrm{DBUFDEC}(2L^3,L)\right)
  \ge
  c_{\mathrm{buf}}L.
\]
This proves Theorem~\ref{thm:dbuf-boundary}.

\section*{Acknowledgements} 

ChatGPT 5.5 has been used for calculating the matrix norms of Lemmas~\ref{lem:one-coordinate-local-estimates},~\ref{lem:two-coordinate-local-estimate},~\ref{lem:X-core-filtered-estimates} and~\ref{lem:Y-buffer-filtered-estimates}. All calculation results have been independently verified by the authors. It has also been used for reviewing the draft and fixing minor technical errors. 

This work was supported by
JSPS Grant-in-Aid for Scientific Research (S) Nos.~24H00071,~25K24674, (A) Nos.~23H00468, 26H02489, (C) Nos.~25K15115, 25K14992, JSPS Grant-in-Aid for Challenging Research (Pioneering) No.~23K17455, and MEXT Quantum Leap Flagship Program (MEXT Q-LEAP) Grant Number JPMXS0120319794, JST CRONOS Japan Grant Number JPMJCS24K2.

\bibliography{refs}

\section{Appendix}

\subsection{Proof of Proposition~\ref{prop:classical-baseline}} \label{app:propclass}
\begin{proof}
The upper bound is immediate. Starting from the public symbol $s$, one can follow either the
successor pointers (predecessor pointers not needed) along the hidden path and query the marking oracle on each visited vertex.
This uses $O(\ell)$ queries and solves the problem.

For the lower bound, it suffices to consider a restriction that makes the ambient embedding
completely public and therefore useless. Fix the hidden path to be the public chain
$f(s)=1,
f(i)=i+1\ (1\le i<\ell),
f(\ell)=\pp,
f(x)=\pp\ (x\in [N]\setminus[\ell])$,
and let the only variable part of the input be the marking string
$(g(1),\dots,g(\ell))\in\{0,1\}^{\ell}$
under the promise $\sum_{i=1}^{\ell} g(i)\le 1$.
On this restricted domain, both $\HLLDEC(N,\ell)$ and $\HDDLDEC(N,\ell)$ are exactly the unique-OR problem on $\ell$
bits. It is a folklore result that both deterministic and randomized query complexity of unique-OR is $\Theta(\ell)$. We nevertheless include a proof for completeness. 

For deterministic algorithms, consider the all-zero marking string $0^\ell$. If an algorithm
queries fewer than $\ell$ mark bits on input $0^\ell$, then some index $i\in[\ell]$ is never
queried. The executions on $0^\ell$ and on the input $e_i$ with a unique $1$ at position $i$
are then identical, so the algorithm cannot be correct on both. Hence, both $D(\HLLDEC(N,\ell)), D(\HDDLDEC(N,\ell)) \ge \ell$.

For bounded-error randomized algorithms, apply Yao's minimax principle~\cite{Yao1977}.
Let $\mu$ be the distribution that chooses $0^\ell$ with probability $1/2$ and, for each
$i\in[\ell]$, chooses $e_i$ with probability $1/(2\ell)$. Consider a deterministic algorithm
$A$ making at most $T$ queries. Let $S\subseteq[\ell]$ be the set of indices queried by $A$
on input $0^\ell$; then $|S|\le T$. If $A$ outputs $1$ on $0^\ell$, then its error under $\mu$
is already at least $1/2$. Otherwise $A$ outputs $0$ on $0^\ell$, and for every $i\notin S$
the entire execution of $A$ on $e_i$ is identical to its execution on $0^\ell$, since all queried
positions are $0$. Thus $A$ also outputs $0$ on $e_i$ and errs. Therefore
\[
\Pr_{x\sim\mu}[A\text{ is correct on }x]
\le
\frac12+\frac{|S|}{2\ell}
\le
\frac12+\frac{T}{2\ell}.
\]
To achieve success probability at least $2/3$, one must have $T\ge \ell/3$. Hence, both $R(\HLLDEC(N,\ell))$ and $ R(\HDDLDEC(N,\ell))\in \Omega(\ell)$.

The lower bound on the decision version immediately also yields, 
\[R(\HLLSearch(N,\ell)), R(\HDDLSearch(N,\ell))
\in \Omega(\ell).\]
This proves the claim.
\end{proof}

\subsection{Model and Proof of Hidden Ordered Search} \label{app:ordsearch}

Concretely, fix integers \(N \ge n\). There is an ambient universe \([N]\), a
hidden support set \(S \subseteq [N]\) of size \(n\), and a hidden balanced
binary search tree on \(S\), specified by child-pointer oracles
$L,R : \{s\}\cup [N] \to [N]\cup\{\perp\}.$
There is a distinguished root \(r\in S\) such that w.l.o.g.,
$L(s)=r, R(s)=\perp.$
The non-\(\perp\) structure reachable from \(r\) by the child pointers
\(L\) and \(R\) is exactly the hidden balanced binary search tree on \(S\).
For every \(x\in [N]\setminus S\), we have $L(x)=R(x)=\perp.$
Let \(v_1,\ldots,v_n\) be the support vertices in their in-order order.

In addition, there is a comparison
oracle $c : [N]\to\{0,1\}$
such that \(c(x)=0\) off the support and, on the support,
$0=x_0 \le c(v_1) \le c(v_2) \le \cdots \le c(v_n)=1.$
The task is to output the vertex
\[
v_{i^\star}, i^\star = \min\{i\in[n]: c(v_i)=1\}.
\]
We call this problem hidden ordered search.

\begin{proposition} \label{prop:orderedsearch}
The hidden-support ordered-search problem described above has quantum query complexity
$\Theta(\log n)$.
\end{proposition}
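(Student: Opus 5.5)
The plan is to prove matching upper and lower bounds of $\Theta(\log n)$. The upper bound walks down the hidden tree from the root. The lower bound restricts to an instance whose tree shape is public, which reduces to standard ordered search.

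For the upper bound, we simulate classical binary search along the hidden balanced tree. First query $L(s)$ to get the root $r$. At the current vertex $v$, query $c(v)$, then $c$ on the in-order predecessor of $v$ if needed. It is simpler to descend as in a BST: if $c(v)=1$, record $v$ as the current candidate and move to $L(v)$; if $c(v)=0$, move to $R(v)$. Stop when the child pointer returns $\perp$.

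Monotonicity of $c$ along the in-order sequence makes this standard BST search for the first $1$. The recorded candidate at the end is exactly $v_{i^\star}$, since every vertex left of the path has $c=0$ and every vertex right of it has $c=1$. The walk needs two queries per level, one to $c$ and one to a child pointer. Since the tree is balanced with depth $O(\log n)$, the total is $O(\log n)$ queries, and the algorithm is deterministic.

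For the lower bound, we use monotonicity of $\Advpm$ under restrictions, as in Section~\ref{subsec:adversary-standard-facts}. Fix a public support $S=\{1,\dots,n\}\subseteq[N]$ and a fixed balanced BST on it, with in-order order $v_i=i$. All $L,R$ values are then constant across the restricted inputs, and $c$ is constant off $S$. The only variable coordinates are $c(1),\dots,c(n)$, ranging over monotone strings with $c(n)=1$. This restriction is exactly the standard ordered-search problem on $n$ elements, which is a $\Lambda$-valued partial function. Its bounded-error quantum query complexity is known to be $\Omega(\log n)$; for instance, H{\o}yer--Neerbek--Shi give $\Omega(\log n)$ via a weighted adversary, cited in~\cite{HNS02}. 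Any algorithm for hidden ordered search solves the restriction with the same number of queries, so $Q=\Omega(\log n)$.

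The only subtlety is in this restriction step. We must check that fixing the pointer oracles is a legitimate restriction. It is, because every restricted input satisfies the promise of the hidden problem. The public structure then only helps the algorithm, so the lower bound transfers, and combining both directions gives $\Theta(\log n)$.
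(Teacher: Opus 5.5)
Your proposal is correct and follows essentially the same route as the paper. The upper bound is an $O(\log n)$ root-to-leaf BST descent through the hidden balanced tree, and the lower bound restricts to a fixed public support and tree, which is exactly ordinary ordered search and so inherits the known $\Omega(\log n)$ bound; your version only spells out the descent rule and the legitimacy of the restriction more explicitly.
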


\begin{proof} \label{app:proporder}

For the upper bound, query \(L(s)\) to obtain the root \(r\), and then follow the usual comparison path in the balanced binary search tree. At each visited support vertex, query the comparison bit and then move to the appropriate
child. Since the tree is balanced, this path has length \(O(\log n)\). Hence the problem is solvable using $O(\log n)$ queries.

For the lower bound, restrict to the subfamily in which the support set, the
balanced binary search tree, and all child pointers are fixed and public. Let $v_1,\ldots,v_n$
be the support vertices in sorted order. The only variable part of the input is
the comparison string $x_i= c(v_i), i\in[n],$
under the promise $0=x_0 \le x_1 \le \cdots \le x_n = 1.$
On this restricted subfamily, the task is exactly ordinary ordered search: output $i^\star = \min\{i\in[n]:x_i=1\}.$
Ordinary ordered search has bounded-error quantum query complexity
\(\Omega(\log n)\) \cite{HNS02,BDW99}. Since query complexity can only decrease
under restriction, the original hidden-support ordered-search problem also has
quantum query complexity \(\Omega(\log n)\). Combining the upper and lower bounds gives \(\Theta(\log n)\).

\end{proof}

Therefore, merely placing a structured search problem inside a larger ambient universe does
not by itself produce a new asymptotic quantum speedup. In ordered search, the comparison
structure still funnels the algorithm along a logarithmic decision tree, so the ambient
parameter is asymptotically irrelevant.

\end{document}